% VLDB template version of 2020-08-03 enhances the ACM template, version 1.7.0:
% https://www.acm.org/publications/proceedings-template
% The ACM Latex guide provides further information about the ACM template

\documentclass[sigconf,balance=false]{acmart}
\usepackage{popets}
\usepackage[toc,page,header]{appendix}
\usepackage{graphicx}
\usepackage{mathdots}
\usepackage{amsmath,amsfonts}
\usepackage{algorithmic}
\usepackage{textcomp}
\usepackage{xcolor}

\usepackage{hyperref} % Load hyperref first
\usepackage{hyperxmp} % Load hyperxmp after hyperref
\usepackage{balance} % for balancing columns on the final page
\usepackage{csquotes}
\usepackage{mdframed}
\usepackage{lipsum}
\usepackage[normalsize]{subfigure}
\usepackage{wrapfig}
\usepackage{algorithmic}
\usepackage[ruled, vlined, commentsnumbered, linesnumbered]{algorithm2e}

\usepackage{lipsum}
\usepackage{amsmath}
\usepackage{tcolorbox}

\usepackage{xcolor,colortbl}

\definecolor{Gray}{gray}{0.85}
\definecolor{LightCyan}{rgb}{0.88,1,1}

\newcolumntype{a}{>{\columncolor{Gray}}c}
\newcolumntype{b}{>{\columncolor{white}}c}

\newtheorem{theorem}{\bf{Theorem}}[section]
\newtheorem{definition}{\bf{Definition}}[section]
\newtheorem{lemma}{\bf{Lemma}}[section]

\newtheorem{proposition}[theorem]{\bf{Proposition}}

\newenvironment{remark}[1][Remark]{\begin{trivlist}
\item[\hskip \labelsep {\bfseries #1}]}{\end{trivlist}}

\newcommand{\DEL}[1]{\iffalse #1 \fi}
\newcommand{\rd}{\color{black}}

\newcommand{\revisiondone}{\color{black}}
\newcommand{\bl}{\color{black}}
 % magenta

%% The following content must be adapted for the final version
% paper-specific

% issue-specific

% should be fine as it is

% leave empty if no availability url should be set

% whether page numbers should be shown or not, use 'plain' for review versions, 'empty' for camera ready

\usepackage{popets}
% Copyright
\setcopyright{popets}
\copyrightyear{YYYY}
% Issue info
\acmYear{YYYY}
\acmVolume{YYYY}
\acmNumber{X}
\acmDOI{XXXXXXX.XXXXXXX}
\acmISBN{}
\acmConference{Proceedings on Privacy Enhancing Technologies}
\settopmatter{printacmref=false,printccs=false,printfolios=true}

\begin{document}
\title{Time-Efficient Locally Relevant Geo-Location Privacy Protection
\vspace{-0.00in}
}

\author{Chenxi Qiu}
\orcid{}
\affiliation{%
  \institution{University of North Texas}
  \city{} % optional
  \state{} % optional
  \country{} % optional, but if you prefer not to include it in your CR, please leave it as \country{}
}
\email{chenxi.qiu@unt.edu}

\author{Ruiyao Liu}
\orcid{}
\affiliation{%
  \institution{University of North Texas}
  \city{} % optional
  \state{} % optional
  \country{} % optional, but if you prefer not to include it in your CR, please leave it as \country{}
}
\email{ruiyaoliu@my.unt.edu}

\author{Primal Pappachan}
\orcid{}
\affiliation{%
  \institution{Portland State University}
  \city{} % optional
  \state{} % optional
  \country{} % optional, but if you prefer not to include it in your CR, please leave it as \country{}
}
\email{primal@pdx.edu}

\author{Anna Squicciarini}
\orcid{}
\affiliation{%
  \institution{The Pennsylvania State University}
  \city{} % optional
  \state{} % optional
  \country{} % optional, but if you prefer not to include it in your CR, please leave it as \country{}
}
\email{acs20@psu.edu}

\author{Xinpeng Xie}
\orcid{}
\affiliation{%
  \institution{University of North Texas}
  \city{} % optional
  \state{} % optional
  \country{} % optional, but if you prefer not to include it in your CR, please leave it as \country{}
}
\email{xinpengxie@my.unt.edu}

%%
%% The abstract is a short summary of the work to be presented in the
%% article.
\begin{abstract}
Geo-obfuscation serves as a \emph{location privacy protection mechanism (LPPM)}, enabling mobile users to share obfuscated locations with servers, rather than their exact locations. This method can protect users' location privacy when data breaches occur on the server side since the obfuscation process is irreversible. To reduce the utility loss caused by data obfuscation, \emph{linear programming (LP)} is widely employed, which, however, might suffer from a polynomial explosion of decision variables, rendering it impractical in large-scale geo-obfuscation applications. 

In this paper, we propose a new LPPM, called \emph{\underline{L}ocally \underline{R}elevant \underline{Geo}-obfuscation (LR-Geo)}, to optimize geo-obfuscation using LP in a time-efficient manner. This is achieved by confining the geo-obfuscation calculation for each user exclusively to the \emph{locally relevant (LR)} locations to the user's actual location. Given the potential risk of LR locations disclosing a user's actual whereabouts, we enable users to compute the LP coefficients locally and upload them only to the server, rather than the LR locations. The server then solves the LP problem based on the received coefficients. Furthermore, we refine the LP framework by incorporating an exponential obfuscation mechanism to guarantee the indistinguishability of obfuscation distribution across multiple users. Based on the constraint structure of the LP formulation, we apply Benders' decomposition to further enhance computational efficiency. Our theoretical analysis confirms that, despite the geo-obfuscation being calculated independently for each user, it still meets geo-indistinguishability constraints across multiple users with high probability. Finally, the experimental results based on a real-world dataset demonstrate that LR-Geo outperforms existing geo-obfuscation methods in computational time, data utility, and privacy preservation. 
\end{abstract}

\keywords{Geo-obfuscation, location privacy, linear programming}

\maketitle

\vspace{-0.00in}
\section{Introduction}
\label{sec:introduction}
\vspace{-0.00in}

Among a variety of \emph{location privacy protection mechanisms (LPPMs)}, geo-obfuscation has become the preferred paradigm for protecting individual location privacy against server-side data breaches \cite{Yu-NDSS2017}. Geo-obfuscation allows mobile users to report obfuscated locations instead of their exact locations to servers in location-based services (LBS). As the obfuscation process is irreversible \cite{Bakken-IEEESP2004}, users' exact locations are well-protected even if the obfuscated locations are disclosed to attackers. This is achieved by satisfying certain privacy criteria, such as \emph{geo-indistinguishability (Geo-Ind)} \cite{Andres-CCS2013}, which requires that, for any two locations geographically close, the probability distribution of their obfuscated locations should be sufficiently close so that it is difficult for an attacker to distinguish the two locations based on their obfuscated representations. \looseness = -1

Although geo-obfuscation provides a strong privacy guarantee for users' locations, the location errors introduced by obfuscation can negatively impact the quality of LBS. Many recent efforts \cite{Mendes-PETS2020, Simon-EuroSP2019, Wang-WWW2017, Shokri-TOPS2017, Yu-NDSS2017, Xiao-CCS2015, Bordenabe-CCS2014, Qiu-CIKM2020, Qiu-TMC2020, Qiu-SIGSPATIAL2022, Al-Dhubhani-PETS2017, Wang-CIDM2016} aim to address the quality issue caused by geo-obfuscation using \emph{linear programming (LP)} \cite{Linear&Nonlinear}, of which the objective is to minimize the utility loss with the privacy criterion like Geo-Ind guaranteed. For the sake of computational tractability, the LP-based methods typically discretize the location field into a finite set of discrete locations. Its decision variables, represented as an \emph{obfuscation matrix}, determine the probability distributions of obfuscated locations given each possible real location. 

Due to the intricate complexity of LP, generating the obfuscation matrix directly on users' mobile devices is not feasible. Instead, the matrix is calculated by a server, which optimizes the matrix before it is downloaded by the mobile devices \cite{Wang-WWW2017}. Given that the server lacks knowledge of users' precise locations, the server typically considers every location within the target area when calculating the matrix, regardless of whether it is currently occupied by a user. After downloading the matrix, each user selects the specific row of the matrix that matches their actual location to determine the probability distribution of the obfuscated locations. Consequently, the LP formulation of the obfuscation matrix involves $K^2$ decision variables, where $K$ denotes the number of discrete locations within the target region. This results in a significant challenge in accommodating a large array of locations. For instance, the inclusion of thousands of distinct locations within a modestly sized town escalates the number of decision variables into the millions \cite{Qiu-TMC2020}. As compared in Table \ref{Tb:reference} in Section \ref{sec:related} (Related Work), most current LP-based works limit the number of discrete locations $K$ to up to 100. \looseness = -1

% Many current LP-based methods incur substantial computational costs as they encompass all the locations within the target region when determining the obfuscation distribution for each real location \cite{Wang-WWW2017, Qiu-TMC2020}. Conversely, as defined in \cite{Andres-CCS2013}, Geo-Ind only requires a user's actual location to be indistinguishable from its nearby locations. This implies that incorporating locations that are significantly distant from the user's actual location might introduce unnecessary computational overhead in LP. Motivated by this observation, in this paper, we propose a new method called \emph{\underline{L}ocally \underline{R}elevant \underline{Geo}-obfuscation (LR-Geo)}. When deriving geo-obfuscation for an individual user, LR-Geo aims to remove the locations that are relatively distant from the user, enabling the user to generate his/her own geo-obfuscation using only locally relevant locations. 

% \vspace{-0.25in}
% \subsection*{Our Work}
% \vspace{-0.02in}
\vspace{0.03in}
\noindent \textbf{Motivations}. The traditional LP-based geo-obfuscation methods (e.g., \cite{Qiu-TMC2020, Wang-WWW2017}) have a high computation overhead since the LP is formulated completely by the server side, which requires accounting for all locations within the target region. However, from an individual user's perspective, they engage only with the specific row that matches their actual location.  Although this single row cannot be generated in isolation as it is linked to some other rows by ``Geo-Ind'', such constraints are only enforced between the nearby locations \cite{Andres-CCS2013}. This indicates that, if the LP can be formulated locally by each user, they only need to consider ``locally relevant'' locations so that the computational cost can be significantly reduced. In practical terms, when a user chooses an obfuscated location, the relevance of how another user 100 kilometers away selects their obfuscated location due to the Geo-Ind constraints is minimal.

Motivated by the above observation, this paper introduces a new geo-obfuscation paradigm,  termed \emph{\underline{L}ocally \underline{R}elevant \underline{Geo}-obfuscation (LR-Geo) locations}. The core idea of LR-Geo is to allow each user to formulate the LP by themselves by focusing exclusively on their \emph{Locally Relevant (LR)}, thereby streamlining the process of generating obfuscation matrices. Nevertheless, the development of LR-Geo presents several distinct challenges:

\vspace{-0.08in}

\subsubsection*{\textbf{Challenge 1: How to determine the LR location set?}} First, it is important to note that even a location far from a user's location can have an indirect impact on the user's obfuscation distribution since the distant location can have higher relevance to other locations closer to the user by the Geo-Ind constraints. Considering such a ``multi-hop'' influence of Geo-Ind is hard to circumvent while pursuing the globally optimal solution, our approach focuses on striking a balance between optimizing the obfuscation matrix and enhancing computational efficiency, achieved by selecting an appropriate LR location set.

Specifically, we introduce a \emph{Geo-Ind graph} to describe the Geo-Ind constraints between each nearby location pair, which also enables us to quantify the ``multi-hop'' impact of Geo-Ind constraints through the path distance between nodes in the graph (see \textbf{Theorem \ref{thm:SPD}}). Using the Geo-Ind graph, we determine the LR location set for each user as the collection of locations whose path distance from the user's actual location does not surpass a predefined threshold. Following this, we formulate the LP of LR-Geo for each user to focus exclusively on their selected LR location set.

\vspace{-0.08in}
\subsubsection*{\textbf{Challenge 2: How to calculate LR-Geo?}} Despite having a relatively smaller LP size, the calculation of LR-Geo still needs to be migrated to the server since (i) the computational demands of LR-Geo remain relatively high for mobile devices, and (ii) LR-Geo's LP formulation involves assessing data utility for downstream decision-making, a task typically handled by the server rather than individual users \cite{Wang-WWW2017}. However, each user needs to keep the LR location set hidden from the server, as these locations could potentially disclose the user's actual location. As a workaround, we enable each user to locally compute the coefficients of the LP formulation with server assistance and then upload these coefficients to the server. We demonstrate that the uploaded coefficients can be used by the server to solve the LP problems but cannot be reversed to unveil the LR location of the user (by examples in Section \ref{subsec:discussionthreats} and experimental results in Fig. \ref{fig:exp:threat} in Section \ref{sec:performance}).

\vspace{-0.08in}
\subsubsection*{\textbf{Challenge 3: How to guarantee Geo-Ind across multiple users?}} 
Given that each user conceals their LR location set from the server, formulating Geo-Ind constraints across users in LP becomes another challenge for the server. To address this, we enable the server to apply \emph{exponential distribution constraints} on a selected subset of obfuscated locations for each user. We demonstrate that adhering to these constraints ensures that the chosen obfuscated locations meet Geo-Ind constraints across users even though their obfuscation is calculated in an independent manner (see \textbf{Theorem \ref{thm:privacyguarantee}}). Moreover, our experimental findings in Fig. \ref{fig:exp:GVR} indicate that while unselected obfuscated locations do not theoretically guarantee Geo-Ind, they still possess a high probability (e.g. 99.81\% on average) of meeting Geo-Ind constraints in practice. Additionally, by integrating the exponential mechanism with LP, the constraint matrix of LR-Geo follows a \emph{ladder block structure}, making the problem well-suited to \emph{Benders' decomposition}, which further improves the computation efficiency of solving LR-Geo.

\vspace{-0.08in} 
\subsubsection*{\textbf{Experimental results}} 
Lastly, in our experiment, we assessed LR-Geo's performance by simulating its application to road map data sourced from Rome, Italy \cite{roma-taxi-20140717}. The results revealed that LR-Geo efficiently generates obfuscation matrices within 100 seconds for cases involving up to 1500 locations in the target area. This marks a substantial enhancement over existing LP-based geo-obfuscation techniques (as listed in Table \ref{Tb:reference}), which can only handle up to 100 locations. Furthermore, our experimental results show that LR-Geo's obfuscation matrix not only adheres closely to the theoretical lower bound of expected cost, as established in \textbf{Theorem \ref{thm:lowerbounds} and Theorem \ref{thm:upperbounds}}, but also outperforms contemporary benchmarks \cite{Qiu-TMC2020, Andres-CCS2013, ImolaUAI2022} in terms of time efficiency and cost-effectiveness.

\vspace{-0.08in} 
\subsubsection*{\textbf{Contributions}} In summary, the contributions of this paper are summarized as follows: 
\newline 1. We introduce LR-Geo, a new geo-obfuscation approach that significantly reduces the computational overhead of geo-obfuscation while maintaining a high level of optimality.
\newline 2. We develop a remote computing framework that allows for the offloading of LR-Geo computations to a server while preserving the privacy of each user's LR location set.
\newline 3. To achieve Geo-Ind across multiple users' obfuscation matrices, we integrate exponential distribution constraints within the LP computational framework. Given LR-Geo's constraint structure, we apply Benders' decomposition to enhance computational time efficiency.
\newline 4. Our experimental results demonstrate that LR-Geo not only approximates optimal solutions with considerably lower computational costs but also outperforms existing state-of-the-art methods in time efficiency and cost-effectiveness.

The rest of the paper is organized as follows: The next section provides the preliminaries of geo-obfuscation. Section \ref{sec:motivation} describes the motivation and Section \ref{sec:algorithm} designs the algorithm. Section \ref{sec:performance} evaluates the algorithm's performance. Section \ref{sec:related} presents the related work and  Section \ref{sec:conclude} makes a conclusion.  

\vspace{-0.00in}

\vspace{-0.08in}
\section{Preliminary}
\label{sec:preliminary}
\vspace{-0.00in}

In this section, we introduce the preliminary knowledge of geo-obfuscation, including its framework in LBS in \textbf{Section \ref{subsec:LBS}}, its privacy criteria Geo-Ind in \textbf{Section \ref{subsec:geoind}}, and the LP formulation in \textbf{Section \ref{subsec:LP}}. The main notations used throughout this paper can be found in Table \ref{Tb:Notationmodel} in \textbf{Section \ref{sec:notations} in Appendix}. 

\vspace{-0.05in}
\subsection{Geo-Obfuscation in LBS}
\label{subsec:LBS}
\vspace{-0.00in}
We consider an LBS system composed of a \emph{server} and \emph{a set of users}, where users need to report their locations to the server to receive the desired services. Like \cite{Shokri-CCS2012, Andres-CCS2013, Bordenabe-CCS2014, Wang-WWW2017}, we assume that the server is not malicious, but it might suffer from a \emph{passive attack where attackers can eavesdrop on the users' reported locations breached by the server}. In this case, users can hide their exact locations from the server using geo-obfuscation mechanisms \cite{Wang-WWW2017}.

In general, a geo-obfuscation mechanism can be represented as a \emph{probabilistic function}, of which the input and the output are the user's real location and obfuscated location, respectively. For the sake of computational tractability, many existing works like \cite{Wang-WWW2017, Bordenabe-CCS2014, Qiu-TMC2020, Pappachan-EDBT2023, Qiu-EDBT2024} approximate the users' location field to a discrete location set $\mathcal{V} = \{v_1, ..., v_K\}$. In this case, the obfuscation function can be represented as a \emph{stochastic obfuscation matrix} $\mathbf{Z} = \{z_{i,k}\}_{K\times K}$, where each $z_{i,k}$ denotes the probability of taking $v_k$ as the obfuscated location given the actual location $v_i$.

Besides hiding the users' actual location, the obfuscation matrix $\mathbf{Z}$ is designed to minimize the \emph{utility loss} (or \emph{cost}) caused by geo-obfuscation. As an example, in this paper, we focus on a category of LBS where a mobile user needs to physically travel to a specified destination to receive service (e.g., hotel/restaurant recommendations \cite{yelp}) or implement a task (e.g., spatial crowdsourcing \cite{waze, openstreetmap}). Typically, these LBS types strive to minimize travel expenses for users. Accordingly, \emph{we define the cost resulting from geo-obfuscation as the distortion between the estimated travel distances (using obfuscated locations) and the actual travel distances incurred by users}. Note that our approach in this paper can be readily adapted in other LBS applications as long as the explicit relationship between cost and location obfuscation can be established.

To calculate the traveling costs, global LBS information such as traffic conditions and destination distribution is needed. Since global information is hard to maintain by individuals, many existing works like \cite{Wang-WWW2017, Shokri-TOPS2017, Yu-NDSS2017, Qiu-CIKM2020, Qiu-TMC2020} let the server manage the computation of the obfuscation matrix. Specifically, before reporting the location to the server, each privacy-aware user downloads the obfuscation matrix $\mathbf{Z}$ from the server. Given the current location $v_i$, the user finds the corresponding row $\mathbf{z}_i = [z_{i,1}, ..., z_{i,K}]$ in the obfuscation matrix, based on which the user then randomly selects an obfuscated location to report. In what follows, we call $\mathbf{z}_i$ the \emph{obfuscation vector} of the location $v_i$.

% Fig. \ref{fig:framework}(b) gives an example of an obfuscation matrix, where the users'  location field is $\{v_{1}, v_{2}, v_{3}, v_{4}\}$. In this case, the obfuscation function can be represented as a $(4\times 4)$-matrix. Suppose that a user's actual location is $v_{3}$, as indicated by the matrix in Fig. \ref{fig:framework}(b), the probabilities that this user selects $v_{1}$, $v_{2}$, $v_{3}$, and $v_{4}$ as the obfuscated location are 0.3, 0.2, 0.1, and 0.4, respectively. \looseness = -1

\DEL{
\begin{figure}[t]
\centering
\begin{minipage}{0.45\textwidth}
\centering
  \subfigure{
\includegraphics[width=1.00\textwidth]{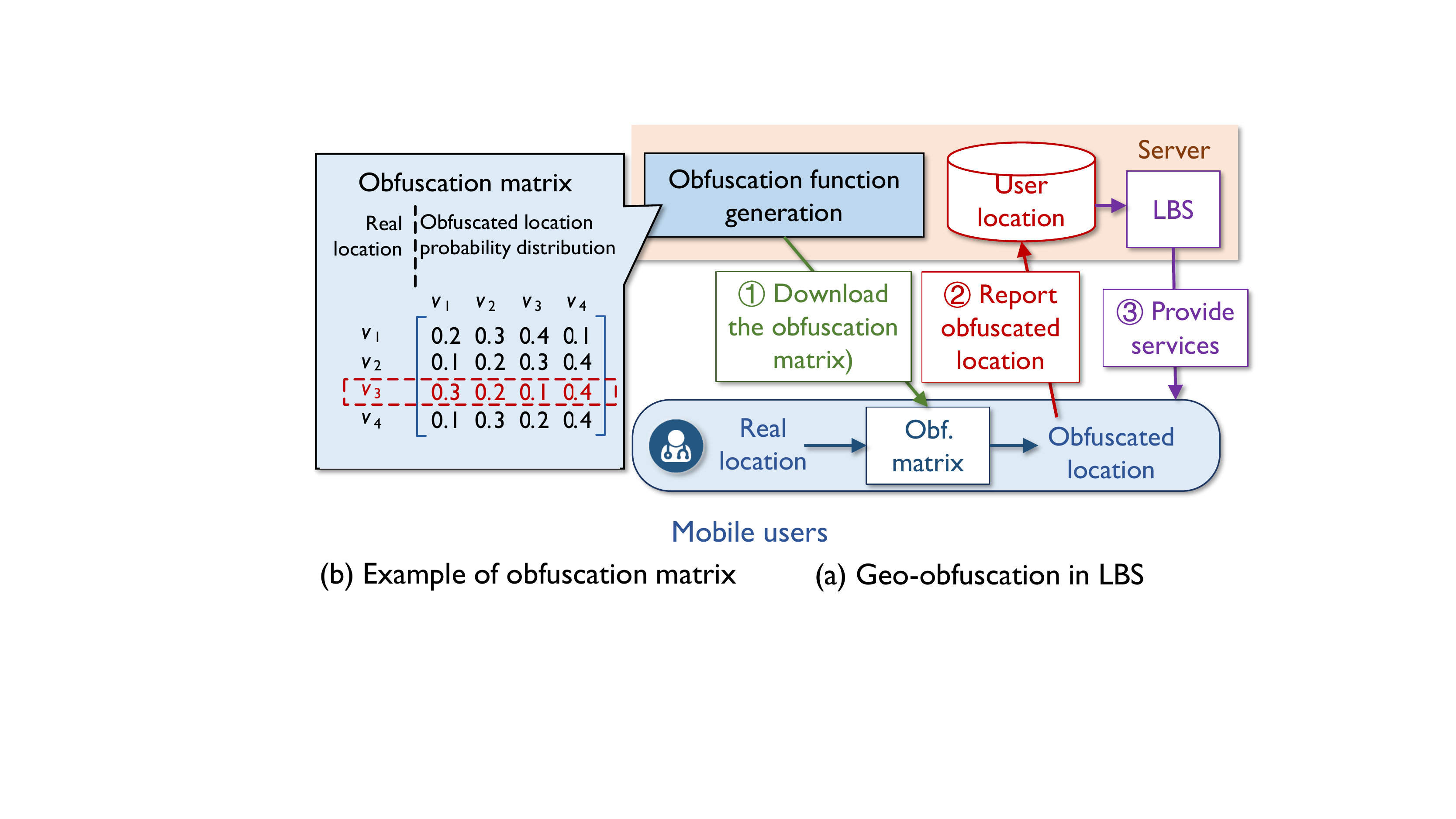}}
\label{}
\end{minipage}
\vspace{-0.00in}
\caption{\small The framework of CPR user assignment.}
\label{fig:framework}
\vspace{-0.00in}
\end{figure}}

\DEL{
\begin{figure}[t]
\centering
\begin{minipage}{0.48\textwidth}
\centering
  \subfigure{
\includegraphics[width=1.00\textwidth]{./fig/framework}}
\label{}
\end{minipage}
\vspace{-0.20in}
\caption{The framework of geo-obfuscation in LBS. 
\newline \small Process in (a): \textcircled{1} Mobile users first download the obfuscation function (matrix) from the server. \textcircled{2} Users then report their locations to the server. \textcircled{3} Based on the users' reported locations, the server provides services to the participating users.
}
\label{fig:framework}
\vspace{-0.00in}
\end{figure}}

%\vspace{0.03in}
%\noindent \textbf{Obfuscation matrix}. 

\vspace{-0.08in}
\subsection{Geo-Indistinguishability}
\label{subsec:geoind}

Although the server takes charge of generating the obfuscation matrix, the users' exact locations are still hidden from the server since the obfuscated locations are selected in a probabilistic manner \cite{Wang-WWW2017}. In particular, the obfuscation matrix  $\mathbf{Z}$ is designed to satisfy the privacy criterion \emph{Geo-Ind}, indicating that even if an attacker has obtained the users' reported (obfuscated) location and $\mathbf{Z}$ from the server, it is still hard for the attacker to distinguish the users' exact locations from the nearby locations. 

We use $d_{v_i, v_j}$ to denote the \emph{Haversine distance} (the angular distance on the surface of a sphere) between $v_i$ and $v_j$. Given a threshold $\gamma > 0$, we call two locations $v_i$ and $v_j$ ``neighboring locations'' if their distance $d_{v_i, v_j} \leq \gamma$. \emph{Geo-Ind} is formally defined in \emph{Def. \ref{def:GeoI}}  \cite{Andres-CCS2013}: \looseness = -1
\vspace{-0.00in}
\begin{definition} 
\label{def:GeoI}
(Geo-Ind) An obfuscation matrix $\mathbf{Z}$ satisfies $(\epsilon, \gamma)$-Geo-Ind if, for each pair of neighboring locations $v_i, v_j \in \mathcal{V}$ with $d_{v_i, v_j} \leq \gamma$, the following constraints are satisfied 
\vspace{-0.00in}
\begin{equation}
\label{eq:Geo-Ind-general}
z_{i,k}  - e^{{\epsilon d_{v_i, v_j}}}  z_{j,k} \leq 0, ~ \forall v_k \in \mathcal{V},
\vspace{-0.00in}
\end{equation}
i.e., the probability distributions of the obfuscated locations of $v_i$ and $v_j$ are sufficiently close. Here, $\epsilon$ is called the privacy budget. Higher $\epsilon$ implies a lower privacy level. 
\end{definition}
\vspace{-0.00in}
In what follows, we use $\mathcal{E} = \left\{(v_i, v_j)\in \mathcal{V}^2 | d_{v_i, v_j} \leq \gamma \right\}$ to denote the set of neighboring locations in $\mathcal{V}$. 

{\revisiondone Note that the existing works like \cite{Bordenabe-CCS2014, Wang-WWW2017, Qiu-TMC2020} do not require Geo-Ind to be satisfied only between locations that are within a distance smaller than $\gamma$. Here, we consider $\gamma$ as it forms a more general model, i.e., $\gamma = \infty$ when $\gamma$ is not included in \emph{Definition \ref{def:GeoI}}. In practical, the choice of $\gamma$ depends on the user's privacy requirements, defining the range within which the user's location should be indistinguishable. For example, if a student wants to obscure their location within a campus, selecting $\gamma$ to cover the entire campus would be sufficient. }

% Geo-Ind is essentially to enforce each pair of neighbors in $\mathcal{G}$ to be $\epsilon$-\emph{geo-indistinguishable}. 

% We let $\mathcal{N}_{m} = \left\{v_j\in\mathcal{V}\left|d_{v_i, v_j} \leq r \right.\right\}$ denote the set of $v_i$'s neighbors. 

% \noindent \textbf{More notations}. We call $v_j$ a neighbor of $v_i$ if $d_{v_i, v_j} \leq r$. We let $\mathcal{N}_{m}$ (or $\mathcal{N}^{(1)}_{v_i}$) denote the neighbor set of $v_i$
%\begin{equation}
%\mathcal{N}_{m} = \left\{v_j \in \mathcal{V} \left|d_{v_i, v_j} \leq r\right.\right\}. 
%\end{equation} 
%We call $v_j$ a 2nd-order neighbor if $v_i$, denoted by $v_j \in \mathcal{N}^{(2)}_{v_i}$, if 
%\begin{itemize}
%\item $v_j \notin \mathcal{N}_{m}$, i.e., $v_j$ is not a neighbor of $v_i$
%\item $\exists v_l \in \mathcal{N}_{m}$ such that $v_j \in \mathcal{N}_{v_l}$, i.e., $v_j$ is a neighbor of $v_i$'s neighbor. 
%\end{itemize}
\DEL{In general, we call $v_j$ a $k$th-order neighbor of $v_i$ (denoted by $v_j \in \mathcal{N}^{(k)}_{v_i}$), if the shortest-path distance (i.e., the number of edges in the shortest path) between $v_j$ and $v_i$ in the graph $\mathcal{G}$ is $k$. We let $\mathcal{V}^{(k)}_{v_i}  = \bigcup_{l=1}^{(k)}\mathcal{N}^{(l)}_{v_i}$ include all $v_i$'s 1st to $k$th neighbors.}

\vspace{-0.00in}
\subsection{LP Problem Formulation}
\label{subsec:LP}
\vspace{-0.00in}
% \subsubsection{LP formulation} 
\noindent \textbf{Constraints}. 
In addition to satisfying Geo-Ind in Equ. (\ref{eq:Geo-Ind-general}), for every real location $v_i$, the total probability of its obfuscated locations should be equal to 1: 
\vspace{0.00in}
\begin{equation}
\label{eq:unitmeasure}
\textstyle    
\sum_{k=1}^K z_{i,k} = 1,~\forall v_i \in \mathcal{V} ~\mbox{(probability unit measure)}.
\vspace{-0.00in}
\end{equation}

\noindent \textbf{Objective function}. Given the target location $v_l$, the real location $v_i$, and the obfuscated location $v_k$, we define the \emph{cost} of LBS as the discrepancy between the estimated travel cost $\mathrm{tc}_{v_i, v_l}$ and the actual travel cost $\mathrm{tc}_{v_k, v_l}$ to reach $v_l$
\vspace{-0.00in}
\begin{equation}
\label{eq:deltad}
\delta_{v_i,v_k,v_l} = \left|\mathrm{tc}_{v_i, v_l} - \mathrm{tc}_{v_k, v_l}\right|. 
\vspace{-0.00in}
\end{equation}
We assume that the server has the prior distribution of the target locations $\mathbf{q} = [q_1, ..., q_K]$, where $q_l$ ($l =1, ..., K$) denotes the probability that a target's location is at $v_l$. The objective is to minimize the \emph{expected cost} caused by the obfuscation matrix $\mathbf{Z}$:  
\begin{eqnarray}
\label{eq:overallcost}
\textstyle   % \small 
\mathcal{L}\left(\mathbf{Z}\right) = \sum_{i=1}^K p_i\sum_{k=1}^K \sum_{l=1}^K q_l\delta_{v_i,v_k,v_l} z_{i,k} = \sum_{i=1}^K\mathbf{c}_i\mathbf{z}^{\top}_i, 
\end{eqnarray}
where $p_k$ ($k =1, ..., K$) denotes the prior probability that a user's real location is at $v_k$, $\mathbf{c}_i = \left[c_{v_i,v_1}, ..., c_{v_i,v_K}\right]$ denote the cost (cost) coefficients of $\mathbf{z}_i$ in the objective function, and each $c_{v_i, v_k}$ is given by
\vspace{-0.00in}
\begin{equation}
\label{eq:c}
\textstyle 
c_{v_i, v_k} = p_i \sum_{l=1}^K q_l \delta_{v_i,v_k,v_l} ~ (i = 1, ..., K).
\vspace{-0.00in}
\end{equation}
% In some related works like \cite{Shokri-CCS2012, Andres-CCS2013}, the utility loss is defined as the distance between original and obfuscated locations. This metric cannot fully capture utility loss in some applications like vehicle spatial crowdsourcing \cite{Qiu-TMC2020} due to the restriction of vehicles' mobility. Therefore, in this paper, we define utility loss based on the downstream decision data processing. Specifically, in the experiment part, we target spatial crowdsourcing where the utility loss is calculated by expected estimation error of traveling cost caused by $\mathbf{Z}$. 
{\revisiondone In related works such as \cite{Shokri-CCS2012, Andres-CCS2013}, utility loss is typically defined as the distance between original and obfuscated locations. However, this metric does not fully capture the utility loss in many vehicle-based applications, as they fail to consider the constraints imposed by vehicle mobility \cite{Qiu-TMC2020}. Therefore, in this paper, we define utility loss based on downstream decision-making in data processing. Specifically, in our experiments, we focus on spatial crowdsourcing, where utility loss of $\mathbf{Z}$ is measured by the expected estimation error in travel cost caused by $\mathbf{Z}$.}

\noindent \textbf{Problem formulation}. To satisfy the constraints of Geo-Ind (Equ. (\ref{eq:Geo-Ind-general})) and the probability unit measure (Equ. (\ref{eq:unitmeasure})),  and minimize $\mathcal{L}\left(\mathbf{Z}\right)$ (Equ. (\ref{eq:overallcost})), the problem of \emph{LR obfuscation matrix generation (OMG)} can be formulated as the following LP problem:
\vspace{-0.0in}
\begin{eqnarray}
\label{eq:OMGobj}
\min && \textstyle \mathcal{L}\left(\mathbf{Z}\right) = \sum_{i=1}^K\mathbf{c}_i\mathbf{z}^{\top}_i \\ \label{eq:OMGconstr}
\mbox{s.t. } && \mbox{ Equ. (\ref{eq:Geo-Ind-general})(\ref{eq:unitmeasure})  are  satisfied}. 
\vspace{-0.00in}
\end{eqnarray}
% As a solution, in Section \ref{sec:algorithm}, we present xxx. 

\vspace{-0.00in}
\section{Motivations and Observations}
\label{sec:motivation}
Although the OMG problem outlined in Equ. (\ref{eq:OMGobj})(\ref{eq:OMGconstr}) can be solved using classical LP algorithms such as the simplex method \cite{Linear&Nonlinear}, it is hampered by high computational costs. The time complexity of an LP problem depends on the number of decision variables and the number of linear constraints \cite{Linear&Nonlinear}. In OMG, the decision matrix $\mathbf{Z}$ encompasses $O(|\mathcal{V}|^2)$ decision variables, where it must adhere to the Geo-Ind constraints for every pair of neighboring locations in $\mathcal{V}$, resulting in $O(|\mathcal{E}||\mathcal{V}|)$ linear constraints. This substantial computational demand renders the LP-based geo-obfuscation impractical for scenarios with a large number of locations. % For instance, a small town with thousands of locations leads to millions of decision variables and potentially billions of linear constraints \cite{Qiu-CIKM2020}. 
% Given the time-sensitive nature of LBS applications, achieving an optimal $\mathbf{Z}$ swiftly is important. 
Therefore, \textbf{enhancing the computational efficiency of solving LP-based geo-obfuscation is the primary goal of this paper}.

\begin{figure}[t]
\centering
\hspace{-0.0in}
\begin{minipage}{0.46\textwidth}
\centering
  \subfigure{
\includegraphics[width=1.00\textwidth]{./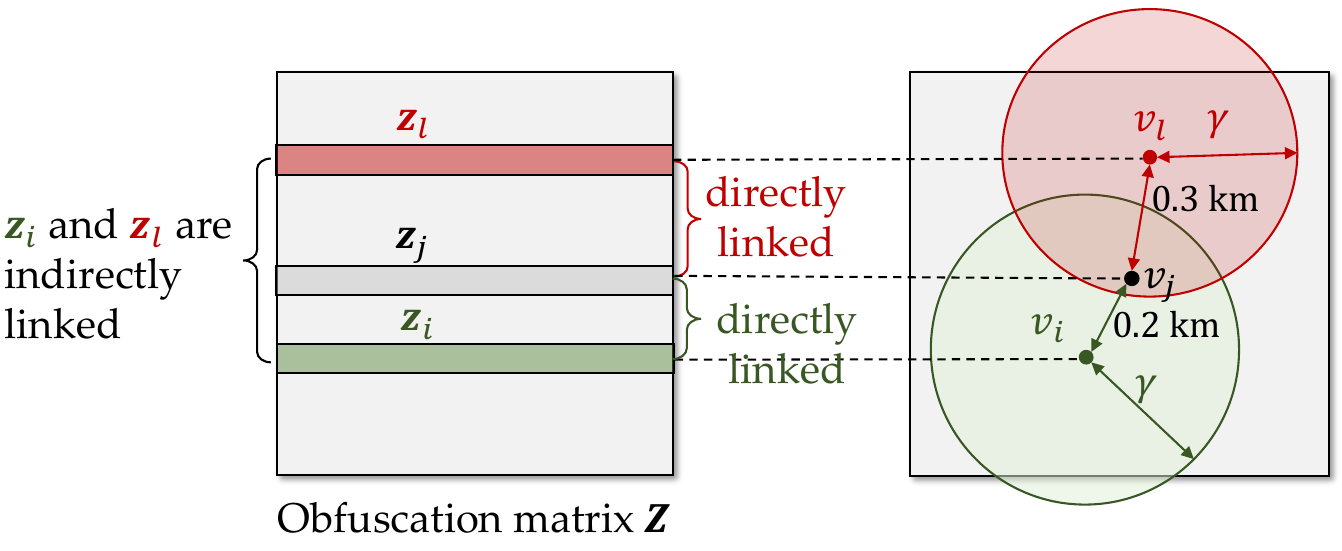}}
\end{minipage}
\vspace{-0.15in}
\caption{Directly/indirectly linked obfuscation vectors. }
\label{fig:linkedvariable}
\vspace{-0.15in}
\end{figure}

\vspace{0.03in}
\noindent \textbf{Observations}. When a user at location $v_i$ seeks to obfuscate their actual location, % after downloading the obfuscation matrix $\mathbf{Z}$ from the server, 
they use only the $i$th row $\mathbf{z}_i = [z_{i,1}, ..., z_{i,K}]$ instead of the entire matrix $\mathbf{Z}$. {\rd However, determining $\mathbf{z}_i$ in isolation is not feasible within OMG as $\mathbf{z}_i$ is linked to other rows (obfuscation vectors) of $\mathbf{Z}$ by the Geo-Ind constraints.} 

As depicted in Fig. \ref{fig:linkedvariable}, Geo-Ind requires that obfuscation vector $\mathbf{z}_i$ is directly connected to another vector $\mathbf{z}_j$ if their corresponding locations $v_i$ and $v_j$ are neighbors. Additionally, a location $v_l$, even if distant from $v_i$, indirectly influences $\mathbf{z}_i$ through its neighborly relation with $v_j$. Considering that it is hard to circumvent such ``multi-hop'' influence of Geo-Ind between locations while pursuing the globally optimal solution, we aim to balance the optimality of geo-obfuscation and computational efficiency by focusing on a selectively identified set of locations that exert a significant influence on $v_i$. This approach is based on the intuition that locations nearer to $v_i$ have obfuscation vectors $\mathbf{z}_l$ with a more pronounced effect on $\mathbf{z}_i$. The pivotal question then becomes how to quantify the extent of influence between $\mathbf{z}_i$ and other vectors, such as $\mathbf{z}_l$.

To this end, we introduce the concept of the \emph{Geo-Ind graph} in \textbf{Def. \ref{def:GeoIGraph}}. We then detail the application of this graph to measure the Geo-Ind connection between $\mathbf{z}_i$ and $\mathbf{z}_l$ in \textbf{Theorem \ref{thm:SPD}}.
\begin{definition}
\label{def:GeoIGraph}
(\textbf{Geo-Ind Graph}) \textit{Geo-Ind graph} is defined as an undirected graph $\mathcal{G} = (\mathcal{V}, \mathcal{E})$ to describe the Geo-Ind constraints between locations within a set $\mathcal{V}$. Here, $\mathcal{V}$ represents the set of nodes, each corresponding to a distinct location, and $\mathcal{E}$ denotes the set of edges. Each edge $(v_i, v_j) \in \mathcal{E}$ indicates that the locations $v_i$ and $v_j$ are neighbors (i.e. $d_{v_i, v_j}\leq \gamma$) with the edge weight equal to the distance $d_{v_i, v_j}$ between them.
\end{definition}

\begin{theorem}
\label{thm:SPD}
Consider two locations $v_i$ and $v_j$ are connected through at least one path in the Geo-Ind graph $\mathcal{G}$. Let the \textit{path distance} $D_{v_i,v_j}$ represent the sum of weights of the edges forming the shortest path between $v_i$ and $v_j$. Their probabilities of selecting location $v_k$ as the obfuscated location is constrained by: 
\begin{equation}
\label{eq:Geo-Ind-path}
z_{i,k} \leq e^{\epsilon D_{v_i,v_j}} z_{j,k}.
\end{equation}
\end{theorem}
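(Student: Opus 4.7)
The plan is to prove Theorem \ref{thm:SPD} by a straightforward induction along the shortest path between $v_i$ and $v_j$ in the Geo-Ind graph, chaining together the pairwise Geo-Ind inequalities from Equ.~(\ref{eq:Geo-Ind-general}). Since every edge $(v_a, v_b) \in \mathcal{E}$ satisfies $d_{v_a, v_b} \leq \gamma$ by construction of $\mathcal{G}$, the Geo-Ind constraint applies directly to each consecutive pair of nodes along any path.

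First I would let $v_i = v_{i_0}, v_{i_1}, \ldots, v_{i_m} = v_j$ denote the shortest path between $v_i$ and $v_j$ in $\mathcal{G}$, so that by definition $D_{v_i, v_j} = \sum_{l=0}^{m-1} d_{v_{i_l}, v_{i_{l+1}}}$. Because each consecutive pair $(v_{i_l}, v_{i_{l+1}})$ is an edge of $\mathcal{G}$, Definition \ref{def:GeoI} (Equ.~(\ref{eq:Geo-Ind-general})) yields
\begin{equation*}
z_{i_l, k} \leq e^{\epsilon\, d_{v_{i_l}, v_{i_{l+1}}}} z_{i_{l+1}, k}, \quad l = 0, 1, \ldots, m-1.
\end{equation*}

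Next I would chain these inequalities telescopically. Starting from $z_{i,k} = z_{i_0, k}$ and applying the above bound $m$ times gives
\begin{equation*}
z_{i,k} \leq e^{\epsilon\, d_{v_{i_0}, v_{i_1}}} z_{i_1, k} \leq \cdots \leq \exp\!\left(\epsilon \sum_{l=0}^{m-1} d_{v_{i_l}, v_{i_{l+1}}}\right) z_{j, k} = e^{\epsilon D_{v_i, v_j}} z_{j, k},
\end{equation*}
which is exactly Equ.~(\ref{eq:Geo-Ind-path}). A short formal induction on $m$ (the number of edges on the shortest path) can be used to make this chain rigorous, with the base case $m = 1$ being the single-edge Geo-Ind constraint and the inductive step combining the path of length $m-1$ with the final edge.

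There is no real obstacle in this proof; the only subtle point worth remarking on is that $D_{v_i, v_j}$ must correspond to the \emph{shortest} path, since any path from $v_i$ to $v_j$ provides an upper bound of the same form and taking the minimum (over all such paths) yields the tightest bound. Also, the use of the shortest path is essential for the bound to be well defined and minimal, and I would note that non-negativity of the edge weights $d_{v_a, v_b}$ ensures the exponential factors compound in the natural direction. No additional assumptions beyond connectivity of $v_i$ and $v_j$ in $\mathcal{G}$ are required, matching the hypothesis of the theorem.
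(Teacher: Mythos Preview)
Your proposal is correct and essentially identical to the paper's own proof: both take the shortest path $v_i = v_{i_0}, v_{i_1}, \ldots, v_{i_m} = v_j$ in $\mathcal{G}$, apply the pairwise Geo-Ind constraint (Equ.~(\ref{eq:Geo-Ind-general})) to each consecutive edge, and then multiply (or telescope) the resulting inequalities to obtain $z_{i,k} \leq e^{\epsilon D_{v_i,v_j}} z_{j,k}$. The only cosmetic difference is that the paper writes the chaining as a product of ratios $\frac{z_{i,k}}{z_{j,k}} = \frac{z_{i,k}}{z_{l_1,k}} \prod_m \frac{z_{l_m,k}}{z_{l_{m+1},k}} \cdot \frac{z_{l_n,k}}{z_{j,k}}$, whereas you phrase it as a telescoping chain of inequalities with an optional induction; the content is the same.
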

\begin{proof}
Detailed proof can be found in Section \ref{sec:proofs} in Appendix. 
\end{proof}
\vspace{-0.00in}

\textbf{Theorem \ref{thm:SPD}} implies the extent to which a pair of obfuscation vectors, $\mathbf{z}_i$ and $\mathbf{z}_j$, are linked through Geo-Ind constraints depends on the path distance $D_{v_i,v_j}$ between their respective locations $v_i$ and $v_j$ in the Geo-Ind graph $\mathcal{G}$. A higher path distance between locations implies a weaker linear constraint between their obfuscation vectors.

In Fig. \ref{fig:linkeddec}, we follow the example of Fig. \ref{fig:linkedvariable}, and check specifically how the obfuscation vector $\mathbf{z}_i$ is impacted by the decision vectors $\mathbf{z}_j$ and $\mathbf{z}_l$ according to the conclusion of Theorem \ref{thm:SPD}. As Fig. \ref{fig:linkeddec} shows, given the path distances $D_{v_i, v_j} = 0.2$km and $D_{v_i, v_l} = 0.5$km, and the privacy budget $\epsilon = 10.0$km$^{-1}$, each entry of $\mathbf{z}_i$ follows the following linear constraints according to Theorem \ref{thm:SPD}: 
\vspace{-0.00in}
\begin{eqnarray}
z_{i,k} \leq e^{\epsilon D_{v_i, v_j}} z_{j,k} \Rightarrow z_{i,k} \leq e^{2} z_{j,k} \mbox{ (stronger constraints)} \\
z_{i,k} \leq e^{\epsilon D_{v_i, v_l}} z_{l,k} \Rightarrow z_{i,k} \leq e^{5}z_{l,k} \mbox{ (weaker constraints)} 
\end{eqnarray}
% \vspace{-0.0in}
indicating that \textbf{$z_{l,k}$ enforces a weaker constraint on $z_{i,k}$ compared to $z_{j,k}$}. Given that $z_{i,k}$ represents a probability measure and therefore cannot exceed 1, the condition $z_{i,k} \leq e^{5} z_{l,k}$ becomes irrelevant for instances where $z_{l,k}$ is just marginally greater than 0 (when $z_{l,k} \geq 0.0067$). This is because the upper limit of $z_{i,k} \leq 1$ naturally satisfies the condition $z_{i,k} \leq e^{5} z_{l,k}$ under these circumstances. Conversely, the constraint $z_{i,k} \leq e^{2} z_{j,k}$ is more stringent and remains applicable unless $z_{j,k} \geq 0.1353$. As $z_{j,k}$ increases beyond 0.1353, the condition $z_{i,k} \leq 1$ is adequate to fulfill the constraint of $z_{i,k} \leq e^{2} z_{j,k}$.

% {\revision Give a figure here to demonstrate that only strongly linked decision vectors need to be included. }
% Clearly, the example in Fig. \ref{fig:linkeddec} implies that the influence of $\mathbf{z}_l$ on  $\mathbf{z}_i$ diminishes as the path distance between $v_l$ and $v_i$ in the Geo-Ind graph increases. 

Overall, \textbf{the insight obtained from Theorem \ref{thm:SPD} and the example in Fig. \ref{fig:linkeddec} lead us to focus on a set of ``locally relevant locations'' that are within a specified path distance threshold from $v_i$ in the Geo-Ind graph.} By focusing the LP problem on this narrowed-down LR location set, we can substantially decrease the computational demands associated with solving the LP problem.

\begin{figure}[t]
\centering
\hspace{-0.0in}
\begin{minipage}{0.49\textwidth}
\centering
  \subfigure{
\includegraphics[width=1.00\textwidth]{./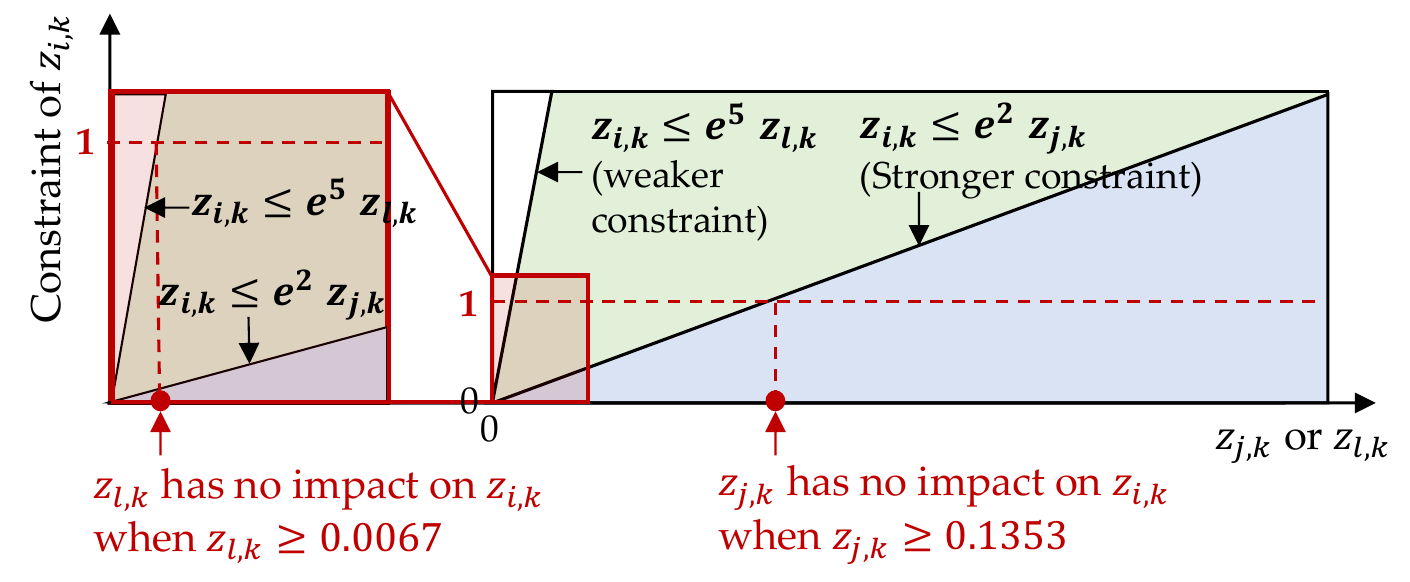}}
\end{minipage}
\vspace{-0.15in}
\caption{Strongly \& weakly linked decision vectors. }
\label{fig:linkeddec}
\vspace{-0.15in}
\end{figure}

In the next section, we introduce the details of our method. 

\vspace{-0.00in}
\section{Methodology}
\vspace{-0.00in}
% In this section, we will introduce a time-efficient approach to obfuscate a user's location.
\label{sec:algorithm}
In this section, we present \emph{\underline{L}ocally \underline{R}elevant \underline{Geo}-Obfuscation (LR-Geo)}, detailing its main concepts and problem formulation in  \textbf{Section \ref{subsec:idea}}, the computational framework in \textbf{Section \ref{subsec:computeframework}--\ref{subsec:costreference}}, the theoretical performance analysis in \textbf{Section \ref{subsec:perfanalysis}}, and a discussion of potential threats in \textbf{Section \ref{subsec:discussionthreats}}.

\vspace{-0.00in}
\subsection{Locally Relevant Geo-Obfuscation}
\vspace{-0.00in}
\label{subsec:idea}

We consider a scenario where $M$ users $\{1, ..., M\}$ need to obfuscate their locations. Without loss of generality, we assume each user $m$ is located at $v_m$ ($m = 1, ..., M$). Therefore, each user $m$ needs to use the $m$th row of the obfuscation matrix $\mathbf{Z}$, denoted by $\mathbf{z}_m = [z_{m,1}, ..., z_{m,K}]$, to determine the probability distribution of $v_m$'s obfuscated locations. 

Inspired by the insights discussed in Section \ref{sec:motivation}, the underlying concept of \emph{LR-Geo} is to generate an obfuscation matrix focusing solely on the ``locally relevant (LR) locations'' surrounding each user's actual location $v_m$ to reduce the computational overhead. According to \textbf{Theorem \ref{thm:SPD}}, within the Geo-Ind graph $\mathcal{G}$, locations with shorter path distances to $v_m$ exhibit stronger connections of their obfuscation vectors to $\mathbf{z}_m$ through Geo-Ind constraints. Therefore, we identify the LR location set based on their path distance to $v_m$ in the Geo-Ind graph, as described in \textbf{Definition \ref{def:LRlocations}}: 
\begin{definition}
\label{def:LRlocations}
(LR location set) The LR location set of $v_m$, denoted by $\mathcal{N}_{m}$, is defined as the set of locations with their path distances to $v_m$ no greater than a predetermined threshold $\Gamma$: 
\begin{equation}
\label{eq:Gamma}
\mathcal{N}_{m} = \left\{v_j \in \mathcal{V}\left|D_{m, j} \leq \Gamma\right.\right\}, 
\vspace{-0.00in}
\end{equation}
where the constant $\Gamma$ is called the LR distance threshold. % We let $N$ denote the number of locations in $\mathcal{N}_{m}$. 
\end{definition}

% The user calculates part of the cost coefficients and the server handles the remaining part. The user doesn't disclose any location (including its coordinate) in $\mathcal{N}_{m}$. 

% \subsubsection{Cost coefficient calculation} The cost coefficients in the objective function require not only the prior distribution of the user but also the traffic conditions/road typologies and the task distribution information, which is hard to calculate by the user only. 

%xxx lower neighbors are more strongly linked xxx [theorems or experiments to demonstrate].  As such, when calculating the obfuscation matrix for $v_i$, we only consider $v_i$'s lower-order neighbors, denoted by $\mathcal{V}^{(m)}_{v_i} =  \bigcup_{n=1}^{(m)}\mathcal{N}^{(n)}_{v_i}$, where $m$ is a given parameter. $\mathcal{V}^{(m)}_{v_i}$ is monotonically increasing with $m$; when $m$ is sufficiently large, $\mathcal{V}^{(m)}_{v_i} = \mathcal{V}$. 

% If $m$ is higher, we will get more accurate $\mathbf{z}_i$, but more locations (decision variables) are to be included in LP; If $m$ is lower, we will have less accurate $\mathbf{z}_i$, but fewer locations (decision variables) are to be included in LP. 

% A guess but will prove it in theory or experiment: given that $v_j$ is a $k$th-order neighbor of $v_i$ ($k > 1$), if $k$ is larger, then $\mathbf{z}_j$ is more weakly linked to $v_i$. 
%\vspace{0.03in}
%\noindent \textbf{Research questions}. 
\DEL{
Clearly, a lower $\Gamma$ causes fewer decision vectors to derive in OMG, which mitigates the computation cost. While on the other hand, lower $\Gamma$ might deviate the derived $\mathbf{Z}$ from the optimal values even though the removed decision vectors are weakly linked to $\mathbf{z}_i$. Hence, the first question is 
\begin{itemize}
\item [\textbf{Q1}:] How to find $\mathcal{N}_{m}$ with an appropriate $\Gamma$ to reduce the obfuscation matrix computation cost with guaranteeing its optimality at an acceptable level?
\end{itemize}

On the other hand, since $v_i$ is close to the center of its LR location set $\mathcal{N}_{m}$, when migrating the calculation of $\mathbf{Z}$ to the server, $\mathcal{N}_{m}$ should be hidden from the server as it might disclose $v_i$. Therefore, the second research question is 
\begin{itemize}
\item [\textbf{Q2}:] How to migrate the obfuscation matrix computation to the server without disclosing the LR location sets?
\end{itemize}
%\textbf{(See our solution in Section \ref{sec:distributedobf})}. 
In Section \ref{sec:algorithm}, we introduce how to address Q1 and Q2. }

% Due to xx, we will address this question by xx in Section xx. 
% Actually, the current design is a special case when $m = 1$, i.e., $\mathcal{V}^{(m)}_{v_i} = \mathcal{V}^{(1)}_{v_i} = \mathcal{N}^{(1)}_{v_i} =  \mathcal{N}_{m}$. 

\DEL{
{\bl 
\subsection{Local Geo-Indistinguishability (can be removed)}

Our goal is to only \emph{hide $v_i$ from its neighbors}, i.e., $v_i$ is geo-indistinguishable from its neighbors. It is less relevant whether other locations in the region are geo-indistinguishable or not. 

Consider that, when a user at location $v_i$ obfuscates his/her location, to satisfy Geo-Ind, the obfuscation distribution of $v_i$, $\mathbb{P}\mathrm{r}\left(Y=v_k|X=v_i\right)$, should be sufficiently close to the obfuscation distribution of any of its neighbor $v_j\in \mathcal{N}_{m}$, $\mathbb{P}\mathrm{r}\left(Y=v_k|X=v_j\right)$, i.e., 
\begin{equation}
e^{-\epsilon d_{v_i, v_j}} \leq \frac{\mathbb{P}\mathrm{r}\left(Y=v_k|X=v_i\right)}{\mathbb{P}\mathrm{r}\left(Y=v_k|X=v_j\right)} \leq e^{\epsilon d_{v_i, v_j}},
\end{equation}
so that it is hard for the attacker to distinguish $v_i$ and $v_j$ from their obfuscated (reported) locations. 

In the original definition of Geo-Ind, the Geo-Ind constraints are enforced for each pair of neighbors. However, from a single user's perspective, the user cares more about whether his/her location can be hidden in a certain range (i.e., from its neighbors); rather than whether other locations outside this range are geo-indistinguishable or not.

By referring to the original definition of Geo-Ind (Definition \ref{def:GeoI}), we formally define \emph{partial Geo-Ind} as follows: 
\begin{definition}
\label{def:partialGeoI}
An obfuscation matrix $\mathbf{Z}$ satisfies partial $(\epsilon, r, v_i)$-Geo-Ind, if $v_i$ is geo-indistinguishable from any of its neighbor $v_j \in \mathcal{N}_{m}$, i.e.,   $z_{i,k}  - e^{{\epsilon d_{v_i, v_j}}}  z_{j,k} \leq 0$  and $z_{j,k}  - e^{{\epsilon d_{v_i, v_j}}}  z_{i,k} \leq 0$.
\end{definition}

Our objective is to generate an obfuscation matrix for the user located at $v_i$, called \emph{partial obfuscation matrix}, that satisfies partial $(\epsilon, r, v_i)$-Geo-Ind and the expected estimation error of traveling cost caused by obfuscation is minimized. }}

\subsubsection{LR Location Set Searching} For each user $m$, the LR location set $\mathcal{N}_{m}$ can be found locally by the user. Specifically, given the coordinates of the locations in $\mathcal{V}$ and the neighbor threshold $\gamma$, the user first creates the \emph{Geo-Ind graph} $\mathcal{G} = (\mathcal{V}, \mathcal{E})$ by checking whether the \emph{Haversine distance} between each pair of locations is no higher than $\gamma$. The user then builds a \emph{shortest path tree} rooted at $v_i$ in $\mathcal{G}$ using Dijkstra's algorithm \cite{Algorithm}. The shortest path tree provides the path distance between $v_i$ and each $v_j \in \mathcal{V}$, $D_{i, j}$, based on which the user then determines whether $v_j$ should be included in the LR location set $\mathcal{N}_{m}$ based on Equ. (\ref{eq:Gamma}). 

% Note that the coordinates and the indices of the locations in $\mathcal{N}_{m}$ are hidden from the server. % The user then creates the \emph{LR location set} $\mathcal{N}_{m}$, of which the pseudo-code is provided in Algorithm \ref{al:locationset}. % We use $N$ to denote the number of locations in $\mathcal{N}_{m}$. \looseness = -1

\begin{figure}[t]
\centering
\vspace{-0.00in}
\begin{minipage}{0.45\textwidth}
\centering
  \subfigure{
\includegraphics[width=1.00\textwidth]{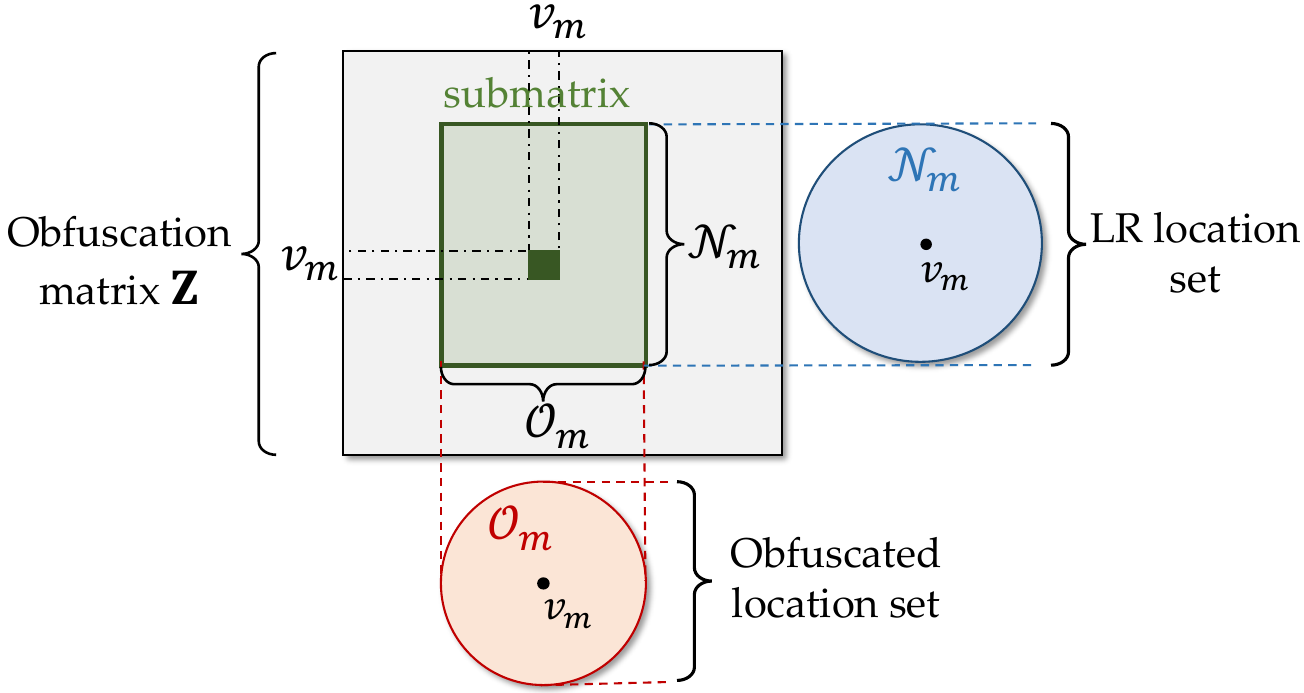}}
\end{minipage}
\vspace{-0.15in}
\caption{Generate a submatrix instead of the whole matrix. }
\label{fig:highlevelidea}
\vspace{-0.15in}
\end{figure}

\DEL{
\looseness = -1

\vspace{-0.00in}
\begin{algorithm}[h]
\SetKwFunction{push}{push}
\SetKwFunction{pop}{pop}
\SetKwFunction{top}{top}
\SetKwInOut{Input}{Input}
\SetKwInOut{Output}{Output}
% \footnotesize
% \scriptsize
% \normalsize
\small 
\Input{The discrete location set $\mathcal{V}$}
\Output{The LR location set $\mathcal{N}_{m}$}
$\mathcal{N}_{m} \leftarrow \phi$\; 
\tcp{Build the Geo-Ind graph: $\mathcal{G}= (\mathcal{V}, \mathcal{E})$}
\For{each pair $v_j, v_k \in \mathcal{V}$}{
    \If{$d_{v_j, v_k} \leq \gamma$}{
        Add $e_{j, k}$ to $\mathcal{E}$\; 
    }
}
\tcp{Create the LR location set $\mathcal{N}_{m}$}
Build the SP tree rooted at $v_i$ using Dijkstra's algorithm\; 
\For{each $v_j \in \mathcal{V}$}{
    \If{$d_{v_i, v_j} \leq \Gamma$}{
     Add $v_j$ to $\mathcal{N}_{m}$\;
    }
}
\Return $\mathcal{N}_{m}$\; 
\normalsize
\caption{LR location set creation. }
\label{al:locationset}
\end{algorithm}
\vspace{-0.00in}}

\noindent \textbf{Time complexity}. To build $\mathcal{G}$, the user needs to compare the \emph{Haversine distance} between each pair of locations with $\gamma$. This process involves a total of $O(|\mathcal{V}|^2)$ comparisons. 
The time complexity of building a shortest path tree using Dijkstra's algorithm is $O(|\mathcal{V}|^2)$. Therefore, the time complexity of LR location set identification is $O(|\mathcal{V}|^2 + |\mathcal{V}|^2) = O(|\mathcal{V}|^2)$. %\looseness = -1

\vspace{0.03in}

\DEL{
\begin{figure}[t]
\centering
\hspace{-0.0in}
\begin{minipage}{0.230\textwidth}
\centering
  \subfigure{
\includegraphics[width=1.00\textwidth]{./fig/exp/ObfPDFvsDistance}}
\end{minipage}
\begin{minipage}{0.230\textwidth}
\centering
  \subfigure{
\includegraphics[width=1.00\textwidth]{./fig/exp/ObfCDFvsDistance}}
\end{minipage}
\vspace{-0.00in}
\caption{Example of the reduced obfuscation range. }
\label{fig:motivation}
\vspace{-0.00in}
\end{figure}}

\vspace{0.00in}
\subsubsection{Obfuscation Range} In the original OMG formulation (Equ. (\ref{eq:OMGobj})(\ref{eq:OMGconstr})), the obfuscation range convers the entire location set $\mathcal{V}$, even though many of these obfuscated locations receive zero probability assignments from the LP algorithm due to their high cost. %Notably, we have observed that obfuscated locations are more likely to be assigned zero probability when they are farther away from the actual location of the user, i.e., the probability distribution of obfuscated locations is confined within a specific range relative to the user's real location.
% as Fig. \ref{fig:highlevelidea}(b) shows, if we only generate the obfuscation matrix that covers $v_i$ neighbors $\mathcal{N}_{m}$, the obfuscation range is also limited, as Fig. \ref{fig:highlevelidea}(c) shows. 
% Fig. \ref{fig:motivation} gives an example: when $\Gamma = 400$m and $\epsilon = 0.1$m$^{-1}$, the sum probability of the obfuscated location within 200m from a true location is 0.9979. \looseness = -1
To further reduce the computation cost, we limit the selection of obfuscated locations to a smaller range. Given a real location $v_i$, we consider its obfuscated location range as a circle $\mathcal{C}\left(v_i, r_{\mathrm{obf}}\right)$ centered at $v_i$ with radius $r_{\mathrm{obf}}$. % We learn the value of $r_{\mathrm{obf}}$ based on the collected historical obfuscation matrix $\tilde{\mathbf{Z}} = \left\{\tilde{z}_{i,k}\right\}_{K\times K}$ of all $v_i \in \mathcal{V}$, such that the sum probability of the obfuscated locations outside each $\mathcal{C}\left(v_i, r_{\mathrm{obf}}\right)$ is no higher than $\eta$, i.e., $\sum_{d_{v_i, v_k} > r_{\mathrm{obf}}}\tilde{z}_{i,k} \leq \eta~ \forall v_i\in \mathcal{V}$, where $\eta > 0$ is a positive constant close to 0. % \looseness = -1
% \mathbb{P}\mathrm{r}\left(d_{v_i, v_k} > r_{\mathrm{obf}}\right)
%Given the estimated $r_{\mathrm{obf}}$, 
Then, the set of the obfuscated locations of $v_i$, denoted by $\mathcal{O}_{m}$ ($\mathcal{O}_{m}\subseteq \mathcal{V}$), can be calculated by
\vspace{-0.00in}
\begin{equation}
\label{eq:O}
\mathcal{O}_{m} = \left\{v_k\in\mathcal{V}\left|d_{v_i, v_k}\leq r_{\mathrm{obf}}\right.\right\}. 
\vspace{-0.00in}
\end{equation}
For each obfuscated location $v_k \notin \mathcal{O}_m$, we assign a small value $\xi$ to the probability $z_{i,k}$. Here, $\xi$ is a small value and we will specify how to determine $\xi$ in Equ. (\ref{eq:expo}) in Section \ref{subsec:computeframework}. 
% Note that $r_{\mathrm{obf}}$ is not specific to any $v_i$ and therefore cannot be used to infer $v_i$. 
% According to the experimental results (Fig. \ref{fig:obfuscationdist}(a)(b)), we find the distribution of obfuscated location is highly impacted by the privacy budget $\epsilon$ and the neighbor threshold $\gamma$. Therefore, we can learn an empirical relationship between the obfuscated location range and $\epsilon$ and $\gamma$ by . 
% and use $M$ to denote the number of locations in $\mathcal{O}_{m}$. 

As Fig. \ref{fig:highlevelidea} shows, after deriving both $\mathcal{N}_{m}$ and $\mathcal{O}_{m}$, the user only needs to download a submatrix of $\mathbf{Z}$, of which the rows and the columns cover $\mathcal{N}_{m}$ and $\mathcal{O}_{m}$, respectively.

\subsubsection{Problem Formulation}
% \label{subsec:formulation}
Given each user $m$'s LR location set $\mathcal{N}_m$, we define the user's LR obfuscation matrix as $\mathbf{Z}_{\mathcal{N}_m} = \left\{z^{(m)}_{i,k}\right\}_{\mathcal{N}_m\times \mathcal{V}}$, which includes the obfuscation vectors of all the relevant locations $\mathcal{N}_m$. The cost caused by $\mathbf{Z}_{\mathcal{N}_m}$ is defined by 
\begin{equation}
\textstyle 
\mathcal{L}\left(\mathbf{Z}_{\mathcal{N}_m}\right) = \sum_{v_i \in \mathcal{N}_m}^K\mathbf{c}_i\mathbf{z}^{(m)\top}_i
\end{equation}
where $\mathbf{c}_i = \left[c_{v_i, v_1}, ..., c_{v_i, v_K}\right]$ are the cost coefficients (defined by Equ. (\ref{eq:c})) of the obfuscation vector $\mathbf{z}^{(m)}_i = \left[z^{(m)}_{i,1}, ..., z^{(m)}_{i,K}\right]$. 

The objective of each user $m$ is to minimize $\mathcal{L}\left(\mathbf{Z}_{\mathcal{N}_m}\right)$ while guaranteeing the Geo-Ind constraints among the obfuscation vectors of relevant locations $\mathcal{N}_m$ and the probability unit measure constraint for each obfuscation vector $\mathbf{z}_i$ in $\mathcal{L}\left(\mathbf{Z}_{\mathcal{N}_m}\right)$. Given the LR location set $\mathcal{N}_m$ and the obfuscated location set $\mathcal{O}_m$, we let each user $m$ formulate the \emph{Locally Relevant Obfuscation Matrix Generation (LR-OMG)} problem as the following LP problem:
\begin{eqnarray}
\label{eq:LR-OMGobj}
\min && \textstyle
\mathcal{L}\left(\mathbf{Z}_{\mathcal{N}_m}\right) = \sum_{v_i \in \mathcal{N}_m}\mathbf{c}_i\mathbf{z}^{(m)\top}_i\\  \label{eq:LR-OMGconstr1}
\mbox{s.t. } && \frac{z^{(m)}_{i,k}}{z^{(m)}_{j,k}} \leq e^{{\epsilon d_{v_i, v_j}}}, \forall v_i, v_j \in \mathcal{N}_m~s.t.~ d_{v_i, v_j} \leq \gamma, \forall v_k \\ \label{eq:LR-OMGconstr2}
&& \textstyle 
\sum_{k} z_{i,k} = 1, \forall v_i \in \mathcal{N}_m \\ \label{eq:LR-OMGconstr3}
&& z_{i,k} = \xi, \forall v_k \notin \mathcal{O}_m.
\vspace{-0.00in}
\end{eqnarray}

\subsection{Computation Framework}
\label{subsec:computeframework}

% \subsubsection{Coefficient migration to the server} 
Considering the limited computation capability of users, like most related works \cite{Qiu-TMC2020, Bordenabe-CCS2014, Wang-WWW2017}, we migrate the computation load of LR-OMG to the servers. Note that, for each user $m$, directly uploading the LR location set $\mathcal{N}_m$ and the obfuscated location set $\mathcal{O}_m$ to the server might cause additional privacy leakage, as both $\mathcal{N}_m$ and $\mathcal{O}_m$ can be leveraged to infer the user's real location $v_m$. As a solution shown in Fig. \ref{fig:coefficientcalculation}, we let each user $m$ ($m = 1, ..., M$) upload the coefficient of the formulated LP problem (Equ. (\ref{eq:LR-OMGobj})--(\ref{eq:LR-OMGconstr3}))) to the server, including 
the \textbf{distance matrix} $\mathbf{D}_{\mathcal{N}^2_m}$ and the \textbf{cost matrix} $\mathbf{C}_{\mathcal{N}_m, \mathcal{O}_m}$ to the server, instead of $\mathcal{N}_m$ and $\mathcal{O}_m$. % {\revision We will discuss both $\mathbf{D}_{\mathcal{N}^2_m}$ and $\mathbf{C}_{\mathcal{N}_m, \mathcal{O}_m}$ are hard to infer $v_m$ in Section \ref{subsec:discussionthreats}.}  
\vspace{0.03in}

\noindent \textbf{(1) Distance matrix} $\mathbf{D}_{\mathcal{N}^2_m}$: The user $m$ calculates the Haversine distance $d_{v_i, v_j}$ between each pair of locations $v_i, v_j \in \mathcal{N}_{m}$. Then, the user uploads the distance matrix $\mathbf{D}_{\mathcal{N}^2_m} = \left\{d_{v_i, v_j}\right\}_{(v_i, v_j) \in \mathcal{N}^2_m}$ to the server, which uses each distance value $d_{v_i, v_j}$ to establish the Geo-Ind constraints for each pair of decision variables $z_{i,k}$ and $z_{j,k}$ in Equ. (\ref{eq:Geo-Ind-general}). Note that $\mathbf{D}_{\mathcal{N}^2_m}$ solely provides information about the relative positions of the locations within $\mathcal{N}_{m}$ without disclosing their specific coordinates. \looseness = -1

{\revisiondone It is important to note that if locations are unevenly distributed, an attacker could potentially narrow down the search range of a target user's location based on the user's distance matrix. For example, if users' locations are represented by the "connections" of the road network (retrievable via OpenStreetMap \cite{openstreetmap}), users located in downtown areas would tend to generate a distance matrix with lower values due to the higher density of connections in those areas. {\rd In this paper, we focus primarily on locations that are evenly distributed.} Specifically, similar to existing works \cite{Yu-NDSS2017, Bordenabe-CCS2014, Wang-WWW2017}, we represent the location field using a grid map composed of equally sized grid cells, where each cell is a discrete location. In this case, the distance matrix generated by each user have the same distance values.} 

\begin{figure}[t]
\centering
\hspace{-0.0in}
\begin{minipage}{0.48\textwidth}
%\centering
%  \subfigure[Centralized LP coefficient calculation. % Given the discrete location set and users' prior distribution, the server handles all the coefficient calculations.
%  ]{
%\includegraphics[width=1.00\textwidth]{}}
\subfigure%[Partial LP coefficient calculation.  
% The user handles part of coefficient calculation, including cost estimation error and distance matrix calculation, without disclosing the discrete location set and user prior distribution.
%]
{
\includegraphics[width=0.96\textwidth]{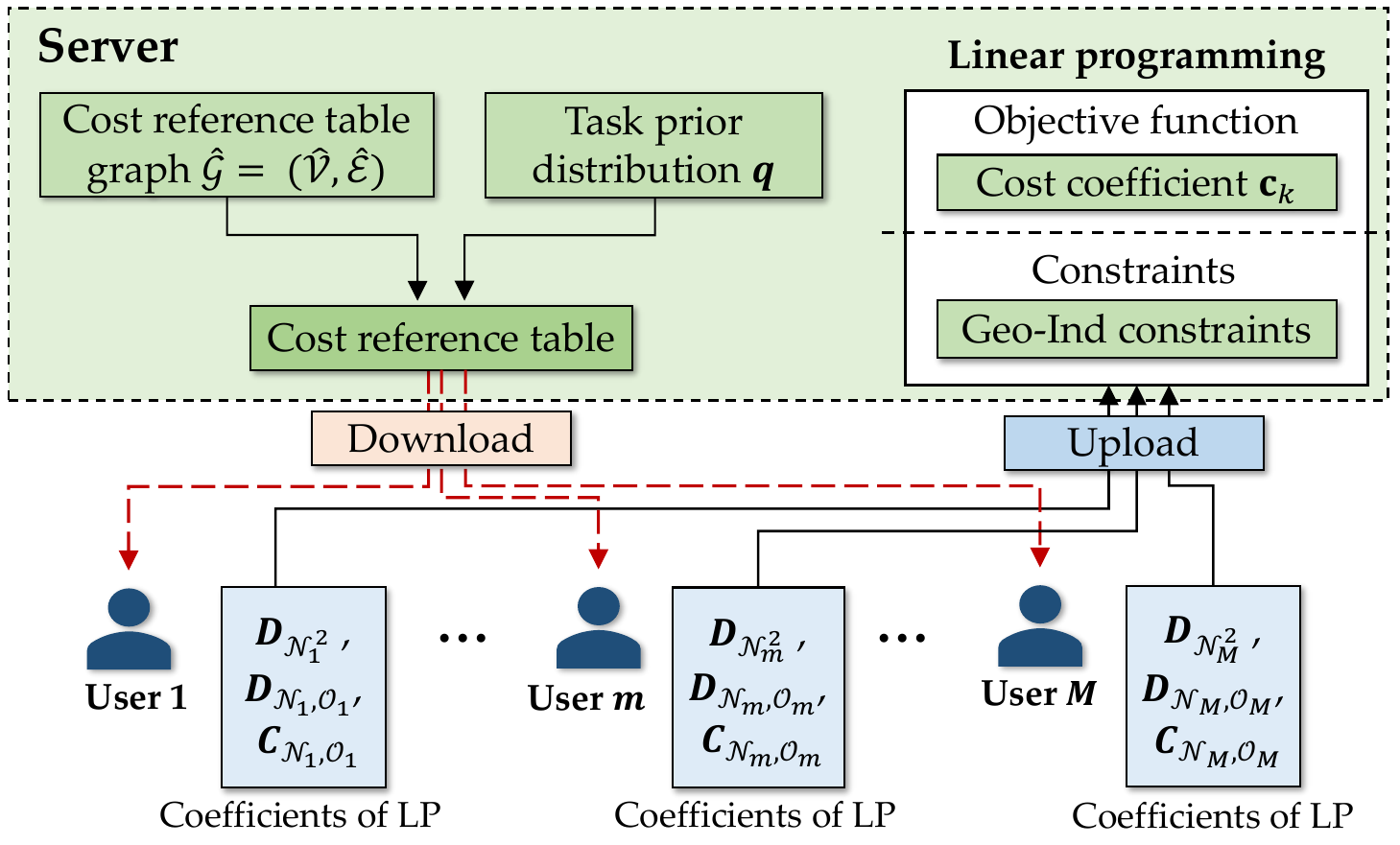}}
\end{minipage}
\vspace{-0.15in}
\caption{Obfuscation matrix calculation. }
\label{fig:coefficientcalculation}
\vspace{-0.15in}
\end{figure}

\vspace{0.03in}
\noindent \textbf{(2) Cost matrix} $\mathbf{C}_{\mathcal{N}_m, \mathcal{O}_m} = \left\{c_{v_i, v_k}\right\}_{(v_i, v_k) \in \mathcal{N}_m\times \mathcal{O}_m}$ includes the cost coefficients $c_{v_i, v_k}$ caused by each obfuscated location $v_k \in \mathcal{O}_m$ given each location $v_i \in \mathcal{N}_i$, from which the server can specify the objective function in Equ. (\ref{eq:LR-OMGobj}). Note that, to compute the cost coefficient $c_{v_i, v_k}$ in Equ. (\ref{eq:c}), it requires knowledge of the target locations $\mathbf{q} = [q_1, ..., q_K]$. However, retaining this information at the user's end presents challenges, primarily due to the dynamically changing target distribution that introduces additional communication costs. Moreover, privacy concerns, particularly regarding the confidentiality of targets (e.g., passengers in Uber-like platform \cite{Qiu-TMC2020}), further complicate the maintenance of such information.

To facilitate the estimation of $c_{v_i, v_k}$, as Fig. \ref{fig:coefficientcalculation} shows, we let each user download a \emph{cost reference} table maintained by the server. This table associates each pair $(v_i, v_k)$ with a value approximating the estimated cost caused by $v_k$ when the real location is $v_i$, all while keeping the actual target locations confidential. The process of constructing the cost reference table, ensuring the privacy of both users and targets, will be elaborated in \textbf{Section \ref{subsec:costreference}}. In this section, we assume that {\bf users can accurately obtain each $c_{v_i, v_k}$ using the cost reference table}. Further analysis of the performance guarantee utilizing the cost reference table will be presented in \textbf{Section \ref{subsec:perfanalysis}}. We will also illustrate that the cost matrix cannot be reversed to unveil the LR location of the user in \textbf{Section \ref{subsec:discussionthreats}} and experiment in Fig. \ref{fig:exp:threat}.

\vspace{-0.10in}
\subsection{Exponential Mechanism Integration} 
Given the coefficient matrices $\mathbf{D}_{\mathcal{N}^2_m}$ and $\mathbf{C}_{\mathcal{N}_m, \mathcal{O}_m}$ ($m = 1, .., M$), the server can compute $\mathbf{Z}_{\mathcal{N}_m}$ for each user. However, as each LR matrix is generated independently, the obfuscation vectors, such as $\mathbf{z}_i^{(m)}$ and $\mathbf{z}_j^{(n)}$ from different LR matrices $\mathbf{Z}_{\mathcal{N}_m}$ and $\mathbf{Z}_{\mathcal{N}n}$, may not satisfy the Geo-Ind constraints. Conversely, jointly deriving $\mathbf{Z}_{\mathcal{N}_1}, ..., \mathbf{Z}_{\mathcal{N}_M}$ using only LP not only incurs high computational overhead but also necessitates the disclosure of $\mathcal{N}_m$ and $\mathcal{O}_m$, which should be hidden from the server. 

{\revisiondone As a solution, we integrate the exponential geo-obfuscation mechanism into LR-Geo, allowing obfuscated locations that follow an exponential distribution to satisfy Geo-Ind across users without imposing the corresponding linear constraints in the LP formulation. Note that it is hard for the server to detect which obfuscation probabilities of the LR locations violate the Geo-Ind constraints, as checking such constraints require users' real locations, which are hidden from the server.} Instead, like \cite{ImolaUAI2022}, we let each user locally determine which obfuscated locations should follow the exponential mechanism. We define an indicator matrix $\mathbf{Q} = \left\{q_{i,k}\right\}_{(v_i, v_k)\in \mathcal{V}^2}$ to indicate whether the probability distribution of obfuscated location $v_k$ needs to follow the exponential distribution when the real location is $v_i$. If $q_{i,k} = 1$, we enforce the following constraint (exponential distribution) for each obfuscated location $v_k \in \mathcal{V}$ 
\begin{equation}
\label{eq:expo}
z_{i,k} = \left\{\begin{array}{ll} y_k e^{-\frac{\epsilon d_{v_i, v_k}}{2}} & \mbox{if $v_k \in \mathcal{O}_m$} \\ \xi = 
y_k e^{-\frac{\epsilon r_{\mathrm{obf}}}{2}} & \mbox{if $v_k \notin \mathcal{O}_m$} 
\end{array}\right.. 
\end{equation}
where $y_k \geq 0$ is a decision variable. In what follows, we let $\mathbf{y} = [y_1, ..., y_K]$. Note that in Equ. (\ref{eq:LR-OMGconstr3}) we have set $z_{i,k} = \xi$ when $v_k \notin \mathcal{O}_m$, and here in Equ. (\ref{eq:expo}), we specify $\xi = y_k e^{-\frac{\epsilon r_{\mathrm{obf}}}{2}}$. 

{\revisiondone Like \cite{ImolaUAI2022}, we adopt a heuristic strategy to determine the indicator matrix $\mathbf{Q}$. In particular, we assign $q_{i,k} = 1$ when the distance $d_{v_i, v_k}$ exceeds $r_{\mathrm{exp}}$, where $r_{\mathrm{exp}}$, a predefined threshold, is less than or equal to the obfuscation range $r_{\mathrm{obf}}$. This approach is based on the rationale of applying the exponential mechanism more extensively to obfuscated locations that are significantly distant from the actual location. Such locations are often associated with lower probability values, thereby minimizing their influence on the expected cost. Given that LP-based methods tend to yield lower costs, incorporating an exponential mechanism for these distant locations can reduce cost impacts more effectively. However, our framework can accommodate alternative methods for determining $\mathbf{Q}$. 

In our experimental setup in Section \ref{sec:performance}, we set $r_{\mathrm{obf}} = 4\text{km}$, resulting in a Geo-Ind violation ratio of up to 0.13\% (as shown in Fig. 10), i.e., the Geo-Ind constraint is likely to be satisfied with a sufficiently high ratio. \looseness = -1 }

\begin{proposition} 
\cite{ImolaUAI2022} Given any $y_k \in \mathbb{R}^+$, if the constraints in Equ. (\ref{eq:expo}) are satisfied, then for each pair of $z_{i,k}$ and $z_{j,k}$ with $q_{i,k} = q_{j,k} = 1$, the Geo-Ind constraint $z_{i,k}  - e^{{\epsilon d_{v_i, v_j}}}  z_{j,k} \leq 0$ is satisfied (The detailed proof can be found in the proof of \textbf{Proposition 1} in \cite{ImolaUAI2022}). 
\end{proposition}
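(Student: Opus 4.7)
The plan is to establish the Geo-Ind constraint directly from the exponential form enforced by Equ. (\ref{eq:expo}), exploiting that the common factor $y_k$ cancels when we form the ratio $z_{i,k}/z_{j,k}$. Since $q_{i,k}=q_{j,k}=1$, both obfuscation probabilities at column $v_k$ are pinned to the exponential mechanism, so the only residual freedom is the positive multiplier $y_k$, and this multiplier is the \emph{same} for both rows (it depends on the column $v_k$, not on the row). Therefore the verification reduces to bounding an exponential of a distance difference.

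Concretely, I would first treat the interior case $v_k \in \mathcal{O}_m \cap \mathcal{O}_n$, where Equ. (\ref{eq:expo}) gives $z_{i,k}=y_k e^{-\epsilon d_{v_i,v_k}/2}$ and $z_{j,k}=y_k e^{-\epsilon d_{v_j,v_k}/2}$. Then I would compute
\begin{equation}
\frac{z_{i,k}}{z_{j,k}} \;=\; \exp\!\Bigl(\tfrac{\epsilon}{2}\bigl(d_{v_j,v_k}-d_{v_i,v_k}\bigr)\Bigr),
\end{equation}
and apply the triangle inequality for the Haversine metric, $d_{v_j,v_k}-d_{v_i,v_k}\le d_{v_i,v_j}$, to obtain $z_{i,k}/z_{j,k}\le e^{\epsilon d_{v_i,v_j}/2}\le e^{\epsilon d_{v_i,v_j}}$, which rearranges to the desired inequality $z_{i,k}-e^{\epsilon d_{v_i,v_j}}z_{j,k}\le 0$. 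The factor of $1/2$ on the exponent is exactly what makes the triangle-inequality step produce a constant small enough to absorb symmetry, so the bound holds regardless of which of $v_i,v_j$ is closer to $v_k$.

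Next I would handle the boundary cases in which $v_k$ falls outside one or both of the obfuscation balls $\mathcal{O}_m,\mathcal{O}_n$. When $v_k\notin\mathcal{O}_m$ and $v_k\notin\mathcal{O}_n$, Equ. (\ref{eq:expo}) forces $z_{i,k}=z_{j,k}=y_k e^{-\epsilon r_{\mathrm{obf}}/2}$, so the ratio is $1$ and the constraint is trivial. When, say, $v_k\in\mathcal{O}_m$ but $v_k\notin\mathcal{O}_n$, the ratio becomes $e^{\epsilon(r_{\mathrm{obf}}-d_{v_i,v_k})/2}$, and I would use $d_{v_i,v_k}\le r_{\mathrm{obf}}$ together with $d_{v_i,v_k}\ge d_{v_j,v_k}-d_{v_i,v_j}$ (or a symmetric argument, invoking that $v_k\notin\mathcal{O}_n$ means $d_{v_j,v_k}>r_{\mathrm{obf}}$) to push the bound back to $e^{\epsilon d_{v_i,v_j}/2}$.

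The main obstacle I expect is precisely this bookkeeping in the boundary regime where $v_k$ straddles the two users' obfuscation ranges; the interior case is a one-line triangle-inequality calculation, but showing that clipping to the cap $y_k e^{-\epsilon r_{\mathrm{obf}}/2}$ does not break the multiplicative inequality requires a careful case split. Since the proposition is attributed to \cite{ImolaUAI2022}, I would cite their argument for the boundary handling rather than rederive it; my contribution here is simply to verify that the same argument transfers to our two-piece definition of $z_{i,k}$ in Equ. (\ref{eq:expo}).
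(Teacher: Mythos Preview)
Your proposal is correct and essentially mirrors the paper's own argument: the paper defers Proposition~4.1 to \cite{ImolaUAI2022}, but its proof of Theorem~\ref{thm:privacyguarantee} in the appendix carries out exactly the three-case split you describe (both inside, one inside/one outside, both outside), canceling $y_k$ and invoking the triangle inequality on the Haversine distance to bound the ratio by $e^{\epsilon d_{v_i,v_j}/2}\le e^{\epsilon d_{v_i,v_j}}$. Your handling of the mixed case via $d_{v_j,v_k}>r_{\mathrm{obf}}$ combined with $d_{v_j,v_k}\le d_{v_i,v_j}+d_{v_i,v_k}$ is the same step the paper uses, so there is no substantive difference in route.
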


To enable the server to establish the constraints of the exponential distribution in Equ. (\ref{eq:expo}), each user $m$ computes the distance matrix $\mathbf{D}_{\mathcal{N}m, \mathcal{O}m} = \left\{d_{v_i, v_k}\right\}_{(v_i, v_k) \in \mathcal{N}_m\times \mathcal{O}_m}$ and uploads the matrix to the server. It's important to note that $\mathbf{D}_{\mathcal{N}_m, \mathcal{O}_m}$ contains only the relative distances between locations within $\mathcal{N}_m$ and $\mathcal{O}_m$, and does not provide enough information to deduce the specific locations in either $\mathcal{N}_m$ or $\mathcal{O}_m$.

\vspace{0.05in}
\noindent\textbf{Problem formulation}. After collecting 
the coefficient matrices $\mathbf{D}_{\mathcal{N}^2_m}$, $\mathbf{D}_{\mathcal{N}_m, \mathcal{O}_m}$ and $\mathbf{C}_{\mathcal{N}_m, \mathcal{O}_m}$ ($m = 1, .., M$) and adding the constraints of exponential mechanism in Equ. (\ref{eq:expo}), we can formulate the following \emph{Central LR-Geo (CLR-Geo)} problem at the server side: 
\begin{eqnarray}
\label{eq:OMGLRobj}
\min && \textstyle 
\sum_{m=1}^M\mathcal{L}\left(\mathbf{Z}_{\mathcal{N}_m}\right) \\ \label{eq:OMGLRconstr}
\mbox{s.t. } && \mbox{Equ. (\ref{eq:LR-OMGconstr1})(\ref{eq:LR-OMGconstr2}) are satisfied for each $\mathcal{N}_m$}\\ \label{eq:LR-OLMGexpoconstr} 
&& \mbox{Equ. (\ref{eq:expo}) is satisfied $\forall v_k \in \mathcal{V}$ with $q_{i,k} = 1$.}
\vspace{-0.00in}
\end{eqnarray}

\subsection{Benders' Decomposition to Enhance Computation Efficiency}
\label{subsec:Benders}

\subsubsection{Problem reformulation of CLR-Geo} We rewrite the objective function in Equ. (\ref{eq:OMGLRobj}) as 
\begin{eqnarray}
\nonumber 
\textstyle \sum_{m=1}^M\mathcal{L}\left(\mathbf{Z}_{\mathcal{N}_m}\right) &=& \textstyle 
 \sum_{m=1}^M\sum_{v_i\in \mathcal{N}_m} \sum_{v_k \in \mathcal{V}}\underbrace{c_{v_i, v_k}z_{i,k}q_{i,k}}_{\small \mbox{following Equ. (\ref{eq:expo})}} \\ \nonumber 
&+& \textstyle  \sum_{m=1}^M\sum_{v_i\in \mathcal{N}_m}\sum_{v_k \in \mathcal{V}}c_{v_i, v_k}z_{i,k}(1-q_{i,k}) \\ 
&=& \textstyle  \sum_{k=1}^K \alpha_k y_k + \sum_{m=1}^M \mathbf{c}'_{\mathcal{N}_m}\mathbf{z}'_{\mathcal{N}_m}
\end{eqnarray}
where each $\alpha_k = \sum_{m=1}^M\sum_{v_i\in \mathcal{N}_m} q_{i,k} c_{v_i, v_k} e^{-\frac{\epsilon d_{v_i, v_k}}{2}}$ is a constant, and in $\mathbf{c}'_{\mathcal{N}_m}$, each $c'_{i,k} = c_{v_i, v_k}z_{i,k}(1-q_{i,k})$. 

We rewrite the constraints of Equ. (\ref{eq:LR-OMGconstr1})(\ref{eq:LR-OMGconstr2}) and Equ. (\ref{eq:expo}) as 
\begin{equation}
\label{eq:LR-OMGconstrrewrite}
\mathbf{A}_{\mathcal{N}_m}\mathbf{z}'_{\mathcal{N}_m} + \mathbf{B}_{\mathcal{N}_m}\mathbf{z}''_{\mathcal{N}_m}\left(\mathbf{y}\right) \geq \mathbf{b}_{\mathcal{N}_m}
\end{equation}
where 
\vspace{-0.1in}
\begin{equation}
\mathbf{A}_{\mathcal{N}_m} = \left[\begin{array}{c}\mathbf{A}^{\mathrm{GeoI}}_{\mathcal{N}_m}\\ \mathbf{A}^{\mathrm{unit}}_{\mathcal{N}_m}\\
-\mathbf{A}^{\mathrm{unit}}_{\mathcal{N}_m}\end{array}\right],  \mathbf{B}_{\mathcal{N}_m} = \left[\begin{array}{c}\mathbf{B}^{\mathrm{GeoI}}_{\mathcal{N}_m}\\ \mathbf{B}^{\mathrm{unit}}_{\mathcal{N}_m} \\ 
-\mathbf{B}^{\mathrm{unit}}_{\mathcal{N}_m}\end{array}\right], \mathbf{b}_{\mathcal{N}_m} = \left[\begin{array}{c}\mathbf{b}^{\mathrm{GeoI}}_{\mathcal{N}_m}\\ \mathbf{b}^{\mathrm{unit}}_{\mathcal{N}_m} \\ 
-\mathbf{b}^{\mathrm{unit}}_{\mathcal{N}_m}\end{array}\right]
\end{equation}
and (1) $\mathbf{z}'_{\mathcal{N}_m}$ (resp. $\mathbf{z}''_{\mathcal{N}_m}\left(\mathbf{y}\right)$) includes the obfuscation probabilities $z_{i,k}$ \emph{not adhering to} (resp. \emph{adhering to}) the exponential mechanism, where $v_i \in \mathcal{N}_m$ (note that each entry in $\mathbf{z}''_{\mathcal{N}_m}\left(\mathbf{y}\right)$ follows Equ. (\ref{eq:expo}), therefore the vector is written as a function of $\mathbf{y}$).  
\newline (2) $\mathbf{A}^{\mathrm{GeoI}}_{\mathcal{N}_m}$ (resp. $\mathbf{B}^{\mathrm{GeoI}}_{\mathcal{N}_m}$) denotes the coefficient matrix of the \emph{Geo-Ind constraints} for $\mathbf{z}'_{\mathcal{N}_m}$ (resp. $\mathbf{z}''_{\mathcal{N}_m}\left(\mathbf{y}\right)$); 
\newline (3) $\mathbf{A}^{\mathrm{unit}}_{\mathcal{N}_m}$ (resp. $\mathbf{B}^{\mathrm{unit}}_{\mathcal{N}_m}$) denotes the coefficient matrix of the \emph{unit measure constraints} for $\mathbf{z}'_{\mathcal{N}_m}$ (resp. $\mathbf{z}''_{\mathcal{N}_m}\left(\mathbf{y}\right)$);
\newline (4) $\mathbf{b}^{\mathrm{GeoI}}_{\mathcal{N}_m}$ and $\mathbf{b}^{\mathrm{unit}}_{\mathcal{N}_m}$ are the right hand side coefficient vectors of the \emph{Geo-Ind constraints} and the \emph{unit measure constraints}, respectively.

As Fig. \ref{fig:blockstructure} shows, the constraint matrix of the reformulated problem has a \emph{block ladder structure}, lending the problem well to \emph{Benders decomposition (BD)} \cite{Rahmaniani-EJOR2017}. Due to the limit of space, we list the detailed formulations of $\mathbf{A}^{\mathrm{GeoI}}_{\mathcal{N}_m}$, $\mathbf{B}^{\mathrm{GeoI}}_{\mathcal{N}_m}$, $\mathbf{A}^{\mathrm{unit}}_{\mathcal{N}_m}$, $\mathbf{B}^{\mathrm{unit}}_{\mathcal{N}_m}$, and coefficient vectors $\mathbf{b}^{\mathrm{GeoI}}_{\mathcal{N}_m}$, and $\mathbf{b}^{\mathrm{unit}}_{\mathcal{N}_m}$ in \textbf{Section \ref{sec:Bendersnotations}} in Appendix. % \looseness = -1 % In addition, \emph{Proposition \ref{prop:PMOreform}} claims the correctness of the PMO reformulation, and the detailed proof can be found in Section \ref{subsec:proof:PMOreform} in the supplementary file. 

\subsubsection{High level idea of the algorithm} Benders' decomposition is an optimization method that breaks a complex problem into two simpler stages: a \emph{master program} and a set of \emph{subproblems}.
\newline \textbf{Stage 1: Master Program (MP)} - The MP focuses on deriving the set of decision variables $\{y_1, ..., y_K\}$. It simplifies the problem by replacing the cost terms $\mathbf{c}'_{\mathcal{N}_m}\mathbf{z}'_{\mathcal{N}_m}$ associated with data perturbation in each user $m$ with a single decision variable $w_m$. The MP is formulated as a linear programming problem that minimizes the sum cost of the primary decision variables $\sum_{k=1}^K \alpha_k y_k$ and the substituted cost terms $\sum_{m=1}^M w_m$. The MP iteratively guesses the values of $w_m$ and uses them in the subsequent stage to guide optimization.
\newline \textbf{Stage 2: Subproblems} - Each subproblem $\mathsf{sub}_m$ validates the guessed values of $w_m$ generated by the MP considering its own constraint $\mathbf{A}_{\mathcal{N}_m}\mathbf{z}'_{\mathcal{N}_m} \geq \mathbf{b}_{\mathcal{N}_m} - \mathbf{B}_{\mathcal{N}_m}\mathbf{z}''_{\mathcal{N}_m}\left(\mathbf{y}\right)$. If the guessed value is not optimal, the subproblem suggests a new constraint (or cut) to refine the MP's future guesses. As shown in Fig. \ref{fig:blockstructure}, the constraint matrix $\mathbf{A}_{\mathcal{N}_m}$ of different subproblems are disjoint, allowing them to validate the guessed values of $w_m$ in parallel. This iterative process continues, with each subproblem helping to progressively narrow the solution space until the optimal solution is found.

\begin{figure}[t]
\centering
\hspace{-0.0in}
\begin{minipage}{0.50\textwidth}
\subfigure
{
\includegraphics[width=0.92\textwidth]{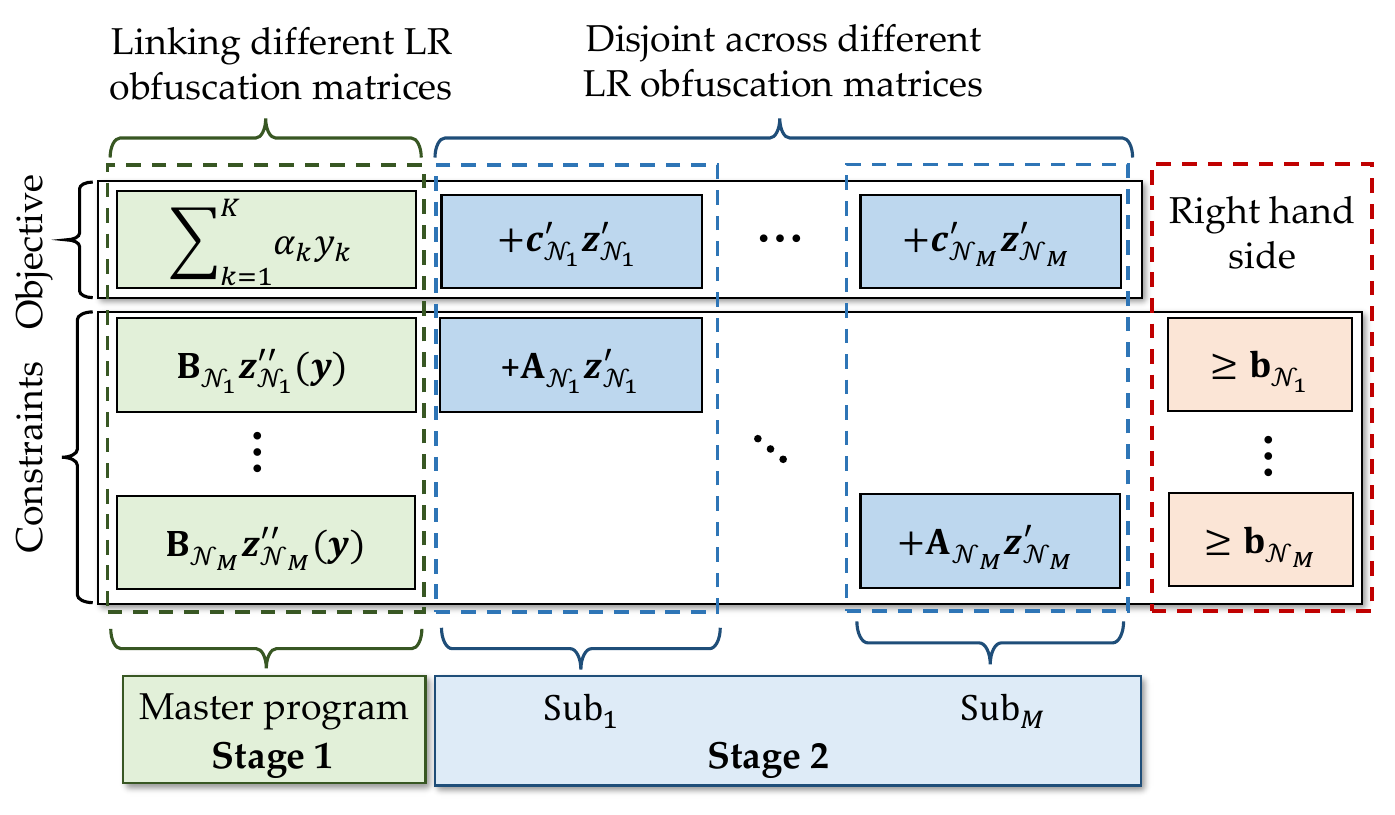}}
\end{minipage}
\vspace{-0.05in}
\caption{Block ladder structure of the CLR-Geo problem. }
\label{fig:blockstructure}
\vspace{-0.05in}
\end{figure}

The method iteratively refines the solutions by collecting these cuts from subproblems, guiding the MP towards the optimal solution over time. This approach is particularly useful for problems where the main challenge is efficiently managing computational complexity by breaking it down into more manageable parts. Due to the limit of space, the detailed description of the algorithm is given in Section \ref{sec:BDdetaileddescription} in Appendix.

% In the two-stage framework of BD, the number of decision variables in the MP in \textbf{Stage 1} is $K(\sum_{l=1}^M|\mathcal{Y}_l|)+M$, including $K(\sum_{l=1}^M|\mathcal{Y}_l|)$ variables ($\mathbf{z}_{\mathcal{Y}1}, \ldots, \mathbf{z}_{\mathcal{Y}_M}$) to determine the perturbation vectors of the boundary records, and the $M$ variables ($w_1,..., w_M$) to ``guess'' the utility loss of subproblems. Each $\mathrm{Sub}_l$ in \textbf{Stage 2} has $K |\mathcal{X}_l|$ decision variables to determine the perturbation vectors of the internal records $\mathbf{z}_{\mathcal{X}_l}$. \looseness = -1

\vspace{-0.03in}
{\revisiondone \subsubsection{Time complexity of BD} 
The MP and the subproblems are both formulated as LP problems, with time complexity depending on the number of decision variables \cite{Cohen-STOC2019}. To facilitate analysis, we use a monotonically increasing function $O(T(n))$ to represent the time complexity of LP given the number of decision variables $n$. 

In each iteration, the MP in Stage 1 has $O(T(|\mathcal{V}| +M))$ time complexity, including $|\mathcal{V}|$ decision variables to determine $y_1, \ldots, y_K$ and $M$ decision variables $w_1, \ldots, w_M$ to ``guess'' the utility loss of all the users. Each subproblem $m$ in Stage 2 (run in parallel) has $O(T(|\mathcal{V}||\mathcal{N}_m|^2))$ time complexity to validate the guessed value of $w_m$ generated by the MP.  Since Stage 2 is terminated only after all subproblems are completed, its time complexity is $O(\max_{l} T(|\mathcal{V}||\mathcal{N}_m|^2))$ $= O(T( |\mathcal{V}|\max_{m}|\mathcal{N}_m|^2))$, assuming all the subproblems are run in parallel. Hence, the total computation time of each iteration is given by $O(T(|\mathcal{V}| +M) + T(|\mathcal{V}|\max_{m} |\mathcal{N}_m|^2))$. Considering that both $O(|\mathcal{V}|+M)$ and $O(|\mathcal{V}|\max_{m} |\mathcal{N}_m|^2)$ are much smaller than the number of decision variables in the original OMG, $O(|\mathcal{V}|^3)$, the time complexity of each iteration of BD is significantly lower than that of the original PMO. {\rd The remaining question} is \emph{how many iterations are needed to converge the solution to the optimal. }} 

\vspace{0.03in}
\noindent \textbf{Convergence of Benders decomposition}.
\vspace{-0.05in}
\begin{proposition}
\label{prop:BDbound}
(\emph{Upper and lower bounds of CLR-Geo's optimal}) \cite{Rahmaniani-EJOR2017}
(1) Because the MP relaxes the constraints of the original CLR-Geo (Equ. (\ref{eq:OMGLRobj})--(\ref{eq:LR-OLMGexpoconstr})), the optimal solution of the MP (Equ. (\ref{eq:MPObj}) -- (\ref{eq:MPzy0})) offers a \textbf{lower bound} of the optimal solution of CLR-Geo. 
\newline (2) If the solution of the subproblems (Equ. (\ref{eq:dualobj})-(\ref{eq:dualconstr1})) exists, then by combining the solutions of the subproblems and the solution of the MP, we obtain a feasible solution of CLR-Geo, which is an \textbf{upper bound} of the optimal solution.
\end{proposition}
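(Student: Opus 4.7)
The plan is to exploit the standard relaxation--restriction structure underlying Benders' decomposition. For part (1), I would argue that the MP is a relaxation of CLR-Geo. Concretely, take any CLR-Geo feasible point $(\mathbf{y}^{\star},\{\mathbf{z}'^{\star}_{\mathcal{N}_m}\}_{m=1}^{M})$ and define $w_m^{\star} := \mathbf{c}'_{\mathcal{N}_m}\mathbf{z}'^{\star}_{\mathcal{N}_m}$. The only MP constraints that do not appear directly in CLR-Geo are the optimality and feasibility cuts generated from the subproblem duals; each such cut is a valid inequality derived from the feasibility cone of some $\mathsf{sub}_m$, so it is automatically satisfied at any point whose $\mathbf{z}'$-block is subproblem-feasible. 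Hence $(\mathbf{y}^{\star},\{w_m^{\star}\})$ is MP-feasible. Since the MP objective $\sum_k \alpha_k y_k^{\star}+\sum_m w_m^{\star}$ coincides by construction with the CLR-Geo objective at $(\mathbf{y}^{\star},\{\mathbf{z}'^{\star}_{\mathcal{N}_m}\})$, the MP optimum is no larger than the CLR-Geo optimum, which establishes the lower bound.

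For part (2), suppose the MP returns $(\hat{\mathbf{y}},\{\hat{w}_m\})$ and that, for this $\hat{\mathbf{y}}$, each subproblem $\mathsf{sub}_m$---minimizing $\mathbf{c}'_{\mathcal{N}_m}\mathbf{z}'_{\mathcal{N}_m}$ subject to $\mathbf{A}_{\mathcal{N}_m}\mathbf{z}'_{\mathcal{N}_m} \geq \mathbf{b}_{\mathcal{N}_m}-\mathbf{B}_{\mathcal{N}_m}\mathbf{z}''_{\mathcal{N}_m}(\hat{\mathbf{y}})$---admits an optimum $\hat{\mathbf{z}}'_{\mathcal{N}_m}$. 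Because $\mathbf{z}''_{\mathcal{N}_m}(\hat{\mathbf{y}})$ is defined through Equ.~(\ref{eq:expo}), it satisfies the exponential-mechanism constraints Equ.~(\ref{eq:LR-OLMGexpoconstr}) by construction; and because the subproblem constraints are exactly the Geo-Ind and unit-measure constraints of Equs.~(\ref{eq:LR-OMGconstr1})--(\ref{eq:LR-OMGconstr2}) in the block form of Equ.~(\ref{eq:LR-OMGconstrrewrite}), the assembled tuple $(\hat{\mathbf{y}},\{\hat{\mathbf{z}}'_{\mathcal{N}_m}\})$ is CLR-Geo feasible. Its objective value $\sum_k \alpha_k\hat{y}_k+\sum_m \mathbf{c}'_{\mathcal{N}_m}\hat{\mathbf{z}}'_{\mathcal{N}_m}$ therefore upper-bounds the CLR-Geo optimum by the definition of optimality.

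The main obstacle will be to justify rigorously that every cut accumulated in the MP is a valid inequality for \emph{every} CLR-Geo feasible point, not only for the current iterate. This amounts to showing that each optimality cut is produced from an extreme point of the subproblem dual polyhedron and each feasibility cut from an extreme ray, so that both are consequences of LP duality applied to the subproblem parametrized by $\mathbf{y}$. Once this is established, both claims reduce to clean relaxation/restriction arguments: the lower bound follows from the fact that we are minimizing over a superset, and the upper bound follows from the fact that the reassembled primal solution lies in the CLR-Geo feasible set. The remainder of the proof is essentially bookkeeping over the block decomposition in Fig.~\ref{fig:blockstructure}.
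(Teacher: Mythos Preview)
Your argument is correct and is exactly the standard relaxation/restriction proof of the Benders bounds. Note, however, that the paper does not supply its own proof of this proposition: it is stated with a citation to \cite{Rahmaniani-EJOR2017} and treated as a known result from the Benders decomposition literature, so there is no ``paper's proof'' to compare against beyond observing that your outline matches the textbook derivation.
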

\vspace{-0.05in}

\begin{wrapfigure}{r}{0.230\textwidth}
\vspace{-0.00in}
\begin{minipage}{0.230\textwidth}
\centering
    \subfigure{
\includegraphics[width=1.00\textwidth]{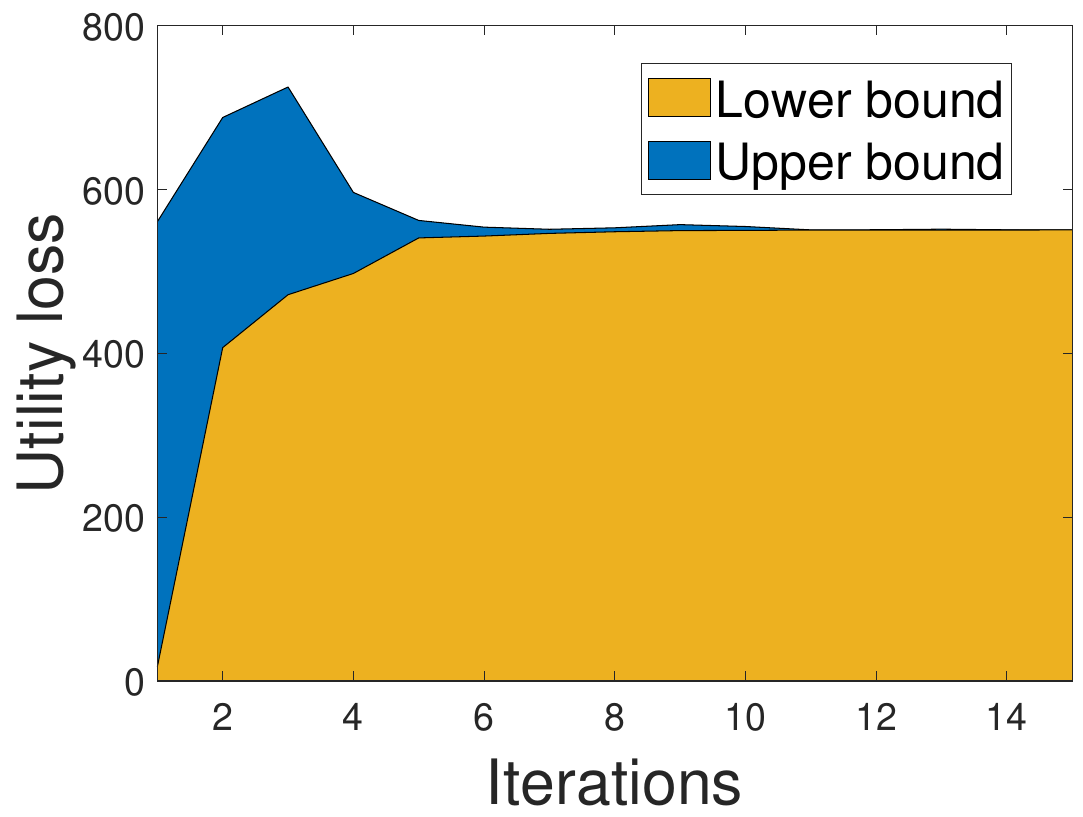}}
\vspace{-0.25in}
\caption{An example of Benders' convergence.}
\vspace{-0.05in}
\label{fig:BDconvergenceEg}
\end{minipage}
\end{wrapfigure}
The optimal solution for the CLR-Geo necessarily resides within the interval demarcated by the upper and lower bounds as delineated in \textbf{Proposition \ref{prop:BDbound}}. The narrower this interval, the nearer the solution derived from the Benders' Decomposition is to the optimal solution. Fig. \ref{fig:BDconvergenceEg} illustrates the evolution of these bounds throughout the iterative process. 
Considering a prolonged convergence tail,  we opt to conclude the algorithm once the discrepancy between the optimal upper and lower bounds diminishes to less than a specified margin, $\xi$ (e.g., we set $\xi = 0.01$km in our experiment in Section \ref{sec:performance}).

\subsection{Cost Matrix Estimation}
\label{subsec:costreference}
In Sections \ref{subsec:computeframework} and \ref{subsec:Benders}, we assumed that each user $m$ knows the cost matrix $\mathbf{C}_{\mathcal{N}_m, \mathcal{O}m}$. We now relax this assumption and elucidate the methodology by which users, with the assistance of the server, can estimate $\mathbf{C}_{\mathcal{N}_m, \mathcal{O}_m}$.

According to Equ. (\ref{eq:c}), the calculation of each $c_{v_i, v_k}$ requires
\begin{itemize}
\item the coordinates of the locations in $\mathcal{N}_{m}$ and $\mathcal{O}_{m}$  (to derive cost error $\delta_{v_i,v_k,v_l}$), which are \textbf{known by the user but unknown by the server}; 
\item the coordinates of the target locations (to derive cost error $d_{v_i,v_k,v_l}$) and the targets' prior distribution $\mathbf{q}$, which are \textbf{known by the server but unknown by the user}. 
\end{itemize}
Since both the server and the user possess only partial information required to compute $\mathbf{c}_k$, a ``cooperative'' approach is employed to calculate $\mathbf{c}_k$ through the exchange of intermediate values between the two parties. \emph{Throughout the calculation of $\mathbf{c}_k$, the server must remain unaware of $\mathcal{N}_{m}$ and $\mathcal{O}_{m}$ to protect privacy}. Since the server has the global information including the traveling cost between any pair of locations and the target distribution, we let the server generate a \emph{cost reference table} to assist the user estimate $\mathbf{c}_k$.

\subsubsection{Cost reference table} The server constructs a discrete set of locations $\hat{\mathcal{V}}$, which is sufficiently dense to ensure that for any given location pair $(v_j, v_k)$ from the sets $\mathcal{N}_{m}$ and $\mathcal{O}_{m}$, respectively, users can identify a corresponding pair $(\hat{v}_j, \hat{v}_k)$ within $\hat{\mathcal{V}}$ that is closely approximated to $(v_j, v_k)$. This approximation is then used to estimate cost coefficients. To establish $\hat{\mathcal{V}}$, the server employs a grid map to discretize the location field, such that locations within the same grid cell are indistinguishable. For specific applications involving constrained user mobility, such as vehicles' mobility, the server might alternatively divide the road network into segments, treating each as a distinct location \cite{Qiu-TMC2020}. Given the superior computational resources available to servers compared to those of users, $\hat{\mathcal{V}}$ can afford to feature a finer granularity of location discretization than that of $\mathcal{N}_{m}$ and $\mathcal{O}_{m}$.

\vspace{0.03in}
\noindent \textbf{Table format}. As Fig. \ref{fig:costreferenceTable} shows, each row of the cost reference table includes 
the coordinates of an ``approximated'' real location $\hat{v}_i$, an ``approximated'' obfuscated location $\hat{v}_k$, and the expected cost $\beta_{i,k}$ given the real and the obfuscated locations $\hat{v}_i$ and $\hat{v}_k$, respectively. The expectation of the cost $\beta_{i,k}$ is to take over all possible target locations $
\beta_{i,k} = \sum_{j=1}^Q q_j\delta_{\hat{v}_i, \hat{v}_k, \hat{v}_j}$, where $q_j$ is the probability that the target's nearest location in $\hat{\mathcal{V}}$ is $\hat{v}_j$. 

In what follows, we use $\hat{\mathbf{C}}_{\mathcal{N}_m, \mathcal{O}_m} = \left\{\hat{c}_{v_i,v_k}\right\}_{(v_i, v_k) \in \mathcal{N}_m\times \mathcal{O}_m}$ to denote the cost matrix estimated by the cost reference table. 
% Algorithm \ref{al:costestimate} shows the pseudo-code of estimating cost coefficients using the cost reference table. 
When a user estimates each $\hat{c}_{v_i,v_k}$ in $\hat{\mathbf{C}}_{\mathcal{N}_m, \mathcal{O}_m}$, the user first finds the nearest locations of $v_i$ and $v_k$ in  $\hat{\mathcal{V}}$, denoted by $\hat{v}_i$ and $\hat{v}_k$ respectively, and then calculates the cost estimation error $\hat{c}_{v_i,v_k}$ by 
\vspace{-0.00in}
\begin{equation}
\label{eq:costestimate}
\hat{c}_{v_i,v_k} = p_i \left(\beta_{i,k} + d_{v_i, \hat{v}_i} + d_{v_k, \hat{v}_k}\right), 
\vspace{-0.00in}
\end{equation}
which gives an upper bound of the real $c_{v_i, v_k}$ (see \textbf{Lemma \ref{lem:upperbounds}} in Appendix). 
\vspace{-0.00in}

\vspace{-0.00in}

\begin{figure}[t]
\centering
\hspace{-0.0in}
\begin{minipage}{0.48\textwidth}
\centering
  \subfigure{
\includegraphics[width=1.00\textwidth]{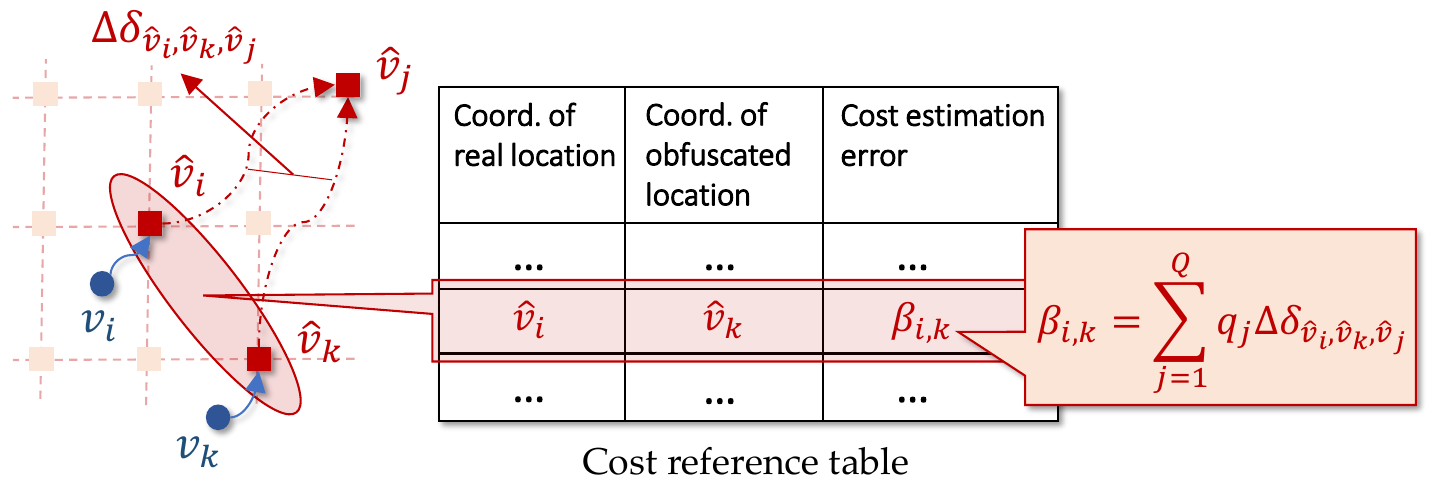}}
\end{minipage}
\vspace{-0.05in}
\caption{Cost reference table. }
\label{fig:costreferenceTable}
\vspace{-0.05in}
\end{figure}

% We assume that the server has a global map of the whole region, covering a set of discrete locations $\hat{\mathcal{V}}$. 

\DEL{
\begin{algorithm}[h]
\SetKwFunction{push}{push}
\SetKwFunction{pop}{pop}
\SetKwFunction{top}{top}
\SetKwInOut{Input}{Input}
\SetKwInOut{Output}{Output}
% \footnotesize
% \scriptsize
% \normalsize
% \small 
\Input{Cost reference table that covers the discrete location set $\hat{\mathcal{V}}$}
\Output{cost coefficient matrix $\hat{\mathbf{C}}_{\mathcal{N}_m, \mathcal{O}_m}$}
\For{each $(v_i, v_k) \in \mathcal{N}_{m}\times \mathcal{O}_{m}$}{
    $\hat{v}_i \leftarrow \arg\min_{\hat{v}_l \in \hat{\mathcal{V}}}d_{\hat{v}_i, \hat{v}_l}$\; 
    $\hat{v}_k \leftarrow \arg\min_{\hat{v}_l \in \hat{\mathcal{V}}}d_{\hat{v}_k, \hat{v}_l}$\;
    $\hat{c}_{v_i,v_k} \leftarrow p_i \left(\beta_{i,k} - \tilde{d}_{v_i, \hat{v}_i} - \tilde{d}_{v_k, \hat{v}_k}\right)$\;
}
\Return $\hat{\mathbf{C}}_{\mathcal{N}_m, \mathcal{O}_m}$\; 
\normalsize
\caption{Cost coefficient estimation using the cost reference table. }
\label{al:costestimate}
\end{algorithm}
\vspace{-0.00in}}

{\revisiondone \noindent \textbf{Time complexity of $\hat{\mathbf{C}}_{\mathcal{N}_m, \mathcal{O}_m}$'s estimation}. The cost estimation traverses the $|\mathcal{N}_{m}||\mathcal{O}_{m}|$ pairs of locations in $\mathcal{N}_{m}\times \mathcal{O}_{m}$, and for each location pair, it needs to find their closest locations in $\hat{\mathcal{V}}$, taking $O(|\hat{\mathcal{V}}|)$ comparisons. Therefore, the time complexity of cost estimation is $O(|\mathcal{N}_{m}||\mathcal{O}_{m}||\hat{\mathcal{V}}|)$.}

\DEL{
\vspace{0.00in}
\begin{wrapfigure}{r}{0.27\textwidth}
\vspace{-0.00in}
\begin{minipage}{0.265\textwidth}
\centering
    \subfigure{
\includegraphics[width=1.00\textwidth]{./fig/exp/cost reference tableattack}}
\vspace{-0.00in}
\caption{Potential attack of cost reference table. }
\label{fig:costreferenceTableattack}
\end{minipage}
\vspace{-0.00in}
\end{wrapfigure}}

\subsubsection{Cost reference table generation}
To generate a cost reference table, the server first creates a \emph{weighted directed graph} $\tilde{\mathcal{G}} = \left(\hat{\mathcal{V}}, \hat{\mathcal{E}}\right)$ to describe the traveling cost between locations in the discrete location set $\hat{\mathcal{V}}$, where each pair of \emph{adjacent} locations $\hat{v}_i$ and $\hat{v}_j$ are connected by an edge $\hat{e}_{i,j} \in \hat{\mathcal{E}}$. Each $\hat{e}_{i,j} \in \hat{\mathcal{E}}$ is assigned a weight $d_{\hat{v}_i,\hat{v}_j}$, reflecting the traveling cost from $\hat{v}_i$ to $\hat{v}_j$. The server then builds the SP tree rooted at each target location $\hat{v}_j \in \hat{\mathcal{V}}$ in $\tilde{\mathcal{G}}$, based on which it calculates the shortest path distance $D_{\hat{v}_i, \hat{v}_j}$ ($D_{\hat{v}_i, \hat{v}_k}$ resp.) from each location $\hat{v}_i$ ($\hat{v}_k$ resp.) to $\hat{v}_j$,  and derives $\delta_{\hat{v}_i,\hat{v}_k,\hat{v}_j}$ using Equ. (\ref{eq:deltad}). Finally, the server calculates $\beta_{i,k}$ for each $(\hat{v}_i,\hat{v}_k)$ using their cost estimation errors  $\delta_{\hat{v}_i,\hat{v}_k,\hat{v}_j}$ % and returns $\beta_{i,k}$ (line 9). 

\vspace{0.03in}
{\revisiondone \noindent \textbf{Time complexity of cost reference table generation}. The construction of each SP tree can be achieved by Dijkstra's algorithm, which has a time complexity of $O(|\hat{\mathcal{V}}|^2)$ \cite{Algorithm}. For each designated target location, the server is required to generate an SP tree, culminating in a collective computational effort of $O(|\hat{\mathcal{V}}|^3)$ operations. Furthermore, the computation of $\beta_{i,k}$ is called $|\hat{\mathcal{V}}|^2$ times, with each instance necessitating $O(|\hat{\mathcal{V}}|)$ operations, thereby rendering its time complexity to be $O(|\hat{\mathcal{V}}|^3)$. Consequently, the overall time complexity associated with the creation of the requisite table is determined to be $O(|\hat{\mathcal{V}}|^3+|\hat{\mathcal{V}}|^3) = O(|\hat{\mathcal{V}}|^3)$, indicating significant computational demand for these operations.}

% , which is too high for the server, e.g., consider that $|\hat{\mathcal{V}}|$ is over ten thousand in our experiment in Section \ref{sec:performance}.

\subsubsection{Cost reference table size reduction} We can to further reduce the location set $\hat{\mathcal{V}}$ while guaranteeing the accuracy of the cost coefficient $\hat{\mathbf{c}}_k$ estimation. To achieve this, we define $\hat{\mathcal{V}}$ in a circular region, referred to as $\mathcal{C}_{\mathrm{cr}}$, so that $\hat{\mathcal{V}}$ consists of locations within this circle. Consequently, the accuracy of $\hat{\mathbf{c}}_k$ estimation can be guaranteed if $\mathcal{C}_{\mathrm{cr}}$ encompasses both $\mathcal{N}_{m}$ and $\mathcal{O}_{m}$.

Note that if   $\mathcal{C}_{\mathrm{cr}}$ covers $\mathcal{N}_{m}$ and $\mathcal{O}_{m}$ only at a minimum level, the range of $\mathcal{N}_{m}$ and $\mathcal{O}_{m}$ can be possibly disclosed to the server. 
%\begin{equation}
%\hat{\mathcal{V}} = \left\{\hat{v}_i\in\hat{\mathcal{V}}\left|\hat{v}_i ~\mbox{is in}~\mathcal{C}_{\mathrm{cr}}\right.\right\}
%\end{equation}
% In the ideal case, $\mathcal{C}_{\mathrm{cr}}$ only needs to cover all the locations in  $\mathcal{N}_{m}$ and $\mathcal{O}_{m}$. 
%However, as discussed in Section \ref{sec:computeframework}, the user cannot report either $\mathcal{N}_{m}$ or $\mathcal{O}_{m}$ to the server as the centers of $\mathcal{N}_{m}$ and $\mathcal{O}_{m}$ are close to the user' real location. 
Therefore, instead of only covering $\mathcal{N}_{m}$ and $\mathcal{O}_{m}$, we allow the user to request a larger $\mathcal{C}_{\mathrm{cr}}$. Initially, the user randomly selects a location $v_a$ from the LR location set $\mathcal{N}_{m}$ by following a uniform distribution. Subsequently, the user reports a circle $\mathcal{C}(v_a, \max{2\Gamma, \Gamma+r_{\mathrm{obf}}})$ to the server as the \emph{requested range of cost reference table}, with $v_a$ serving as the center. % The user only discloses the coordinate of location $v_a$ to the server, but without disclosing the location index $v_a$. 

\vspace{-0.07in}
\begin{proposition}
\label{prop:cost reference tablerange}
The circle $\mathcal{C}(v_a, \max\{2\Gamma, \Gamma+r_{\mathrm{obf}}\})$ covers all the locations in both $\mathcal{N}_{m}$ and $\mathcal{O}_{m}$. 
\end{proposition}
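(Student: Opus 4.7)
The plan is to prove this via two applications of the triangle inequality, one for the path distance on the Geo-Ind graph (to handle locations in $\mathcal{N}_m$) and one for the Haversine distance directly (to handle locations in $\mathcal{O}_m$). The key bridging observation I need is that the Haversine distance is upper-bounded by the path distance on the Geo-Ind graph: since every edge $(v_i,v_j)\in\mathcal{E}$ is weighted by $d_{v_i,v_j}$, the triangle inequality applied along any path yields $d_{v_i,v_j} \leq D_{v_i,v_j}$ for any two connected nodes.

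First I would handle $\mathcal{N}_m$. For any $v_j\in \mathcal{N}_m$, both $v_a$ and $v_j$ satisfy $D_{m,a}\leq \Gamma$ and $D_{m,j}\leq \Gamma$ by definition of the LR location set (Def.~\ref{def:LRlocations}). Applying the triangle inequality for the path distance gives $D_{a,j} \leq D_{a,m} + D_{m,j} \leq 2\Gamma$, and therefore $d_{v_a,v_j} \leq D_{a,j} \leq 2\Gamma$. Hence every $v_j\in\mathcal{N}_m$ lies in $\mathcal{C}(v_a,2\Gamma)$.

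Next I would handle $\mathcal{O}_m$. For any $v_k\in \mathcal{O}_m$, the definition in Equ.~(\ref{eq:O}) gives $d_{v_m,v_k}\leq r_{\mathrm{obf}}$. Using the Haversine triangle inequality together with $d_{v_a,v_m}\leq D_{a,m}\leq \Gamma$, we obtain $d_{v_a,v_k} \leq d_{v_a,v_m} + d_{v_m,v_k} \leq \Gamma + r_{\mathrm{obf}}$, so $v_k \in \mathcal{C}(v_a,\Gamma+r_{\mathrm{obf}})$.

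Taking the larger of the two radii, both sets are contained in $\mathcal{C}(v_a,\max\{2\Gamma,\Gamma+r_{\mathrm{obf}}\})$, which is the claim. There is no real obstacle here; the only subtlety worth stating explicitly is the bound $d\leq D$, which must be justified because the LR set is defined via the graph-induced path distance while the circular region is defined via Haversine distance. Everything else reduces to two triangle-inequality computations.
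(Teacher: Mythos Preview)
Your proposal is correct and follows essentially the same approach as the paper: both rely on the bridge $d_{v_i,v_j}\le D_{v_i,v_j}$ together with the triangle inequality, and both use $v_a\in\mathcal{N}_m$ to get the $\Gamma$ bound on its distance to $v_m$. The only cosmetic difference is that the paper first bounds $d_{v_m,v_j}\le\max\{\Gamma,r_{\mathrm{obf}}\}$ for all $v_j\in\mathcal{N}_m\cup\mathcal{O}_m$ and then applies a single Haversine triangle inequality through $v_m$, whereas you treat the two sets separately and, for $\mathcal{N}_m$, apply the triangle inequality at the path-distance level before passing to Haversine; both routes yield the same bound.
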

\vspace{-0.07in}

\DEL{
\begin{proof}
First, since $d_{v_i, v_j} \leq d_{v_i, v_j}$, and $d_{v_i, v_j} \leq \Gamma$, $\forall v_j\in \mathcal{N}_{m}$, we obtain that $d_{v_i, v_j} \leq \Gamma, ~ \forall v_j \in \mathcal{N}_{m}$. 

Also, according to Equ. (\ref{eq:O}), we have $d_{v_i, v_j} \leq r_{\mathrm{obf}}, ~ \forall v_j \in \mathcal{O}_{m}$, which implies that 
$d_{v_i, v_j} \leq \max\{\Gamma, r_{\mathrm{obf}}\}, ~ \forall v_j \in \mathcal{N}_{m} \cup \mathcal{O}_{m}$. $d_{v_i, v_a} \leq \Gamma$ because $v_a$ is selected within the circle $\mathcal{C}(v_a, \Gamma)$. Then, according to the triangle inequality, 
\begin{eqnarray}
% \nonumber 
d_{v_a, v_j} \leq d_{v_i, v_a} + d_{v_i, v_j}leq  \Gamma + \max\{\Gamma, r_{\mathrm{obf}}\} = \max\{2\Gamma, \Gamma + r_{\mathrm{obf}}\}, 
\end{eqnarray}
for each $v_j \in \mathcal{N}_{m} \cup \mathcal{O}_{m}$, 
indicating that $\mathcal{C}(v_a, \max\{2\Gamma, \Gamma+r_{\mathrm{obf}}\})$ covers both $\mathcal{N}_{m}$ and $\mathcal{O}_{m}$. 
\end{proof}}
\vspace{-0.00in}
Moreover, according to the requested range $\mathcal{C}(v_a, \max\{2\Gamma, \Gamma+r_{\mathrm{obf}}\})$ and how the user selects the location $v_a$, the server can only infer that the user's real location is in the circle $\mathcal{C}(v_a, \Gamma)$,  where $\Gamma > \gamma$, indicating that the user's location is well hidden from the server. \looseness = -1

% Then, we can obtain a cost reference matrix $\mathbf{B} = \left\{\beta_{i,k}\right\}_{N\times N}$. 
\vspace{-0.05in}
\subsection{Performance Analysis}
\label{subsec:perfanalysis}
In this section, we provide the theoretical analysis of the performance for the LR-Geo solution (Equ. (\ref{eq:OMGLRobj})--(\ref{eq:LR-OLMGexpoconstr})), including the \emph{privacy guarantee} in \textbf{Theorem \ref{thm:privacyguarantee}}, the lower bound and the upper bound of \emph{expected cost} in \textbf{Theorem  \ref{thm:lowerbounds}} and \textbf{Theorem \ref{thm:upperbounds}}, respectively. The detailed proofs of these theorems can be found in \textbf{Section \ref{sec:proofs} in Appendix}.

\vspace{-0.05in}
\subsubsection{Privacy guarantee} We first prove that the chosen obfuscated locations adhering to the exponential distribution constraints (Equ. (\ref{eq:expo})) meet Geo-Ind constraints across users even though their geo-obfuscation is calculated in a relatively independent manner.   
\begin{theorem}
\label{thm:privacyguarantee}
(Privacy guarantee) Given two locations, $v_i$ from user $n$'s LR location set $\mathcal{N}_n$ and $v_j$ from user $m$'s LR location set $\mathcal{N}_m$, if both locations satisfy the condition $q_{i,k} = q_{j,k} = 1$ (indicating that $z^{(n)}_{i,k}$ and $z^{(m)}_{j,k}$ satisfy the exponential distribution constraints), then their obfuscation distributions still satisfy the $(\epsilon, \gamma)$-Geo-Ind constraints, 
\vspace{-0.00in}
\begin{equation}
z^{(n)}_{i,k}  - e^{{\epsilon d_{v_i, v_j}}}  z^{(m)}_{j,k} \leq 0,~\forall v_k \in \mathcal{V}. 
\vspace{-0.10in}
\end{equation}
\end{theorem}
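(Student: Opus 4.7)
The plan is to exploit the fact that the exponential mechanism expressions in Equ. (\ref{eq:expo}) are written in terms of a \emph{globally shared} decision variable $y_k$, so that two different users' probabilities for the same obfuscated location $v_k$ differ only through a distance-dependent factor whose ratio is controlled by the triangle inequality.

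First, I would invoke the hypothesis $q_{i,k} = q_{j,k} = 1$ to write both $z^{(n)}_{i,k}$ and $z^{(m)}_{j,k}$ in the piecewise form of Equ. (\ref{eq:expo}), emphasizing that in the CLR-Geo formulation (Equ. (\ref{eq:OMGLRobj})--(\ref{eq:LR-OLMGexpoconstr})) the vector $\mathbf{y} = [y_1,\ldots,y_K]$ is a single global decision variable and therefore the same $y_k$ appears in both expressions. I would then split into cases according to whether $v_k$ lies in $\mathcal{O}_n$ and $\mathcal{O}_m$.

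The central case is $v_k \in \mathcal{O}_n \cap \mathcal{O}_m$, where
\begin{equation*}
\frac{z^{(n)}_{i,k}}{z^{(m)}_{j,k}} \;=\; \frac{y_k\, e^{-\epsilon d_{v_i,v_k}/2}}{y_k\, e^{-\epsilon d_{v_j,v_k}/2}} \;=\; e^{\,\epsilon(d_{v_j,v_k}-d_{v_i,v_k})/2}.
\end{equation*}
The triangle inequality $|d_{v_j,v_k}-d_{v_i,v_k}| \leq d_{v_i,v_j}$ then gives the ratio at most $e^{\epsilon d_{v_i,v_j}/2} \leq e^{\epsilon d_{v_i,v_j}}$, which yields exactly the Geo-Ind bound (in fact a strictly tighter one). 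The symmetric case $v_k \notin \mathcal{O}_n \cup \mathcal{O}_m$ is immediate because both probabilities equal $y_k e^{-\epsilon r_{\mathrm{obf}}/2}$, so the ratio is $1$.

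The hard part will be the asymmetric cases, e.g.\ $v_k \in \mathcal{O}_n$ but $v_k \notin \mathcal{O}_m$, where the ratio becomes $e^{\epsilon(r_{\mathrm{obf}} - d_{v_i,v_k})/2}$. Bounding this by $e^{\epsilon d_{v_i,v_j}}$ requires $r_{\mathrm{obf}} - d_{v_i,v_k} \leq 2 d_{v_i,v_j}$, which is not forced by the triangle inequality alone. Here I would exploit the geometry of the LR and obfuscation sets: since $v_k \notin \mathcal{O}_m$ means $d_{v_m,v_k} > r_{\mathrm{obf}}$, while $v_j \in \mathcal{N}_m$ implies $d_{v_m,v_j} \leq D_{v_m,v_j} \leq \Gamma$, one can relate $d_{v_j,v_k}$ and $r_{\mathrm{obf}}$ and argue that the constant value $\xi$ is dominated by $y_k e^{-\epsilon d_{v_j,v_k}/2}$ within the range where the constraint is active. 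I would also need to verify that in this asymmetric case the active inequality $z^{(n)}_{i,k} \leq e^{\epsilon d_{v_i,v_j}} z^{(m)}_{j,k}$ still holds by combining the exponential-decay bound on $z^{(n)}_{i,k}$ with the floor value on $z^{(m)}_{j,k}$, which is the subtle part of the argument.
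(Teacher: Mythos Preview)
Your overall strategy---case split on whether $v_k$ lies in the two obfuscation ranges, exploit the shared $y_k$, and apply the triangle inequality---is exactly what the paper does, and your treatment of the symmetric cases (both in, both out) is correct and matches the paper.

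The place where you go astray is the asymmetric case. You write that bounding $e^{\epsilon(r_{\mathrm{obf}}-d_{v_i,v_k})/2}$ by $e^{\epsilon d_{v_i,v_j}}$ ``is not forced by the triangle inequality alone'' and then reach for the LR-set geometry and the parameter $\Gamma$. This is unnecessary: the paper's proof never uses $\Gamma$. The observation you are missing is that the very hypothesis $v_k\notin\mathcal{O}_j$ means $d_{v_j,v_k}>r_{\mathrm{obf}}$, so the floor value satisfies
\[
z^{(m)}_{j,k}=y_k e^{-\epsilon r_{\mathrm{obf}}/2}\;>\;y_k e^{-\epsilon d_{v_j,v_k}/2}.
\]
In other words, clipping to $\xi$ only \emph{raises} $z^{(m)}_{j,k}$ above its pure exponential value, so
\[
z^{(n)}_{i,k}-e^{\epsilon d_{v_i,v_j}}z^{(m)}_{j,k}
\;<\;
y_k e^{-\epsilon d_{v_i,v_k}/2}-e^{\epsilon d_{v_i,v_j}}\,y_k e^{-\epsilon d_{v_j,v_k}/2},
\]
and the right-hand side is exactly the expression from your ``central case'', already shown $\le 0$ by the triangle inequality $d_{v_j,v_k}-d_{v_i,v_j}\le d_{v_i,v_k}$. (The reverse asymmetric subcase, $v_k\notin\mathcal{O}_i$ but $v_k\in\mathcal{O}_j$, is even easier: then $z^{(n)}_{i,k}=y_k e^{-\epsilon r_{\mathrm{obf}}/2}\le y_k e^{-\epsilon d_{v_j,v_k}/2}=z^{(m)}_{j,k}$.) So drop the detour through $d_{v_m,v_k}$, $D_{v_m,v_j}$, and $\Gamma$; the argument closes with one more line of triangle inequality.
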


{\revisiondone 
\begin{proposition}
\label{prop:privacyguarantee}
For any LR location $v_i$ of a User $n$, i.e., $v_i \in \mathcal{N}_n$, let $\mathcal{A}_i$ denote the set of obfuscated locations of $v_i$ that adhere to the exponential distribution constraints. Then:
\begin{itemize}
    \item[(1)] For each pair of Users $n$ and $m$, the Geo-Ind constraint violation ratio is upper bounded by: 
    \begin{equation}
    \small 1- \frac{2\sum_{(v_i, v_j) \in \mathcal{N}_n \times \mathcal{N}_m} |\mathcal{A}_i \cap \mathcal{A}_j|+(|\mathcal{N}_n|^2+|\mathcal{N}_m|^2-|\mathcal{N}_n|-|\mathcal{N}_m|)|\mathcal{V}|}{(|\mathcal{N}_n|+|\mathcal{N}_m|)(|\mathcal{N}_n|+|\mathcal{N}_m|-1)|\mathcal{V}|}.
    \end{equation}
    \item[(2)] For all users, the Geo-Ind constraint violation ratio is upper bounded by:
    \begin{equation}
    \small 1-\frac{2\sum_{n=1}^M \sum_{m=n+1}^M \sum_{(v_i, v_j) \in \mathcal{N}_n \times \mathcal{N}_m} |\mathcal{A}_i \cap \mathcal{A}_j|+ \sum_{n=1}^M\left(|\mathcal{N}_n|^2 - |\mathcal{N}_n|\right)|\mathcal{V}|}{\sum_{n=1}^M|\mathcal{N}_n|\left(\sum_{n=1}^M|\mathcal{N}_n|-1\right)|\mathcal{V}|}.
    \end{equation}
\end{itemize}
\end{proposition}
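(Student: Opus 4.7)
The plan is to prove both bounds by a direct counting argument: partition the ``atomic'' directional Geo-Ind constraints across a group of users into those guaranteed to be satisfied by the LP formulation, those guaranteed by \textbf{Theorem \ref{thm:privacyguarantee}}, and the remainder which we conservatively treat as possibly violated. First I would enumerate the total number of Geo-Ind constraints for part~(1). Treating each element of $\mathcal{N}_n$ and $\mathcal{N}_m$ as a distinct location-user entry, the aggregate number of such entries is $|\mathcal{N}_n|+|\mathcal{N}_m|$. For every ordered pair of distinct entries and every $v_k \in \mathcal{V}$, there is one directional Geo-Ind inequality of the form $z^{(\cdot)}_{\cdot,k} \le e^{\epsilon d}\, z^{(\cdot)}_{\cdot,k}$, giving the denominator $(|\mathcal{N}_n|+|\mathcal{N}_m|)(|\mathcal{N}_n|+|\mathcal{N}_m|-1)|\mathcal{V}|$.

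Next I would split these constraints by whether the pair is within-user or cross-user. For a within-user pair, both endpoints lie in the same $\mathcal{N}_\ell$, so the LP constraint in Equ. (\ref{eq:LR-OMGconstr1}) solved by that user directly enforces Geo-Ind. The number of ordered within-user pairs is $|\mathcal{N}_n|(|\mathcal{N}_n|-1) + |\mathcal{N}_m|(|\mathcal{N}_m|-1) = |\mathcal{N}_n|^2 + |\mathcal{N}_m|^2 - |\mathcal{N}_n| - |\mathcal{N}_m|$, which matches the second numerator term after multiplication by $|\mathcal{V}|$. For a cross-user pair $(v_i, v_j) \in \mathcal{N}_n \times \mathcal{N}_m$ and obfuscated location $v_k$, \textbf{Theorem \ref{thm:privacyguarantee}} guarantees both directional inequalities whenever $v_k \in \mathcal{A}_i \cap \mathcal{A}_j$. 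Summing $2|\mathcal{A}_i \cap \mathcal{A}_j|$ over the $|\mathcal{N}_n|\cdot|\mathcal{N}_m|$ Cartesian pairs yields the first numerator term. Adding the two satisfied counts, dividing by the denominator, and subtracting from $1$ produces the violation-ratio upper bound in part~(1).

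For part~(2), the extension is to replace the two-user universe with the union across all $M$ users: the total number of location-user entries becomes $\sum_{n=1}^M |\mathcal{N}_n|$, the within-user satisfied count becomes $\sum_{n=1}^M(|\mathcal{N}_n|^2 - |\mathcal{N}_n|)|\mathcal{V}|$, and the cross-user satisfied contribution is $2\sum_{n<m}\sum_{(v_i,v_j)\in \mathcal{N}_n\times \mathcal{N}_m}|\mathcal{A}_i \cap \mathcal{A}_j|$, where summing over unordered user pairs $n<m$ avoids double-counting each unordered pair of entries. The main delicacy is keeping the bookkeeping clean: I need to distinguish ordered from unordered pairs (each unordered pair contributes two directional inequalities), attribute within-user constraints to the LP rather than to \textbf{Theorem \ref{thm:privacyguarantee}} so that the two satisfied counts do not overlap, and note that the result is an \emph{upper} bound on the violation ratio because cross-user constraints with $v_k \notin \mathcal{A}_i \cap \mathcal{A}_j$ are conservatively declared possibly violated even though many may hold in practice.
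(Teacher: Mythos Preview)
Your proposal is correct and follows essentially the same approach as the paper's proof: both are direct counting arguments that enumerate the total directional Geo-Ind constraints over all ordered pairs of location-user entries, split them into within-user pairs (guaranteed by the LP constraints in Equ.~(\ref{eq:LR-OMGconstr1})) and cross-user pairs (guaranteed by \textbf{Theorem~\ref{thm:privacyguarantee}} whenever $v_k \in \mathcal{A}_i \cap \mathcal{A}_j$), and then upper-bound the violation ratio by conservatively treating the remaining cross-user constraints as possibly violated. Your bookkeeping of ordered versus unordered pairs and the factor of~$2$ matches the paper's use of $2|\mathcal{V}|\binom{\cdot}{2}$ exactly.
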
}

\begin{remark}
According to Proposition \ref{prop:privacyguarantee}(1), the Geo-Ind violation ratio for each user $i$ also depends on the proportion of obfuscated locations from other users that follow the exponential mechanism. For example, there is no Geo-Ind guarantee for a user when no other user applies exponential mechanism. Hence, our system encourages users to cooperate to achieve low Geo-Ind violation ratio for both sides. While $(\epsilon, \gamma)$-Geo-Ind is not guaranteed between $z^{(n)}_{i,k}$ and $z^{(m)}_{j,k}$ if one of them doesn't follow the exponential mechanism, our experimental findings in Fig. \ref{fig:exp:GVR} demonstrate that unselected obfuscated locations still possess a high probability {\revisiondone (99.81\% on average)} of meeting Geo-Ind constraints.
\end{remark}

\subsubsection{Lower bound and upper bound of the expected cost} Given the LR location set $\mathcal{N}_m$, we formulate the following relaxed LR-Geo problem: 
\vspace{-0.1in}
\begin{eqnarray}
\label{eq:OMGLRobjguarantee}
\min && \textstyle 
\sum_{m=1}^M\mathcal{L}\left(\mathbf{Z}_{\mathcal{N}_m}\right) \\ \label{eq:OMGLRconstr}
\mbox{s.t. } && \mbox{Equ. (\ref{eq:LR-OMGconstr1})(\ref{eq:LR-OMGconstr2}) are satisfied for each $\mathcal{N}_m$}
\vspace{-0.00in}
\end{eqnarray}

% We let $\overline{\mathbf{Z}}_{\mathcal{N}_m} = \{\overline{z}_{j,k}\}_{(v_j, v_k)\in \mathcal{N}_m\times \mathcal{V}}$ denote the optimal solution of the problem formulated in Equ. (\ref{eq:OMGLRobjguarantee})--(\ref{eq:LR-OLMGexpoconstrguarantee}). 

\begin{theorem}
\label{thm:upperbounds}
(Upper bound of the minimum expected cost) 
Using the estimated cost coefficient  $\hat{c}_{v_i,v_k}$ in Equ. (\ref{eq:costestimate}), the solution of the CLR-Geo problem in Equ. (\ref{eq:OMGLRobj})--(\ref{eq:LR-OLMGexpoconstr}) offers an upper bound of the minimum expected cost.
\end{theorem}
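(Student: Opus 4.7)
The plan is to reduce the claim to a per-coefficient domination followed by a standard sandwich argument. Let $\mathbf{Z}^{\star}$ denote the optimizer of the CLR-Geo LP when its objective uses the \emph{true} coefficients $c_{v_i,v_k}$ from Equ. (\ref{eq:c}), and let $\hat{\mathbf{Z}}^{\star}$ denote the optimizer when the objective uses the \emph{estimated} coefficients $\hat{c}_{v_i,v_k}$ from Equ. (\ref{eq:costestimate}). The feasible region is identical in both cases, since the constraints Equ. (\ref{eq:LR-OMGconstr1})--(\ref{eq:LR-OMGconstr3}) together with Equ. (\ref{eq:LR-OLMGexpoconstr}) do not depend on the cost coefficients. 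Write $\mathcal{L}(\mathbf{Z})$ and $\hat{\mathcal{L}}(\mathbf{Z})$ for the objective evaluated with $c$ and $\hat{c}$, respectively. The minimum expected cost that we wish to upper bound is $\mathcal{L}(\mathbf{Z}^{\star})$, while the quantity produced by our algorithm is $\hat{\mathcal{L}}(\hat{\mathbf{Z}}^{\star})$.

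The first step is to invoke the per-entry domination $\hat{c}_{v_i,v_k} \geq c_{v_i,v_k}$ for every $(v_i,v_k) \in \mathcal{N}_m \times \mathcal{O}_m$; this is precisely Lemma \ref{lem:upperbounds} in the appendix (which follows from the triangle inequality applied to $|\mathrm{tc}_{v_i,v_l}-\mathrm{tc}_{v_k,v_l}|$ after replacing $v_i,v_k$ by their nearest grid approximations $\hat{v}_i,\hat{v}_k$, generating the additive slack $d_{v_i,\hat{v}_i}+d_{v_k,\hat{v}_k}$ that appears in Equ. (\ref{eq:costestimate})). Since every decision variable satisfies $z_{i,k} \geq 0$, this entrywise inequality lifts to an inequality of linear forms: $\hat{\mathcal{L}}(\mathbf{Z}) \geq \mathcal{L}(\mathbf{Z})$ for every feasible $\mathbf{Z}$.

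The second step chains two inequalities. By applying the lifted domination at $\mathbf{Z}=\hat{\mathbf{Z}}^{\star}$, we obtain $\hat{\mathcal{L}}(\hat{\mathbf{Z}}^{\star}) \geq \mathcal{L}(\hat{\mathbf{Z}}^{\star})$. Since $\hat{\mathbf{Z}}^{\star}$ is feasible and $\mathbf{Z}^{\star}$ minimizes $\mathcal{L}$ over the same feasible region, we also have $\mathcal{L}(\hat{\mathbf{Z}}^{\star}) \geq \mathcal{L}(\mathbf{Z}^{\star})$. Combining the two gives
\begin{equation}
\hat{\mathcal{L}}(\hat{\mathbf{Z}}^{\star}) \;\geq\; \mathcal{L}(\hat{\mathbf{Z}}^{\star}) \;\geq\; \mathcal{L}(\mathbf{Z}^{\star}),
\end{equation}
which is exactly the upper bound claimed in the theorem.

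The only nontrivial ingredient is Lemma \ref{lem:upperbounds}; everything else is monotonicity of nonnegative linear functionals and optimality of $\mathbf{Z}^{\star}$. The main obstacle I expect is making sure the domination $\hat{c}_{v_i,v_k} \geq c_{v_i,v_k}$ really does hold at every pair simultaneously (including pairs on the boundary where $v_k \notin \mathcal{O}_m$ and $z_{i,k}=\xi$ is forced), so that the lifted inequality $\hat{\mathcal{L}}(\mathbf{Z}) \geq \mathcal{L}(\mathbf{Z})$ applies uniformly across the whole decision vector rather than only on its ``active'' coordinates. Once that is verified the sandwich above closes the argument without further computation.
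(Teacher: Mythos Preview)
Your proposal is correct and follows essentially the same route as the paper: invoke Lemma \ref{lem:upperbounds} to get entrywise domination $\hat{c}_{v_i,v_k}\ge c_{v_i,v_k}$, use nonnegativity of the decision variables to lift this to $\hat{\mathcal{L}}(\hat{\mathbf{Z}}^{\star})\ge\mathcal{L}(\hat{\mathbf{Z}}^{\star})$, and finish with optimality of $\mathbf{Z}^{\star}$ over the common feasible region. Your closing caveat about the boundary pairs $v_k\notin\mathcal{O}_m$ is a fair sanity check, but since those entries are pinned to the same constant $\xi$ in both LPs they contribute identical additive constants and do not disturb the sandwich argument.
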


Next, we define another cost estimation $\tilde{c}_{v_i,v_k}$ by 
\vspace{-0.05in}
\begin{equation}
\label{eq:costestimatelower}
\tilde{c}_{v_i,v_k} = p_i \left(\beta_{i,k} - d_{v_i, \hat{v}_i} - d_{v_k, \hat{v}_k}\right), 
\vspace{-0.05in}
\end{equation}
which gives a lower bound of the real $c_{v_i, v_k}$ (see \textbf{Lemma \ref{lem:lowerbounds} in Appendix}). 

\begin{theorem}
\label{thm:lowerbounds}
% (Lower bound of the minimum expected cost) 
Using the estimated cost in Equ. (\ref{eq:costestimatelower}), the solution of the relaxed LR-Geo problem in Equ. (\ref{eq:OMGLRobjguarantee})--(\ref{eq:OMGLRconstr}) offers a lower bound of the minimum expected cost. \looseness = -1
\end{theorem}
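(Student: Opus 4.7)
The plan is to establish the bound by exploiting two monotonicity facts that jointly push the objective downward: (i) the estimated cost $\tilde{c}_{v_i,v_k}$ from Equ. (\ref{eq:costestimatelower}) pointwise underestimates the true cost $c_{v_i,v_k}$ (this is precisely \textbf{Lemma \ref{lem:lowerbounds}} in the Appendix), and (ii) the relaxed LR-Geo problem has a \emph{strictly larger} feasible set than the CLR-Geo problem, because it keeps only the Geo-Ind constraints in Equ. (\ref{eq:LR-OMGconstr1}) and the unit-measure constraints in Equ. (\ref{eq:LR-OMGconstr2}), while dropping the exponential-mechanism equality in Equ. (\ref{eq:expo}). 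Since both effects can only decrease the minimum, the optimal value of the relaxed problem under $\tilde{c}$ must lie at or below the minimum expected cost.

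To make this precise, I would first let $\mathbf{Z}^{\star} = \{\mathbf{Z}^{\star}_{\mathcal{N}_m}\}_{m=1}^M$ denote any optimizer of the unrelaxed CLR-Geo problem (Equ. (\ref{eq:OMGLRobj})--(\ref{eq:LR-OLMGexpoconstr})) under the \emph{true} cost coefficients $c_{v_i,v_k}$, and let $L^{\star} = \sum_{m=1}^M\sum_{v_i\in\mathcal{N}_m}\mathbf{c}_i\mathbf{z}_i^{\star\top}$ denote the corresponding minimum expected cost. Since $\mathbf{Z}^{\star}$ satisfies Equ. (\ref{eq:LR-OMGconstr1}) and Equ. (\ref{eq:LR-OMGconstr2}), it is a feasible point of the relaxed problem in Equ. (\ref{eq:OMGLRobjguarantee})--(\ref{eq:OMGLRconstr}). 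Evaluating the relaxed objective at $\mathbf{Z}^{\star}$ using $\tilde{c}$ and comparing term by term, the nonnegativity $z_{i,k}^{\star}\geq 0$ together with $\tilde{c}_{v_i,v_k}\leq c_{v_i,v_k}$ yields
\begin{equation}
\sum_{m=1}^M\sum_{v_i\in\mathcal{N}_m}\tilde{\mathbf{c}}_i\mathbf{z}_i^{\star\top} \;\leq\; \sum_{m=1}^M\sum_{v_i\in\mathcal{N}_m}\mathbf{c}_i\mathbf{z}_i^{\star\top} \;=\; L^{\star}.
\end{equation}
Denoting the optimal value of the relaxed problem under $\tilde{c}$ by $\tilde{L}^{\star}$, optimality gives $\tilde{L}^{\star}\leq \sum_{m,v_i}\tilde{\mathbf{c}}_i\mathbf{z}_i^{\star\top}\leq L^{\star}$, which is the claimed lower bound.

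The main obstacle I expect is not in this outer argument (which is just two applications of monotonicity), but in cleanly establishing the pointwise inequality $\tilde{c}_{v_i,v_k}\leq c_{v_i,v_k}$ that \textbf{Lemma \ref{lem:lowerbounds}} encapsulates. The delicate step is that $\beta_{i,k}$ is computed over the surrogate location set $\hat{\mathcal{V}}$ and uses the discretized travel-cost gaps $\delta_{\hat{v}_i,\hat{v}_k,\hat{v}_l}$, whereas $c_{v_i,v_k}$ aggregates the true gaps $\delta_{v_i,v_k,v_l}$ against the actual target distribution $\mathbf{q}$. Bridging the two requires a reverse triangle-inequality argument on $\left|\mathrm{tc}_{v_i,v_l}-\mathrm{tc}_{v_k,v_l}\right|$ in Equ. (\ref{eq:deltad}) to control the perturbation induced by snapping $v_i\mapsto\hat{v}_i$ and $v_k\mapsto\hat{v}_k$, producing a uniform-in-$v_l$ bound of the form $c_{v_i,v_k}\geq p_i(\beta_{i,k}-d_{v_i,\hat{v}_i}-d_{v_k,\hat{v}_k})=\tilde{c}_{v_i,v_k}$. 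Once this lemma is in hand, the theorem follows immediately from the two-fold relaxation above.
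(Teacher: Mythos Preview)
Your proposal is correct and follows essentially the same approach as the paper: both arguments chain together the feasibility of the CLR-Geo optimizer $\mathbf{Z}^{\star}$ in the relaxed problem with the pointwise inequality $\tilde{c}_{v_i,v_k}\leq c_{v_i,v_k}$ from \textbf{Lemma \ref{lem:lowerbounds}}, and the paper likewise proves that lemma via a triangle-inequality bound on the snapping errors $d_{v_i,\hat{v}_i}$ and $d_{v_k,\hat{v}_k}$. The only cosmetic difference is that the paper applies the two monotonicity steps in the opposite order (first relaxation, then cost replacement), which of course yields the same chain $\tilde{L}^{\star}\leq \sum \tilde{c}\,z^{\star}\leq \sum c\,z^{\star}=L^{\star}$.
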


\subsection{Discussion of Potential Inference Using Estimated Cost Matrix}
\label{subsec:discussionthreats}

In this part, we illustrate that it is hard to infer the locations in $\mathcal{N}_{m}$ and $\mathcal{O}_{m}$ using the estimated cost matrix $\hat{\mathbf{C}}_{\mathcal{N}_m, \mathcal{O}_m}$. 

\begin{figure}[t]
\centering
\hspace{-0.0in}
\begin{minipage}{0.29\textwidth}
\centering
  \subfigure[A potential inference attack carried out by an attacker using $\hat{\mathbf{C}}_{\mathcal{N}_m, \mathcal{O}_m}$]{
\includegraphics[width=1.00\textwidth]{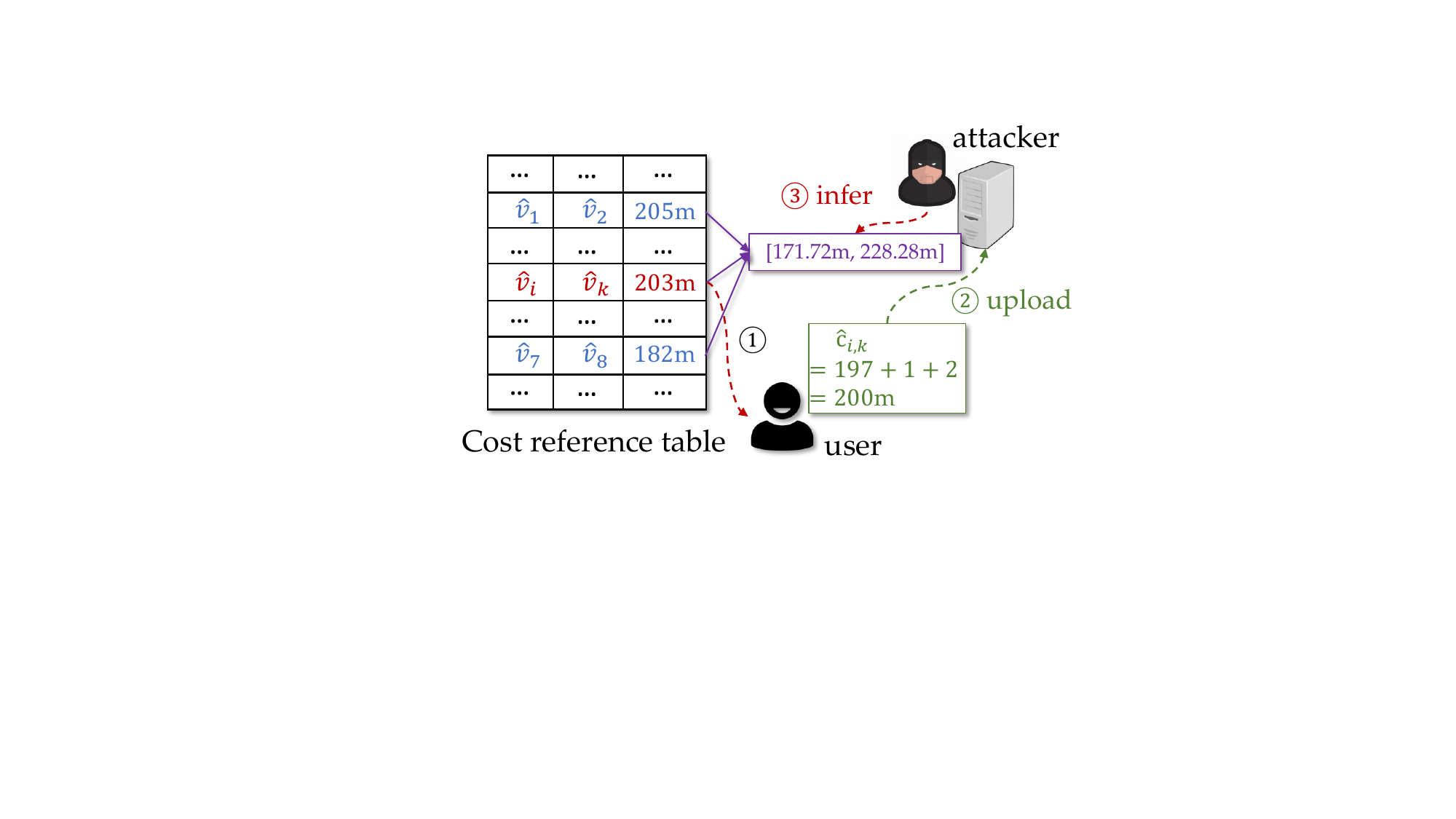}}
\end{minipage}
\hspace{0.05in}
\begin{minipage}{0.147\textwidth}
\centering
  \subfigure[Example in experiment.]{
\includegraphics[width=1.00\textwidth]{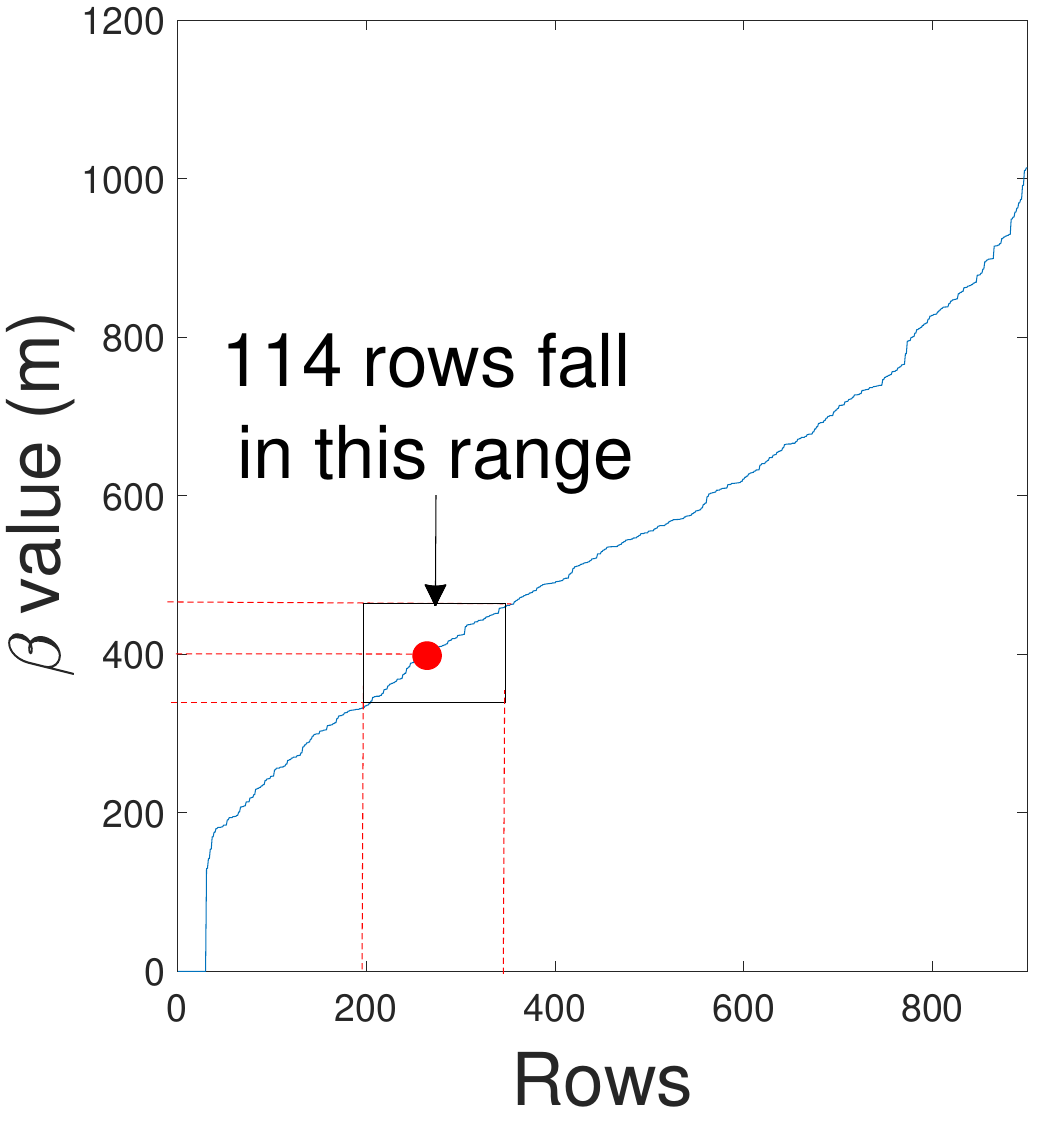}}
\end{minipage}
\vspace{-0.15in}
\caption{Potential inference attack using estimated cost matrix $\hat{\mathbf{C}}_{\mathcal{N}_m, \mathcal{O}_m}$ and the cost reference table. }
\label{fig:costreferenceTableattack}
\vspace{-0.05in}
\end{figure}

Fig. \ref{fig:costreferenceTableattack}(a) gives an example, where the user calculates the cost coefficient $\hat{c}_{v_i,v_k} = \beta_{i,k} + d_{v_i, \hat{v}_i} + d_{v_k, \hat{v}_k} = 197 + 1 + 2 = 200$m (\textcircled{1}), and uploads $\hat{c}_{v_i,v_k}$ to the server (\textcircled{2}). After receiving $\hat{c}_{v_i,v_k}$, a potential attack by the attacker is to find the corresponding $\beta$ value in the cost reference table, of which the estimated cost coefficient by a user can be possibly 200m according to Equ. (\ref{eq:costestimate}). 

Note that both $d_{v_i, \hat{v}_i}$ and $d_{v_k, \hat{v}_k}$ in Equ. (\ref{eq:costestimate}) are unknown by the server, while the server can drive the maximum possible value of $d_{v_i, \hat{v}_i}$ and $d_{v_k, \hat{v}_k}$, denoted by $\delta_{\max}$ (e.g. $\delta_{\max} = 28.28$ in Fig. \ref{fig:costreferenceTableattack}(a)), based on the distribution of $\hat{\mathcal{V}}$. In this case, the server can derive that the matched $\beta$ value is in the interval $[\hat{c}_{v_i,v_k}-\delta_{\max}, \hat{c}_{v_i,v_k}+\delta_{\max}] = [171.72, 228.28]$ (\textcircled{3}), which might cover other $\beta$ values in the cost reference table, like 205m and 182m in Fig. \ref{fig:costreferenceTableattack}(a). In this case, the attacker cannot identify which $\beta$ is the true $\beta_{i,k}$ in the interval, and the more $\beta$ values fall in the interval, the more difficult for the attacker to find the true $\beta_{i,k}$. 

{\revisiondone Fig. \ref{fig:costreferenceTableattack}(b) gives another example of how many $\beta$ values can possibly match an estimated cost coefficient using the real world map information (more details can be found in our experiment in Section \ref{sec:performance}). In this example, the server creates a cost reference table covering 900 locations in $\hat{\mathcal{V}}$ by a grid map with each cell size equal to 100m. The maximum distance from a location in $\mathcal{N}_{m}$ and $\mathcal{O}_{m}$ to its nearest location in $\hat{\mathcal{V}}$ is 70.7m. Given an estimated cost coefficient $\hat{c}_A =$ 400m, its corresponding $\beta_A$ is in the interval [400m$-$70.7m, 400m$+$70.7m], where 114 $\beta$ values fall in this interval. % Even for $\hat{c}_B = 350$m, around which less $\beta$ values are distributed, the interval [350m$-$70.7m, 350m$+$70.7m] covers 32 $\beta$ values in cost reference table. 
On average, each cost coefficient is matched by 102.98 rows of the cost reference table. The more comprehensive experimental results can be found in Fig. \ref{fig:exp:threat} in \textbf{Section \ref{sec:performance}}.}

\vspace{-0.05in}
\section{Performance Evaluation}
\label{sec:performance}
\vspace{-0.00in}
In this section, we conduct a simulation using real-world map information to evaluate the performance of LR-Geo in terms of computation efficiency, privacy, and cost, with the comparison of several benchmarks \cite{Qiu-TMC2020, Andres-CCS2013, ImolaUAI2022}. Specifically, we focus on the application of vehicular spatial crowdsourcing \cite{Qiu-TMC2020}, such as Uber like platform \cite{Uber}, where vehicles need to physically travel to a disignated location to complete the task\footnote{The MATLAB source code of LR-Geo is available at: \url{https://github.com/chenxiunt/LocalRelevant_Geo-Obfuscation}}. 

We first introduce the settings of the experiment in \textbf{Section \ref{subsec:settings}}, and then evaluate the performance of different geo-obfuscation methods in \textbf{Section \ref{subsec:exp:computation} and Section \ref{subsec:exp:cost}}. 

\vspace{-0.15in}
\subsection{Settings}
\label{subsec:settings}
\vspace{-0.03in}
\subsubsection{Dataset} 

{\revisiondone 
We selected the city ``\emph{Rome, Italy}'' as the target region (the bounding area with coordinate $(lat=41.66, lon=12.24)$ as the south-west corner, and coordinate $(lat=42.10, lon=12.81)$ as the north-east corner). Similar to existing works \cite{Yu-NDSS2017, Bordenabe-CCS2014, Wang-WWW2017}, we approximate the location field by partitioning the entire target region into a $40 \times 40$ grid. Each grid cell represents a discrete location within the location set, and the distances between cells are calculated based on the travel distance between the centers of the cells. To calculate the travel distances, we retrieve the road map information of Rome, including both the node set and edge set, using OpenStreetMap \cite{openstreetmap}. We compute the shortest path distances between cell centers on the road map using Dijkstra's algorithm \cite{Algorithm}. Additionally, we assume a uniform distribution of targets. 
\looseness = -1 % and consider that vehicles' mobility is constrained by the road network.
}

\vspace{-0.03in}
\subsubsection{Benchmarks} We compare LR-Geo with the following benchmarks, which are all based on Geo-Ind: 
\newline (1) \emph{LP-based geo-obfuscation (labeled as ``LP'')} \cite{Qiu-TMC2020}: LP considers the network-constrained mobility features of the vehicles and employs LP formulated in Equ. (\ref{eq:OMGobj})(\ref{eq:OMGconstr}) to minimize the expected cost. 
\newline (2) \emph{Laplacian noise (labeled as ``Laplace'')} \cite{Andres-CCS2013}: Laplace adds a polar Laplacian noise $\phi$ to the real location, i.e., $v_i + \phi$ and approximate it by the closest location $v_k = \arg\min_{v\in \mathcal{V}}d_{v, v_i+\phi}$.   
\newline (3) \emph{Exponential mechanism (labeled as ``\emph{ExpMech}'')} \cite{Andres-CCS2013}: In ExpMech, the probability distribution of the obfuscated location of each real location $v_i$ follows a polar Laplace distribution $z_{i,k} \propto e^{-\epsilon c_{v_i, v_k}/2}$. 
% Note that both Laplace and ExpMech satisfy $\epsilon$-Geo-Ind  \cite{Andres-CCS2013} without using LP, and its time complexity is significantly lower than LP-based methods. However, both Laplace and ExpMech don't consider the mobility features of vehicles in the road network. % Additionally, it does not optimize the distribution of obfuscation locations to minimize quality loss. 
\vspace{0.00in}
\newline (4) \emph{``\emph{ConstOPTMech}'' or ``\emph{ConstOPT}''} \cite{ImolaUAI2022}: Like our approach, ConstOPT applies the exponential distribution constraint for a subset of the obfuscation probabilities and uses LP for the optimization of the remaining obfuscation probabilities, to balance the utility and scalability of the data perturbation method.
\vspace{-0.00in}
\newline (5) {\revisiondone \emph{``\emph{LR-Geo-F}''}: In addition to the four benchmarks mentioned above, we also consider LR-Geo with the neighbor threshold $\gamma$ set to an infinite value. In this case, the Geo-Ind Graph is fully connected, which aligns with existing works such as \cite{Bordenabe-CCS2014, Wang-WWW2017, Qiu-TMC2020}, which do not require Geo-Ind to be satisfied only between locations that are within a distance smaller than $\gamma$. We use "LR-Geo-F" to label LR-Geo with $\gamma = \infty$, where "F" stands for "fully connected Geo-Ind Graph".} \looseness = -1

\vspace{-0.03in}
\subsubsection{Metrics} We measure the following metrics to evaluate the performance of our method and the benchmarks:
\begin{itemize}
\item [(i)] \emph{Computation time}, which is defined as the amount of time to calculate an obfuscation matrix. The experiments are performed by a desktop with 13th Gen Intel Core i7 processor, 16 cores. We used the Matlab LP toolbox \texttt{linprog}, with the algorithm ``\texttt{dual-simplex}'' \cite{matlab} to solve LP. 
\item [(ii)] \emph{Expected cost} $\mathcal{L}(\mathbf{Z})$:  $\mathcal{L}(\mathbf{Z})$ is defined in Equ. (\ref{eq:overallcost}), meaning the expected estimation error of traveling cost caused by $\mathbf{Z}$. 
\item [(iii)] \emph{Geo-Ind violation (GV) ratio}, which is defined as the ratio: 
\begin{equation}
\label{eq:GVratio}
\frac{\mbox{\# of $(z_{i,k}, z_{j,k})$ violating Geo-Ind in Equ. (\ref{eq:Geo-Ind-general})}}{\mbox{\# of $(z_{i,k}, z_{j,k})$  that should satisfy Geo-Ind in Equ. (\ref{eq:Geo-Ind-general})}}. 
\end{equation}
The GV ratio reflects how the derived obfuscation matrix can achieve Geo-Ind. In the following experiment, by default, we set $\epsilon$ by 10.0km$^{-1}$, the cell size of the cost reference table by 0.1km, the LR distance threshold $\Gamma$ by 20km. 
\end{itemize}

\vspace{-0.15in}
\subsection{Computation Efficiency}
\label{subsec:exp:computation}

% In this part, we evaluate the computation time of our approach. 

\subsubsection{Comparison with the benchmarks} Table \ref{Tb:exp:computation} compares the computational times for LR-Geo against four benchmark methods, where the number of locations $K$ equals 100, 200, 300, and 400, respectively. The table reveals that \textbf{while LR-Geo has higher computational time compared to Laplace and ExpMech, it significantly outperforms both LP and ConstOPT in terms of efficiency}.

{\revisiondone Specifically, at $K = 200$, LR-Geo demonstrates a remarkable reduction in computation time, showing a decrease of 99.51\% and 98.12\% compared to LP and ConstOPT, respectively. For both LP and ConstOPT, computation times exceed the 1800-second threshold when $K \geq 300$. This enhanced efficiency of LR-Geo is due to its strategic approach of confining the set of locations under consideration to LR locations only. Conversely, the alternative LP-based methods evaluate every location within the targeted area, resulting in substantial computational overhead. 

Additionally, the computation time of LR-Geo is 87.71\% lower than that of LR-Geo-F, as LR-Geo-F imposes Geo-Ind constraints for all pairs of locations, not just those with a distance smaller than $\gamma$. This leads to a higher number of linear constraints in the LP formulation and, consequently, greater computational overhead. Nevertheless, the computation time of LR-Geo-F is still 99.26\% and 97.18\% lower than that of LP and ConstOPT at $K = 200$, respectively.}

In addition, both Laplace and ExpMech can attain lower computation times compared to LR-Geo. This efficiency stems from their methodology of selecting obfuscated locations based on predefined probability distributions - the Laplacian and exponential distributions, respectively - bypassing the need for LP, which in turn reduces the computation overhead. However, a notable drawback of these two methods is their inability to accurately estimate the cost caused by geo-obfuscation. This oversight results in an increased cost associated with geo-obfuscation, as the chosen obfuscated locations may lead to high traveling distances to the designated locations.

% Additionally, it does not optimize the distribution of obfuscation locations to minimize quality loss. 

\begin{table}[t]
\vspace{-0.00in}
\caption{Computation time (seconds) of different methods. Mean$\pm$1.96$\times$ standard deviation. }
\vspace{-0.1in}
\label{Tb:exp:computation}
\centering
% \footnotesize 
\small 
\begin{tabular}{ c|c|c|c|c}
%\cline{2-13}
\hline
\hline
\rowcolor{Gray}
\multicolumn{1}{ c  }{}
&\multicolumn{4}{ c }{Problem size}
\\
\cline{2-5}
% \rowcolor{Gray}
\multicolumn{1}{ c|  }{Methods}
&\multicolumn{1}{ |c| }{$K = 100$}
 & \multicolumn{1}{ |c| }{$K = 200$}&\multicolumn{1}{ |c }{$K = 300$}&\multicolumn{1}{ |c }{$K = 400$}
 \\ 
\hline
\hline
% \multicolumn{1}{ c|  }{RS} & \textbf{xxx$\pm$xx} & xxx$\pm$xx & xxx$\pm$xx & xxx$\pm$xx \\ 
\multicolumn{1}{ c|  }{LR-Geo} & 1.28$\pm$1.35 & 1.42$\pm$ 0.63 & 1.34$\pm$0.64
 & 1.20$\pm$0.85 \\ 
\multicolumn{1}{ c|  }{LR-Geo-F} & 1.63$\pm$1.00 & 2.14$\pm$ 0.71 & 2.42$\pm$0.85
 & 3.56$\pm$1.07 \\ 
\multicolumn{1}{ c|  }{LP} & 27.31$\pm$10.31 & 287.50$\pm$48.28 & $\geq 1800$  & $\geq 1800$ \\ 
\multicolumn{1}{ c|  }{ConstOPT} & 3.69$\pm$1.92 & 75.88$\pm$10.81 & $\geq 1800$ & $\geq 1800$\\ 
\multicolumn{1}{ c|  }{Laplace} & $\leq$0.005 & $\leq$0.005 & $\leq$0.005 & $\leq$0.005\\ 
\multicolumn{1}{ c|  }{ExpMech} & $\leq$0.005 & $\leq$0.005 & $\leq$0.005 & $\leq$0.005\\ 
\hline
\end{tabular}
\vspace{-0.00in}
\end{table}

\vspace{-0.10in}
\subsubsection{Scalability} Table \ref{Tb:exp:computation} illustrates that the computation time for all algorithms escalates as the size of the location set $K$ increases. Notably, \textbf{even when $K$ reaches 400, the average computation time for LR-Leo remains at a comparatively low figure, approximately 0.8--1.8 seconds}.

We expanded our examination of $K$ across a wider spectrum, from 100 to 1,600, and charted the computation times of LR-Geo in Fig. \ref{fig:exp:scalability}(a). This figure reveals that the computation time for LR-Geo is maintained at the same level with an increase in $K$, reaching up to 1.8 seconds. Moreover, Fig. \ref{fig:exp:scalability}(b) presents the computation times for LR-Geo as the number of users varies from 2 to 10. As expected, there is a noticeable rise in computation time corresponding to an increase in the number of users. This trend is attributed to the framework of Benders' decomposition (introduced in Section \ref{subsec:Benders}), where the server is tasked with generating a subproblem for each user. The increase in the number of subproblems heightens the probability of encountering at least one subproblem that fails to achieve optimal convergence swiftly, thereby prolonging the convergence time.

\begin{figure}[t]
\centering
\begin{minipage}{0.23\textwidth}
\centering
  \subfigure[Different location set sizes]{
\includegraphics[width=1.00\textwidth, height = 0.12\textheight]{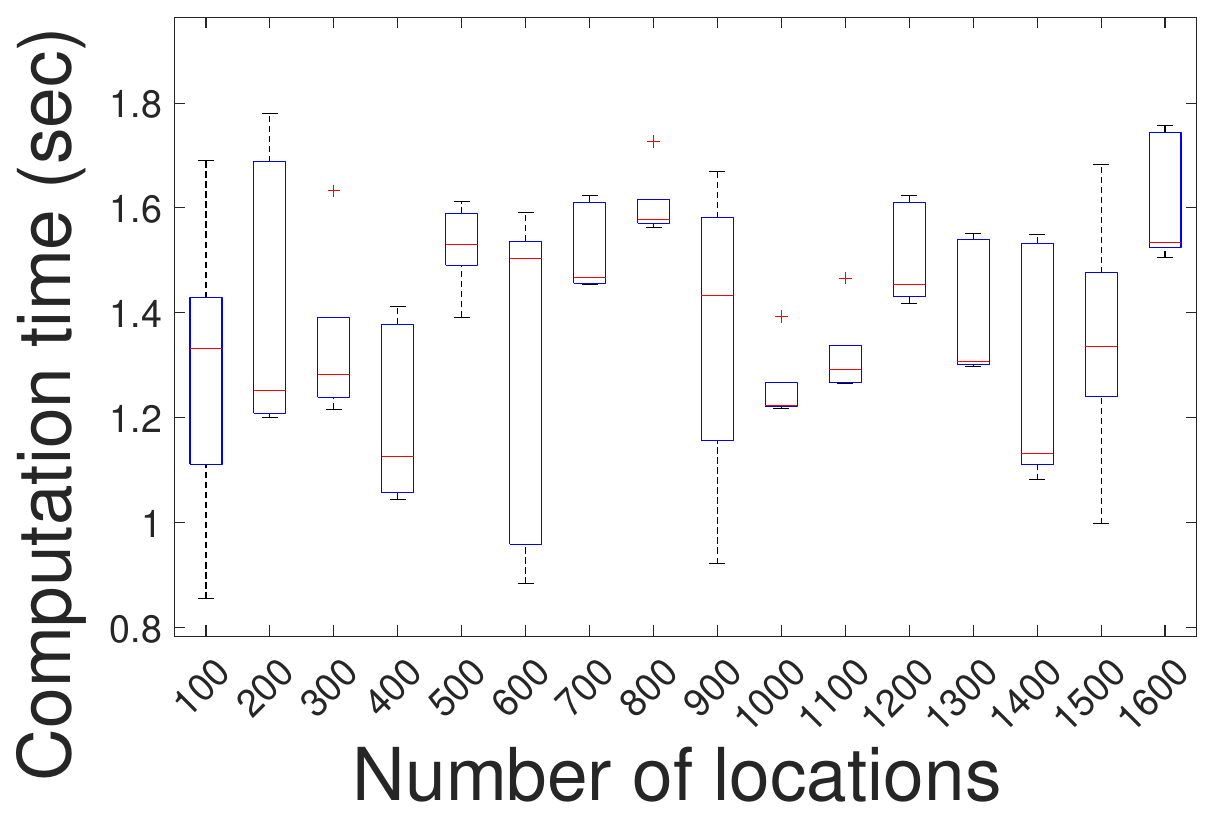}}
\label{}
\end{minipage}
\hspace{0.05in}
\begin{minipage}{0.23\textwidth}
\centering
  \subfigure[Different numbers of users]{
\includegraphics[width=1.00\textwidth, height = 0.12\textheight]{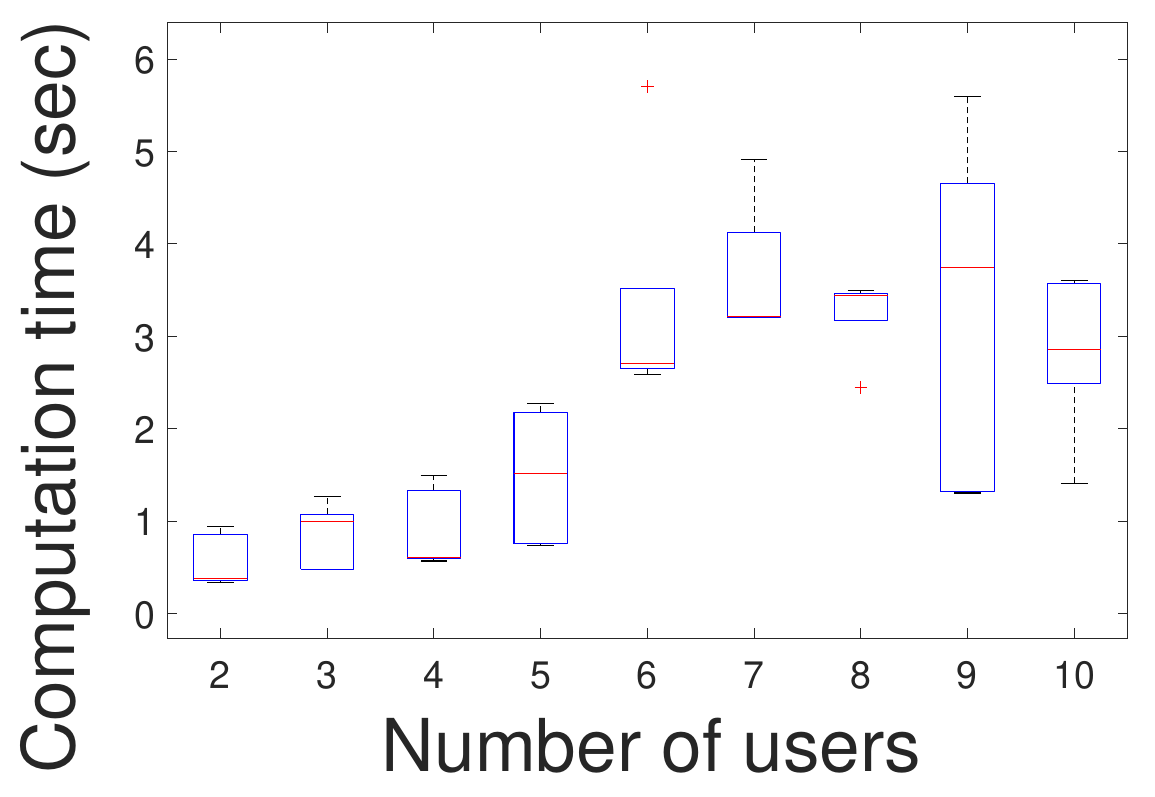}}
\label{}
\end{minipage}
\vspace{-0.15in}
\caption{Scalability.}
\label{fig:exp:scalability}
\vspace{-0.15in}
\end{figure}

% \newpage 
\DEL{
\begin{figure}[t]
\centering
\begin{minipage}{0.23\textwidth}
\centering
  \subfigure[Optimality loss]{
\includegraphics[width=1.00\textwidth, height = 0.12\textheight]{./fig/exp/BD_convergence_eg}}
\label{}
\end{minipage}
\hspace{0.05in}
\begin{minipage}{0.23\textwidth}
\centering
  \subfigure[Computation time]{
\includegraphics[width=1.00\textwidth, height = 0.12\textheight]{./fig/exp/BD_convergence_eg}}
\label{}
\end{minipage}
\vspace{-0.10in}
\caption{Computation time given different parameters.}
\label{fig:exp:convergence_parameters}
\vspace{-0.15in}
\end{figure}}

\vspace{-0.03in}

\subsection{Cost Measurement}
\label{subsec:exp:cost}

\vspace{-0.03in}
% In this part, we evaluate the expected cost of our approach. 

\subsubsection{Comparison with the benchmarks} 
\begin{table}[t]
\vspace{-0.00in}
\caption{Cost (kilometers) of different methods. Mean$\pm$1.96$\times$ standard deviation. }
\vspace{-0.1in}
\label{Tb:exp:cost}
\centering
% \footnotesize 
\small 
\begin{tabular}{ c|c|c|c|c}
%\cline{2-13}
\hline
\hline
\rowcolor{Gray}
\multicolumn{1}{ c  }{}
&\multicolumn{4}{ c }{Problem size}
\\
\cline{2-5}
\multicolumn{1}{ c|  }{Methods}
&\multicolumn{1}{ |c| }{$K = 100$}
 & \multicolumn{1}{ |c| }{$K = 200$}&\multicolumn{1}{ |c }{$K = 300$}&\multicolumn{1}{ |c }{$K = 400$}
 \\ 
\hline
\hline
% \multicolumn{1}{ c|  }{RS} & \textbf{xxx$\pm$xx} & xxx$\pm$xx & xxx$\pm$xx & xxx$\pm$xx \\ 
\multicolumn{1}{ c|  }{LR-Geo} & 0.36$\pm$0.04 & 0.36$\pm$0.07 & 0.35$\pm$0.03 & 0.37$\pm$0.05 \\ 
\multicolumn{1}{ c|  }{LR-Geo-F} & 0.38$\pm$0.03 & 0.36$\pm$0.05 & 0.38$\pm$0.06 & 0.40$\pm$0.05 \\ 
\multicolumn{1}{ c|  }{ConstOPT} & 0.35$\pm$0.03 & 0.34$\pm$0.04 & ------ & ------ \\ 
\multicolumn{1}{ c|  }{LP} & 0.33$\pm$0.02 & 0.33$\pm$0.03 & ------ & ------\\ 
\multicolumn{1}{ c|  }{Laplace} & 0.81$\pm$0.02 & 0.78$\pm$0.02 & 0.80$\pm$0.06 &0.79$\pm$0.01\\  
\multicolumn{1}{ c|  }{ExpMech} & 0.67$\pm$0.04 & 0.64$\pm$0.07 & 0.68$\pm$0.12 & 0.70$\pm$0.05\\ 
\hline
\multicolumn{1}{ c|  }{Lower bound} & 0.29$\pm$0.08 & 0.30$\pm$0.06 & 0.31$\pm$0.08 & 0.30$\pm$0.05 \\ 
\hline
\end{tabular}
\vspace{-0.12in}
\end{table}

Table \ref{Tb:exp:cost} compares the expected costs incurred by various algorithms for $K = 100, 200, 300, 400$. It is observed that LR-Geo significantly reduces the expected cost compared to Laplace and ExpMech. Specifically, LR-Geo's expected cost is, on average, 54.70\% and 46.64\% lower than that of Laplace and ExpMech, respectively. This efficiency is attributed to Laplace and ExpMech's reliance on Laplace/Exponential distributions for selecting obfuscated locations, which fails to accurately reflect the mobility constraints of vehicles within the road network, thereby elevating the cost. Furthermore, LR-Geo's cost performance is nearly on par with ConstOPT's for $K=100,200, 300$, yet it surpasses LP in cost at $K=100,200, 300$. Although LP is designed to achieve the global minimum cost by evaluating all potential locations within the target area, this approach is negated by its extensive computational requirements. As indicated in Table \ref{Tb:exp:computation}, LP struggles to compute obfuscation matrices within the 1800-second limit, highlighting a critical trade-off between cost efficiency and computational feasibility. {\revisiondone Finally, the cost of LR-Geo-F is 5.56\% higher than that of LR-Geo because LR-Geo-F imposes Geo-Ind constraints on all pairs of LR locations, resulting in a smaller feasible region for the obfuscation matrix and, consequently, higher utility loss.  \looseness = -1}

\subsubsection{Comparison with the theoretical bounds} To assess how close LR-Geo can achieve the optimal, we calculate a lower bound for the expected cost by solving the relaxed version of LR-Geo in Equ.  (\ref{eq:OMGLRobjguarantee})--(\ref{eq:OMGLRconstr}), with the findings presented in Table \ref{Tb:exp:computation}. Here, we introduce the \emph{approximation ratio}, defined as the quotient of the expected cost derived from LR-Geo over the calculated lower bound. A smaller approximation ratio indicates a closer proximity of LR-Geo's solution to the optimal. The results in the table indicate that, on average, \textbf{the approximation ratio for the expected cost of LR-Geo stands at 1.24, 1.2, 1.13, and 1.23 for $K = 100, 200, 300, 400$, respectively}. \looseness = -1

\begin{figure}[t]
\centering
\begin{minipage}{0.490\textwidth}
\centering
  \subfigure[Different $\Gamma$]{
\includegraphics[width=0.480\textwidth, height = 0.12\textheight]{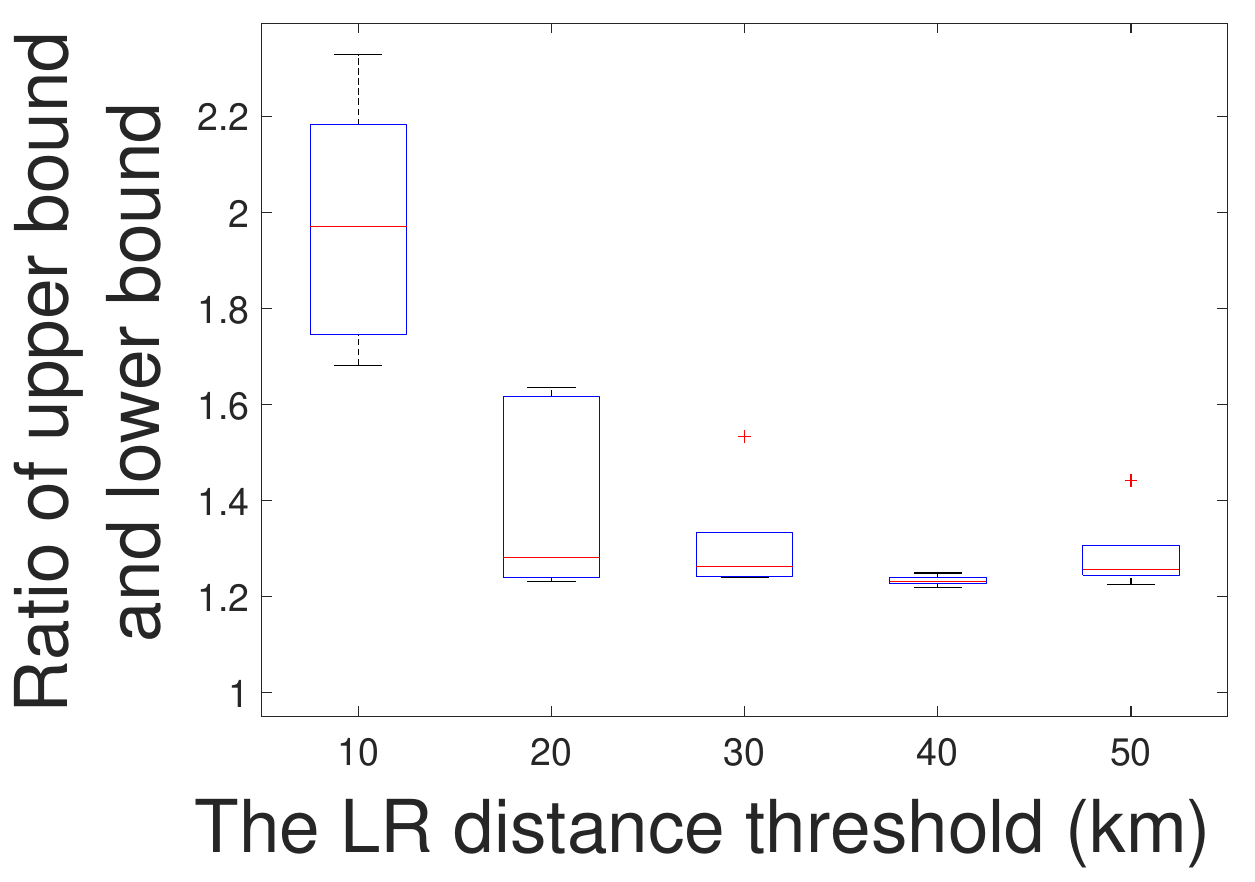}}
\vspace{-0.00in}
\centering
  \subfigure[Difference cell size]{
\includegraphics[width=0.480\textwidth, height = 0.12\textheight]{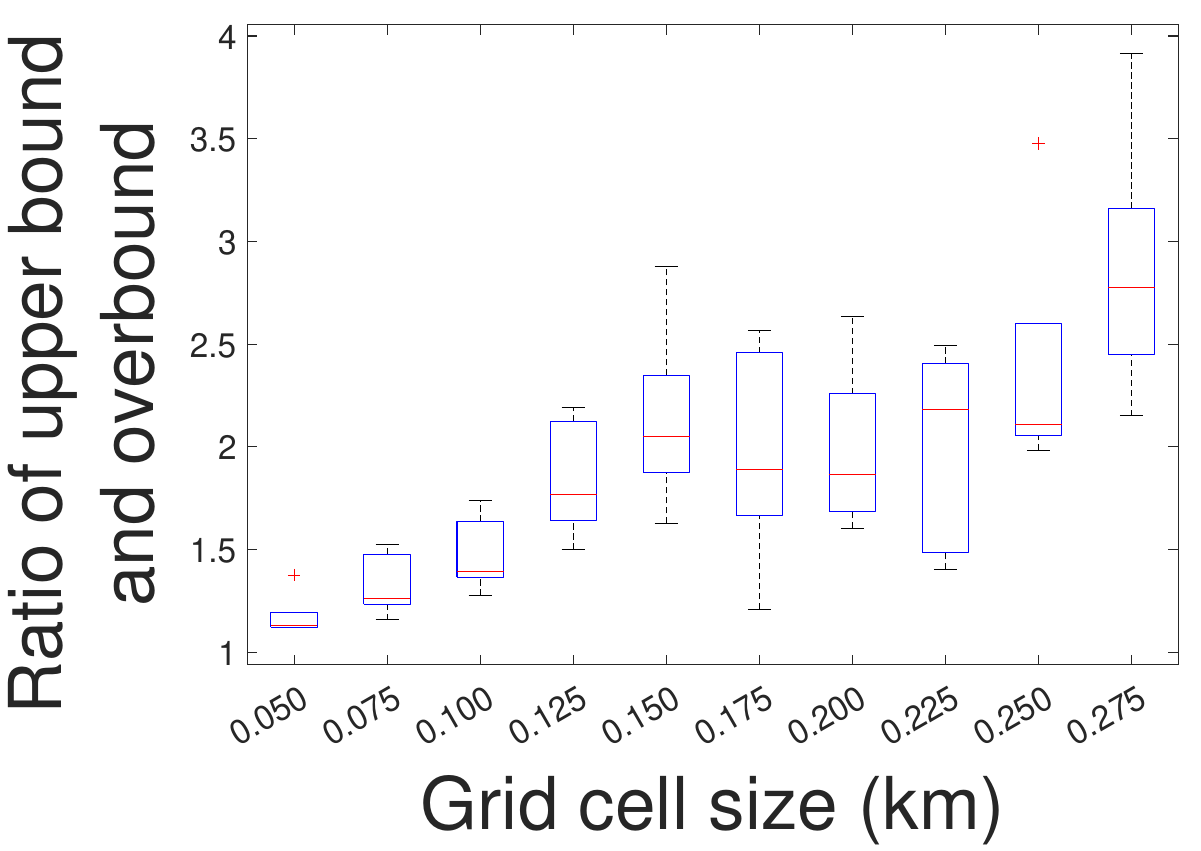}}
\vspace{-0.10in}
\caption{Ratio of upper bound and lower bound.}
\label{fig:exp:bounds}
\end{minipage}
\hspace{0.10in}
\vspace{-0.20in}
\end{figure}
 
It's important to recognize that LR-Geo does not attain the optimal solution since it operates with a constrained set of locations (LR locations) rather than the entire location set. Furthermore, LR-Geo does not utilize exact cost coefficients; instead, it estimates these coefficients using a cost reference table. % This approach inherently limits its ability to achieve the optimal solution due to the reliance on approximations and a reduced location set for its calculations. 
Thus, it is interesting to test how the LR-Geo's approximation ratio is impacted by 
\begin{itemize}
\item[(i)] the selection of the LR locations,  determined by the parameter $\Gamma$, i.e., the LR distance threshold, and 
\item[(ii)] the accuracy of the cost coefficient estimation, determined by the \textbf{cell size} of the grid map of the cost reference table. 
\end{itemize}

Fig. \ref{fig:exp:bounds}(a) shows the variation in the approximation ratio of LR-Geo as $\Gamma$ increases from 10km to 50km. As defined in Equ. (\ref{eq:Gamma}), $\Gamma$ influences the size of the LR location set $\mathcal{N}_{m}$, with a higher $\Gamma$ resulting in a larger $\mathcal{N}_{m}$. The figure indicates that the approximation ratio experiences a more pronounced decrease (averaging 34.72\%) as $\Gamma$ is increased from 10km to 20km. However, the decrease becomes marginal (only 0.79\%) when $\Gamma$ is further expanded from 20km to 50km. This observation suggests that enhancing $\Gamma$ contributes to the optimality of the obfuscation matrices, yet beyond a certain threshold (20km in this instance), additional increases in $\Gamma$ yield negligible improvements.

Fig. \ref{fig:exp:bounds}(b) shows the approximation ratio of LR-Geo as the cell size increases from 0.05km to 0.275km. As expected, the approximation ratio escalates with the increase in cell size, indicating that finer granularity in the location's representation within the cost reference table allows LR-Geo to more closely approximate the optimal solution. Specifically, the approximation ratio remains relatively stable and low for cell sizes up to 0.20km. Beyond this point, particularly when the cell size surpasses 0.20km, the ratio sees a marked increase. This trend underscores the importance of maintaining a cell size at or below 0.20km to optimize cost efficiency.

\DEL{

It's important to recognize that LR-Geo does not attain the optimal solution since it operates with a constrained set of locations (LR locations) rather than the entire location set. Furthermore, LR-Geo does not utilize exact cost coefficients; instead, it estimates these coefficients using a cost reference table. % This approach inherently limits its ability to achieve the optimal solution due to the reliance on approximations and a reduced location set for its calculations. 
Thus, it is interesting to test how the LR-Geo's approximation ratio is impacted by 
\begin{itemize}
\item[(i)] the selection of the LR locations,  determined by the parameter $\Gamma$, i.e., the LR distance threshold, and 
\item[(ii)] the accuracy of the cost coefficient estimation, determined by the \textbf{cell size} of the grid map of the cost reference table. 
\end{itemize}

Fig. \ref{fig:exp:bounds}(a) shows the variation in the approximation ratio of LR-Geo as $\Gamma$ increases from 10km to 50km. As defined in Equ. (\ref{eq:Gamma}), $\Gamma$ influences the size of the LR location set $\mathcal{N}_{m}$, with a higher $\Gamma$ resulting in a larger $\mathcal{N}_{m}$. The figure indicates that the approximation ratio experiences a more pronounced decrease (averaging 4.14\%) as $\Gamma$ is increased from 10km to 20km. However, the decrease becomes marginal (only 2.05\%) when $\Gamma$ is further expanded from 20km to 50km. This observation suggests that enhancing $\Gamma$ contributes to the optimality of the obfuscation matrices, yet beyond a certain threshold (20km in this instance), additional increases in $\Gamma$ yield negligible improvements.

Fig. \ref{fig:exp:bounds}(b) shows the approximation ratio of LR-Geo as the cell size increases from 0.05km to 0.25km. As expected, the approximation ratio escalates with the increase in cell size, indicating that finer granularity in the location's representation within the cost reference table allows LR-Geo to more closely approximate the optimal solution. Specifically, the approximation ratio remains relatively stable and low for cell sizes up to 0.15km. Beyond this point, particularly when the cell size surpasses 0.175km, the ratio sees a marked increase. This trend underscores the importance of maintaining a cell size at or below 0.15km to optimize cost efficiency.}

\begin{figure}[t]
\centering
\DEL{
\begin{minipage}{0.490\textwidth}
\centering
  \subfigure[Different $\Gamma$]{
\includegraphics[width=0.480\textwidth, height = 0.12\textheight]{./fig/exp/approxratio_Gamma}}
\vspace{-0.00in}
\centering
  \subfigure[Difference cell size]{
\includegraphics[width=0.480\textwidth, height = 0.12\textheight]{./fig/exp/approxratio_cellsize}}
\vspace{-0.2in}
\caption{Ratio of upper bound and lower bound.}
\label{fig:exp:bounds}
\end{minipage}
\hspace{0.00in}}
\begin{minipage}{0.230\textwidth}
\centering
  \subfigure{
\includegraphics[width=1.00\textwidth, height = 0.12\textheight]{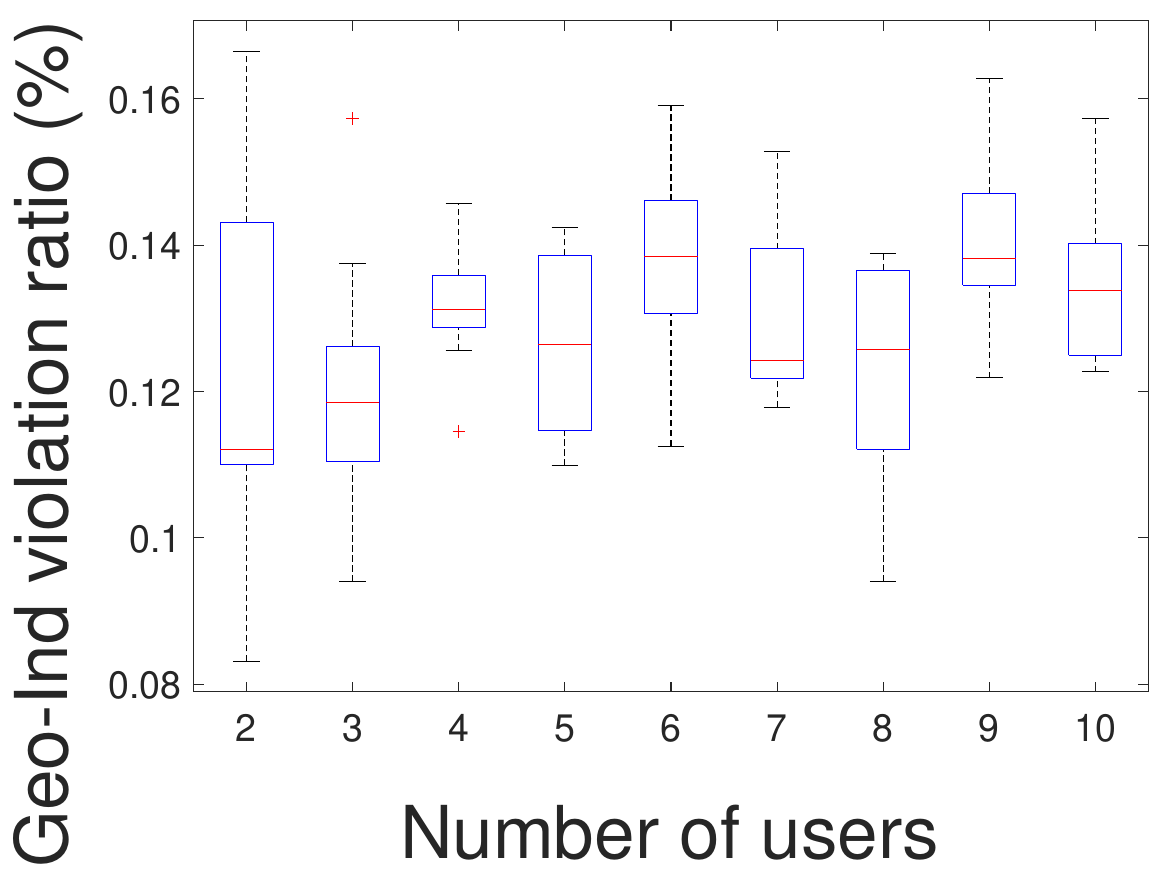}}
\vspace{-0.15in}
\caption{Geo-Ind violation ratio.}
\label{fig:exp:GVR}
\end{minipage}
\hspace{0.05in}
\begin{minipage}{0.230\textwidth}
\centering
  \subfigure{
\includegraphics[width=1.00\textwidth, height = 0.12\textheight]{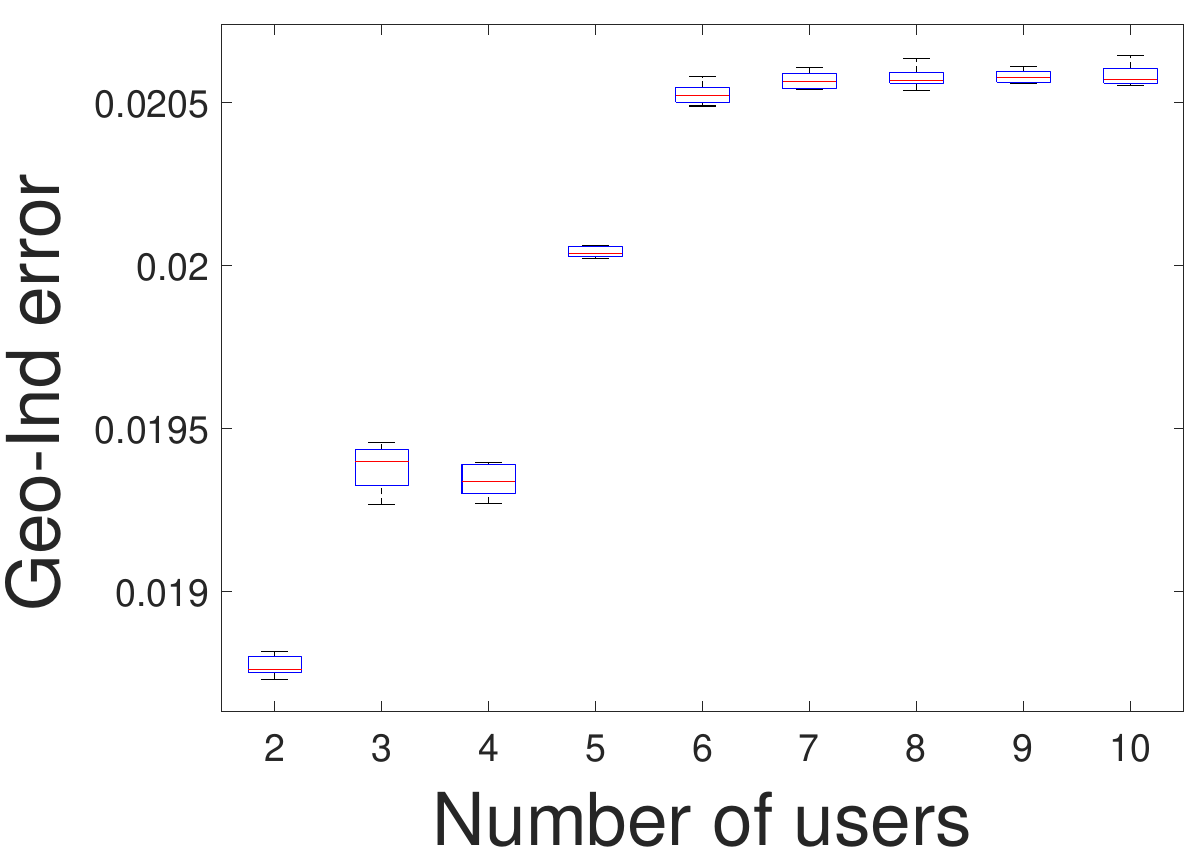}}
\vspace{-0.15in}
\caption{Geo-Ind violation error.}
\label{fig:exp:violation_error}
\label{}
\end{minipage}
\begin{minipage}{0.460\textwidth}
\centering
  \subfigure{
\includegraphics[width=0.50\textwidth, height = 0.12\textheight]{./fig/exp/violation_error}}
\vspace{-0.15in}
\caption{Threat using cost coefficient matrices.}
\label{fig:exp:threat}
\label{}
\end{minipage}
\vspace{-0.15in}
\end{figure}

\DEL{
\begin{figure}[t]
\centering
\begin{minipage}{0.23\textwidth}
\centering
  \subfigure{
\includegraphics[width=1.00\textwidth, height = 0.13\textheight]{./fig/exp/violation_error}}
\end{minipage}
\vspace{-0.10in}
\caption{Geo-Ind violation error.}
\label{fig:exp:violation_error}
\vspace{-0.10in}
\end{figure}}

\vspace{-0.10in}
\subsection{Privacy Measure}
\label{subsec:exp:privacy}
\vspace{-0.00in}
In LR-Geo, the computation of obfuscation matrices for each user is performed independently. While the obfuscation probabilities that conform to the constraints of the exponential distribution (in Equ. (\ref{eq:expo})) meet the Geo-Ind privacy criterion, as substantiated by \textbf{Theorem \ref{thm:privacyguarantee}}, the remaining obfuscation probabilities do not guarantee Geo-Ind privacy. In this part, we examine the GV ratio as defined in Equ. (\ref{eq:GVratio}). Fig. \ref{fig:exp:GVR} shows the GV ratios for varying numbers of users. The figure reveals that the GV ratio remains exceptionally low, with a maximum of only 0.13\%, demonstrating that, in practice, the Geo-Ind constraints are exceedingly likely to be met across the various obfuscation matrices tailored for different users. \looseness = -1

% \section{Additional Experimental Results}

% \vspace{0.03in}
% \subsection{Geo-Ind Violation Error} 
{\revisiondone Another question is the extent to which obfuscation probabilities violate the Geo-Ind constraint during these violations. If a pair of obfuscation probabilities, $z_{i,k}$ and $z_{j,k}$, fails to satisfy the Geo-Ind constraint—i.e., $z_{i,k} > e^{\epsilon d_{v_i, v_j}}z_{j,k}$—we define their 'Geo-Ind error (GVE)' as:
\vspace{-0.15in}
\begin{equation}
\mathrm{GVE}(z_{i,k},z_{j,k}) = z_{i,k} - e^{\epsilon d_{v_i, v_j}}z_{j,k}. 
\vspace{-0.05in}
\end{equation}
A smaller $\mathrm{GVE}(z_{i,k},z_{j,k})$ indicates that the Geo-Ind constraint between $z_{i,k}$ and $z_{j,k}$ is violated to a lesser extent.

Using this definition, we measured the Geo-Ind error for all Geo-Ind violations as the number of users increased from 2 to 10. Figure \ref{fig:exp:violation_error} illustrates the distribution of these errors. The results show that the Geo-Ind error reaches up to 0.021 across all experiments, and this error tends to increase as the number of users grows. This is because Geo-Ind violations occur between users, and as the user count rises, the number of potential Geo-Ind violations increases, resulting in a greater likelihood of larger Geo-Ind errors. }

Finally, we investigate the potential risk associated with the upload of cost matrices, a concern discussed in \textbf{Section \ref{subsec:discussionthreats}}. We simulate a scenario where a user uploads 100 cost matrices. We analyze, for each cost coefficient, the number of rows in the cost reference table that can be mapped to that coefficient. Intuitively, a greater number of rows mapped to a specific uploaded coefficient suggests a broader range of potential real and obfuscated location pairs, thereby diminishing the risk of LR location set disclosure (noting that the real location is within the LR location set). Fig. \ref{fig:exp:threat} displays the number of rows mapped to the uploaded cost coefficients for various grid cell sizes. As anticipated, the quantity of rows corresponding to a given coefficient increases with the increase of the cell size, indicating an increase in ambiguity and a reduced risk of location inference. The figure also underscores the difficulty of deducing the real location from the uploaded cost coefficient, as, on average, each coefficient is matched by 102.98 rows, providing a significant degree of location privacy.

\begin{table*}[t]
\vspace{-0.00in}
\caption{Comparison of major related works ("PN" means "privacy notion". "$\gamma$" means "whether $\gamma$ is considered", "EIE" means "expected inference error", In the column "Location set size", "---" means “There is no limit to the size”).}
\vspace{-0.1in}
\label{Tb:reference}
\centering
\footnotesize 
%\small 
\begin{tabular}{ c|c|c|c|c|c|c|c}
%\cline{2-13}
\hline
\hline
\rowcolor{Gray}
\multicolumn{1}{ c  }{}
&\multicolumn{7}{ c }{Features}
\\
\cline{2-8}
\multicolumn{1}{ c|  }{Obfuscation methods}
&\multicolumn{1}{ |c| }{Lap.}
 & \multicolumn{1}{ |c }{Exp.}&\multicolumn{1}{ |c }{LP}&\multicolumn{1}{ |c }{$\gamma$}&\multicolumn{1}{ |c }{$K$} &\multicolumn{1}{ |c }{Privacy notion}  &\multicolumn{1}{ |c }{Utility loss definition} 
 \\ 
\hline
\hline
\multicolumn{1}{ c|  }{CCS 2012 \cite{Shokri-CCS2012}} &  & 
 & \checkmark &  & 30 & EIE  & Expected \& max distance between real and reported locations  \\ 
\multicolumn{1}{ c|  }{CCS 2013 \cite{Andres-CCS2013}} & \checkmark & 
 &  &\checkmark& --- & Geo-Ind& Expected distance between real and reported locations \\ 
\multicolumn{1}{ c|  }{CCS 2014 \cite{Bordenabe-CCS2014}} &  & 
& \checkmark &  & 50 & Geo-Ind & Expected distance between real and reported locations\\ 
\multicolumn{1}{ c|  }{PETS 2015 (Privacy game) \cite{Shokri-PoPETs2015}} &  & & \checkmark &  & 300 & EIE \& DP & Expected utility cost (depends on applications)\\ 
\multicolumn{1}{ c|  }{PETS 2015 (Elastic metric) \cite{Chatzikokolakis-PoPETs2015}} &  & \checkmark & &  & --- & Geo-Ind (generalized) & Expected distance between real and reported locations \\ 
\multicolumn{1}{ c|  }{ICDM 2016 \cite{Wang-CIDM2016}} &  &  & \checkmark &  & 57 & DP & Expected residual standard error\\ 
% \multicolumn{1}{ c|  }{PETS 2017 \cite{Al-Dhubhani-PETS2017}} & & & & \\ 
\multicolumn{1}{ c|  }{WWW 2017 \cite{Wang-WWW2017}} &  & 
 & \checkmark &  & 16 & Geo-Ind & Expected travel distance\\ 
\multicolumn{1}{ c|  }{NDSS 2017 \cite{Yu-NDSS2017}} &  & 
 & \checkmark &  & 50 & EIE \& Geo-Ind  & Expected distance between real and reported locations \\ 
 \multicolumn{1}{ c|  }{{\revisiondone CCS 2017 \cite{Oya-CCS2017}}} &  & 
 & \checkmark &  & {\revisiondone 25} & EIE \& Conditional entropy & Average \& worst-case quality loss (depends on applications) \\ 
 \multicolumn{1}{ c|  }{{\revisiondone PETS 2017 \cite{chatzikokolakis2017efficient}}} & \checkmark & 
 & \checkmark &  & --- & Geo-Ind & Expected quality loss (depends on applications)\\ 
% \multicolumn{1}{ c|  }{ICDCS 2019 \cite{Qiu-ICDCS2019}} & & \checkmark & & 300--400\\ 
\multicolumn{1}{ c|  }{TMC 2020 \cite{Qiu-TMC2020}} &  & 
& \checkmark &  & 300--400 & Geo-Ind & Expected difference between real and estimated travel distance\\ 
\multicolumn{1}{ c|  }{PETS 2020 \cite{Mendes-PETS2020}} &  \checkmark & 
& &  & --- & Geo-Ind & Expected distance between real and reported locations \\ 
\multicolumn{1}{ c|  }{SIGSPATIAL 2022 \cite{Qiu-SIGSPATIAL2022}} &  & & &  & 100 & Geo-Ind & Expected difference between real and estimated travel distance \\ 
\multicolumn{1}{ c|  }{UAI 2022 \cite{ImolaUAI2022}} &  & \checkmark & \checkmark
 &  & 400 & Metric DP & Expected utility cost (depends on applications) \\ 
\multicolumn{1}{ c|  }{EDBT 2023 \cite{Pappachan-EDBT2023}} &  &
 & \checkmark &  & 70 & Geo-Ind & Expected distance between real and reported locations\\ 
 \multicolumn{1}{ c|  }{EDBT 2024 \cite{Qiu-EDBT2024}} &  &
 & \checkmark &  & 300--400 & Geo-Ind & Expected distance between real and reported locations \\ 
\multicolumn{1}{ c|  }{Our work} &  & {\bf \checkmark} & {\bf \checkmark} &\checkmark & {\bf 1,600} &  Geo-Ind & Expected distance between real and reported locations \\ 
\hline
\end{tabular}
\vspace{-0.05in}
\end{table*}

\vspace{-0.05in}
\section{Related Works}
\label{sec:related}
\vspace{-0.00in}
The study of location privacy began nearly two decades ago with Gruteser and Grunwald's pioneering work \cite{Gruteser-MobiSys2003}, where they introduced the concept of \emph{location $k$-anonymity}. This idea has since evolved to include \emph{$l$-diversity}, which ensures a user's location is indistinguishable from $l-1$ other locations \cite{Yu-NDSS2017}. However, the $l$-diversity model simplifies the threat landscape by assuming all alternative locations are equally probable as the user's actual location from an attacker's perspective. This assumption renders it susceptible to a range of sophisticated inference attacks \cite{Andres-CCS2013, Yu-NDSS2017, Qiu-TMC2020}.

% \looseness = -1

In recent years, Andr{\'e}s et al. \cite{Andres-CCS2013} introduced a more applicable privacy criterion, \emph{Geo-Ind}, grounded in the established concept of \emph{differential privacy (DP)}. Following this work, a large body of location obfuscation strategies have been proposed, e.g., \cite{Mendes-PETS2020, Simon-EuroSP2019, Wang-WWW2017, Shokri-TOPS2017, Yu-NDSS2017, Xiao-CCS2015, Bordenabe-CCS2014, Qiu-CIKM2020, Qiu-TMC2020, Qiu-SIGSPATIAL2022, Al-Dhubhani-PETS2017, Wang-CIDM2016}. Andr{\'e}s et al., in their seminal work, not only proposed the Geo-Ind concept but also devised a method for achieving it by perturbing the actual location using a polar Laplacian distribution. Furthermore, as geo-obfuscation naturally introduces errors in the reported locations, thereby impacting the quality of LBS, a critical challenge addressed by several studies involves balancing the trade-off between service quality and privacy. For instance, within the constraints of Geo-Ind, Bordenabe et al. \cite{Bordenabe-CCS2014} first developed an optimization framework for geo-obfuscation aimed at minimizing individual user costs. Chatzikokolakis et al. \cite{Chatzikokolakis-PoPETs2015} introduced "privacy mass" for points of interest, determining the Geo-Ind privacy budget $\epsilon$ for a location based on the local characteristics of each area. Wang et al. \cite{Wang-WWW2017} addressed the collective cost incurred by users, proposing a privacy-preserving target assignment algorithm to reduce the total travel expense. \looseness = -1
{\revisiondone Chatzikokolakis et al. \cite{chatzikokolakis2017efficient} introduced a Bayesian remapping procedure to enhance the utility of geo-obfuscation, which can be applied to both infinite and finite location domains.}

The majority of existing works in geo-obfuscation employ an LP framework, which generally necessitates $O(|\mathcal{V}|^2)$ decision variables and $O(|\mathcal{V}||\mathcal{E}|)$ linear constraints \cite{Bordenabe-CCS2014, Qiu2024ijcai}, making the LP approach computationally intensive and challenging to implement on a large-scale LBS. Table \ref{Tb:reference} compares the related geo-obfuscation methods in different categories, including Laplacian noise (``Lap.''), the exponential mechanism (``Exp.''), and LP-based methods (``LP''). As the table indicates, the computational complexity of LP restricts most geo-obfuscation studies to handling at most 100 discrete locations. However, recent advancements \cite{Qiu-CIKM2020, Qiu-TMC2020} have expanded the capability of processing secret datasets to approximately 300 records by leveraging Dantzig-Wolfe decomposition and column generation techniques. These studies primarily target LP models with Geo-Ind constraints applied across all pairs of secret records, facilitating the initialization process for column generation but are less applicable to broader geo-obfuscation challenges that only necessitate constraints for adjacent locations. Other innovative approaches, such as \cite{ImolaUAI2022}, combine LP with the exponential mechanism to improve scalability, though this may lead to compromises in solution optimality. 
Given the time-sensitive natures of many LBS applications, existing geo-obfuscation methodologies are constrained to either low spatial resolution over large areas (for instance, \cite{Yu-NDSS2017} focuses on city-scale regions, discretizing the location field into a grid where each cell measures 766m by 766m), or to high resolution within smaller areas (as in \cite{Qiu-TMC2020}, which examines a small town with location points sampled every 500 square meters).

Compared to those existing works, LR-Geo introduced in this paper substantially lowers computational costs while maintaining a degree of optimality. This advancement facilitates the application of geo-obfuscation in large-scale LBS applications, enabling more accurate representations of locations.

\vspace{-0.05in}
\section{Conclusions}
\label{sec:conclude}
\vspace{-0.00in}

We proposed to reduce the computation cost of the geo-obfuscation calculation by shrinking its range to a set of more relevant locations. Considering that the reduced geo-obfuscation range can possibly disclose the user's real location, we designed a remote computing strategy to migrate the geo-obfuscation calculation to the server without disclosing the location set covered by geo-obfuscation. The experimental results have demonstrated the superiority of our method in terms of privacy, service quality, and time efficiency, with the comparison of the selected benchmarks.  \looseness = -1

We envision several promising directions to continue this research. Firstly, this paper focuses on locations that are evenly distributed. In the next phase, we will consider cases where the locations are not necessarily evenly distributed. Since varying location densities lead to different distance matrices, which may inadvertently reveal information about a user's actual location, we will conduct a formal analysis of the potential information leakage caused by these distance matrices.  
Secondly, our current work considers a homogeneous mobility model, where a single cost reference table graph is sufficient to describe users' traveling costs. In reality, the users might be heterogeneous, e.g., a mixture of pedestrians and vehicles, and even a single user's mobility can possibly switch between different models. Then, how to model the mobility features of heterogeneous users using multiple cost reference table graphs is another problem to address. 
Finally, %considering the diverse privacy/utility preferences of users, we will design geo-obfuscation strategies that allow users to customize their privacy budgets. Then, how to design policies to incentivize users to balance individual benefits and the collective benefit of all users is an important problem to address. Finally, 
leveraging \emph{reinforcement learning (RL)} could accelerate BD convergence by treating cut selection in Stage 2 as a parameterized stochastic policy. A trained RL model can identify an optimal sequence of cuts, % for fixed subproblem coefficients, 
eliminating the need for re-training with each new problem instance.

\vspace{-0.10in}
\begin{acks}
This research is partially supported by U.S. NSF grants CNS2136948 and CNS-2313866. The authors used ChatGPT 4o to revise the text in Sections \ref{sec:introduction}--\ref{sec:related} to correct typos, grammatical errors, and awkward phrasing. Work of Anna Squicciarini was
supported by (while serving at) the National Science Foundation. Any opinion, findings, and conclusions or recommendations expressed in this material are those of the author(s) and do not necessarily reflect the views of the National Science Foundation. 
\end{acks}

\clearpage

% \input{appendix}
% \newpage
% \input{appendixEDBT}

\newpage 
\appendix

\section*{Appendix}

\section{Math Notations}
\label{sec:notations}

\vspace{-0.00in}
\begin{table}[h]
\caption{Main notations and their descriptions}
\vspace{-0.00in}
\label{Tb:Notationmodel}
\centering
\normalsize 
\small 
\begin{tabular}{l l}
\hline
\hline
\rowcolor{Gray}
Symbol                  & Description \\
\hline
$\mathcal{V}$           & The discrete location set 
 $\mathcal{V} = \left\{v_1, ..., v_k\right\}$ \\ 
$\mathcal{G}$           & The Geo-Ind graph $\mathcal{G} = \left(\mathcal{V}, \mathcal{E}\right)$, where $\mathcal{V}$ and $\mathcal{E}$ are the \\ 
                        &  location set and the edge set of $\mathcal{G}$\\
% $i$                   & Location $i$ \\
% $v_{\mathrm{t}}$        & Task location $v_{\mathrm{t}} \in \mathcal{V}$ \\
% $e_{i,j}$               & Edge connecting $v_i$ and $v_j$ \\
% $d_{v_i, v_j}$               & Weight of $e_{i,j}$, i.e., the traveling cost from $i$ to $j$ \\
$d_{v_i, v_j}$               & The Haversine distance between $v_i$ and $v_j$ \\
$\mathbf{Z}$            & The obfuscation matrix $\mathbf{Z}$ \\
$z_{i,k}$               & The probability of selecting $v_k$ as the obfuscated location \\
                        &  given the real location $v_i$ \\ 
% $\mathbf{z}_k$            & The $k$th column of $\mathbf{Z}$, i.e.,  $\mathbf{z}_k = [z_{1,k}, ..., z_{K,k}]^{\top}$ \\
$\mathbf{z}_i$            & The obfuscation vector of $v_i$, i.e.,  $\mathbf{z}_i = [z_{i,1}, ..., z_{i,K}]$ \\
$\mathcal{C}\left(v, r\right)$           & The circle centered at $v$ with radius $r$ \\ 
$tc\left(v, u\right)$           & The travel cost from location $v$ to location $u$ \\ 
$c_{v_i, v_k}$   & The cost coefficient of $z_{i,k}$ in OMG \\ 
$\mathcal{N}_{m}$           & The LR location set of $v_i$\\ 
$\mathcal{O}_{m}$           & The obfuscated location set of $\mathcal{N}_{m}$\\ 
$\hat{\mathcal{V}}$           & The discrete location set covered by cost reference table \\ 
$\tilde{\mathcal{G}}$           & The cost reference table graph $\tilde{\mathcal{G}} = \left(\hat{\mathcal{V}}, \hat{\mathcal{E}}\right)$, where $\hat{\mathcal{V}}$ and  \\
& $\hat{\mathcal{E}}$ are the location set and the edge set of $\tilde{\mathcal{G}}$\\ 
$\beta_{i,k}$           & The expectation of the cost estimation error taken over   \\
& all possible destination locations \\ 
$\epsilon$           & The privacy budget of Geo-Ind\\ 
$\gamma$           & The neighbor threshold \\
$\Gamma$ & The LR distance threshold \\ 
% $X$                     & Random variable to denote real location \\ 
% $Y$                     & Random variable to denote obfuscated location \\ 
\hline
\end{tabular}
\normalsize
\vspace{-0.00in}
\end{table}

\subsection{Detailed Notations in Benders' Decomposition}
\label{sec:Bendersnotations}

\begin{itemize}
\item The coefficient matrices $\left[\mathbf{A}_{\mathcal{N}_m}^{\mathrm{GeoI}}, \mathbf{B}_{\mathcal{N}_m}^{\mathrm{GeoI}}\right]$ includes the Geo-Ind constraints between the obfuscation vectors of the locations in $\mathcal{N}_m$: 
\begin{eqnarray}
\label{eq:}
\scriptsize   
\nonumber && \left[\mathbf{A}_{\mathcal{N}_m}^{\mathrm{GeoI}}, \mathbf{B}_{\mathcal{N}_m}^{\mathrm{GeoI}}\right] \\ \nonumber 
&=& \scriptsize  
\left[\begin{array}{lcccr} 
\ddots & \cdots     & \cdots  & \cdots                              & \iddots \\ 
\cdots & 1   & \cdots  & -e^{{\epsilon d_{v_i,v_j}}}    & \cdots \\ 
\cdots & -e^{{\epsilon d_{v_i,v_j}}}    & \cdots  & 1    & \cdots \\
\iddots & \cdots     & \cdots  & \cdots                              & \ddots \\ \end{array}\right] 
\hspace{-0.06in}
\begin{array}{l}
\left\}
\begin{array}{l}
\forall v_i, v_j \in \mathcal{N}_{m} \\
\mbox{s.t.}~d_{v_i, v_j} \leq \gamma
\end{array}
\right.
\\
\end{array}
\end{eqnarray}
\item $\left[\mathbf{A}_{\mathcal{N}_m}^{\mathrm{unit}}, \mathbf{B}_{\mathcal{N}_m}^{\mathrm{unit}}\right]$ includes $|\mathcal{N}_{m}|$ rows, where each row corresponds to the unit measure constraint of the obfuscation vector $\mathbf{z}_i$ of location $v_i \in \mathcal{N}_{m}$. 
\item $\mathbf{b}_{\mathcal{N}_m}^{\mathrm{GeoI}}$ is an all-zeros vector, which corresponds to the right-hand side coefficients of the constraint matrix $\left[\mathbf{A}_{\mathcal{N}_m}^{\mathrm{GeoI}}, \mathbf{B}_{\mathcal{N}_m}^{\mathrm{GeoI}}\right]$ in the LP formulation.
\item $\mathbf{b}_{\mathcal{N}_m}^{\mathrm{unit}}$ is an all-ones vector, which corresponds to the right-hand side coefficients of the constraint matrix $\left[\mathbf{A}_{\mathcal{N}_m}^{\mathrm{unit}}, \mathbf{B}_{\mathcal{N}_m}^{\mathrm{unit}}\right]$ in the LP formulation.  
\end{itemize}

\DEL{
\begin{eqnarray}
\sum_{m=1}^M\mathcal{L}\left(\mathbf{Z}_{\mathcal{N}_m}\right) &=& \sum_{m=1}^M\sum_{v_i\in \mathcal{N}_m} \mathcal{L}\left(\mathbf{z}^a_i\right) + \sum_{m=1}^M\sum_{v_i\in \mathcal{N}_m} \mathcal{L}\left(\mathbf{z}^b_i\right) \\
&=& \sum_{m=1}^M\sum_{v_i\in \mathcal{N}_m} \sum_{v_k \in I(v_i)} c_{v_i, v_k}y_k e^{-\frac{\epsilon d_{v_i, v_k}}{2}} \\ 
&+& \sum_{m=1}^M\sum_{v_i\in \mathcal{N}_m} \mathcal{L}\left(\mathbf{z}^b_i\right)\\
&=& \sum_{k=1}^K \underbrace{\sum_{m=1}^M\sum_{v_i\in \mathcal{N}_m} \mathbf{1}_{v_k \in I(v_i)} c_{v_i, v_k} e^{-\frac{\epsilon d_{v_i, v_k}}{2}}}_{\mbox{Constant value $\alpha_k$}} y_k\\ 
&+& \sum_{m=1}^M\sum_{v_i\in \mathcal{N}_m} \mathcal{L}\left(\mathbf{z}^b_i\right)
\end{eqnarray}}

\vspace{-0.00in}
\section{Omitted Proofs}
\label{sec:proofs}
\subsection{Proof of Theorem \ref{thm:SPD}}
\begin{proof}
We let $\{v_{i}, v_{l_1}, v_{l_2}, ..., v_{l_{n-1}}, v_{l_n}, v_{j}\}$ represent the sequence of locations in the shortest path between $v_i$ and $v_j$. Therefore, $$D_{v_i, v_j} = d_{v_i, v_{l_1}} + \sum_{m=1}^{n-1} d_{v_{l_m}, v_{l_{m+1}}} + d_{v_{l_n}, v_j}.$$
Since each pair of adjacent locations is geo-indistinguishable, for each $v_k \in \mathcal{V}$, we have 
\begin{eqnarray}
&& \frac{z_{i,k}}{z_{l_1,k}} \leq e^{\epsilon d_{v_i, v_{l_1}}}, \\
&& \frac{z_{l_m,k}}{z_{l_{m+1}},k} \leq e^{\epsilon d_{v_{l_m}, v_{l_{m+1}}}}~(m = 1, ..., n-1), \\
&& \frac{z_{l_n,k}}{z_{j,k}} \leq e^{\epsilon d_{v_{l_n}, v_{j}}},
\end{eqnarray}
from which we can derive that 
\begin{eqnarray} 
\frac{z_{i,k}}{z_{j,k}} &=& \frac{z_{i,k}}{z_{l_1,k}} \prod_{m=1}^{n-1} \frac{z_{l_m,k}}{z_{l_{m+1}},k} \frac{z_{l_n,k}}{z_{j,k}} \\
&\leq& e^{\epsilon d_{v_i, v_{l_1}}} \prod_{m=1}^{n-1} e^{\epsilon d_{v_{l_m}, v_{l_{m+1}}}} e^{\epsilon d_{v_{l_n}, v_{j}}} \\  
&=& e^{\epsilon \left(d_{v_i, v_{l_1}} + \sum_{m=1}^{n-1} d_{v_{l_m}, v_{l_{m+1}}} + d_{v_{l_n}, v_j}\right)}  \\
&=& e^{\epsilon D_{v_i, v_j}}. 
\end{eqnarray}
The proof is completed. 
\end{proof}

\subsection{Proof of Proposition \ref{prop:cost reference tablerange}} 
\begin{proof}
First, since the Haversine distance between $v_m$ and $v_j$ should be no larger than their path distance in the Geo-Ind graph, i.e., 
\begin{equation}
\label{eq:d_vivj}
d_{v_m, v_j} \leq D_{v_m, v_j}.
\end{equation}
According to the definition of LR location set in Equ. (\ref{eq:Gamma}), $\forall v_j\in \mathcal{N}_{m}$
\begin{equation}
\label{eq:D_vivj}
D_{v_m, v_j} \leq \Gamma.
\end{equation}
Based on Equ. (\ref{eq:d_vivj}) and Equ. (\ref{eq:D_vivj}), we obtain that 
\begin{equation}
\label{eq:D_vivj1}
d_{v_m, v_j} \leq \Gamma, \forall v_j\in \mathcal{N}_{m}. 
\end{equation}
According to Equ. (\ref{eq:O}), we have 
\begin{equation}
\label{eq:D_vivj2}
d_{v_m, v_j} \leq r_{\mathrm{obf}}, \forall v_j \in \mathcal{O}_{m}.  
\end{equation}
According to Equ. (\ref{eq:D_vivj1}) and Equ. (\ref{eq:D_vivj2}), we have 
\begin{equation}
d_{v_m, v_j} \leq \max\{\Gamma, r_{\mathrm{obf}}\}, ~ \forall v_j \in \mathcal{N}_{m} \cup \mathcal{O}_{m}. 
\end{equation}
$d_{v_m, v_a} \leq \Gamma$ because $v_a$ is selected within the LR location set $\mathcal{N}_m$. Then, according to the triangle inequality, 
\begin{eqnarray}
% \nonumber 
d_{v_a, v_j} &\leq&  d_{v_m, v_a} + d_{v_m, v_j} \\
&\leq&  \Gamma + \max\{\Gamma, r_{\mathrm{obf}}\} \\
&=& \max\{2\Gamma, \Gamma + r_{\mathrm{obf}}\}, 
\end{eqnarray}
for each $v_j \in \mathcal{N}_{m} \cup \mathcal{O}_{m}$, 
indicating that $\mathcal{C}(v_a, \max\{2\Gamma, \Gamma+r_{\mathrm{obf}}\})$ covers both $\mathcal{N}_{m}$ and $\mathcal{O}_{m}$. 
\end{proof}

\subsection{Proof of Theorem \ref{thm:privacyguarantee}}
\begin{proof}
We prove it by considering the following three cases: 
\newline \textbf{Case 1:}  $v_k$ is within the obfuscation range of both $v_i$ and $v_j$, i.e., $v_k \in \mathcal{O}_i \cap \mathcal{O}_j$. Then, $z^{(n)}_{i,k}$ and $z^{(m)}_{j,k}$ satisfy the constraint Equ. (\ref{eq:LR-OLMGexpoconstr}): 
\begin{eqnarray}
&& z^{(n)}_{i,k} = y_k e^{-\frac{\epsilon d_{v_i, v_k}}{2}}, \\
&& z^{(m)}_{j,k} = y_k e^{-\epsilon \frac{d_{v_j, v_k}}{2}}, ~\forall v_k
\end{eqnarray}
implying that 
\begin{eqnarray}
\nonumber 
&&z^{(n)}_{i,k} - z^{(m)}_{j,k} e^{\epsilon d_{v_i, v_j}} \\ \nonumber 
&=& y_k e^{-\frac{\epsilon d_{v_i, v_k}}{2}} - y_k e^{-\frac{\epsilon d_{v_j, v_k}}{2}} e^{\epsilon d_{v_i, v_j}}\\ \nonumber 
&=& y_k e^{-\frac{\epsilon d_{v_i, v_k}}{2}} - y_k e^{-\frac{\epsilon (d_{v_j, v_k}-d_{v_i, v_j})}{2}} e^{\frac{\epsilon d_{v_i, v_j}}{2}}\\ \nonumber 
&\leq& y_k e^{-\frac{\epsilon d_{v_i, v_k}}{2}} - y_k e^{-\frac{\epsilon d_{v_i, v_k}}{2}} e^{\frac{\epsilon d_{v_i, v_j}}{2}} ~\mbox{(triangle inequality)}\\ \nonumber 
&=& y_k e^{-\frac{\epsilon d_{v_i, v_k}}{2}}(1 - e^{\frac{\epsilon d_{v_i, v_j}}{2}})\\
&\leq& 0
\end{eqnarray}
% Similarly, we can derive that $z_{j,k} - z_{i,k} e^{\epsilon d_{v_i, v_j}}\leq 0$. 

\noindent \textbf{Case 2:}  $v_k$ is outside of the obfuscation range of either $v_i$ or $v_j$. Without loss of generality, we consider the case $v_k \in \mathcal{O}_i$ and $v_k \notin \mathcal{O}_j$ (meaning $r_{\mathrm{obf}} < d_{v_j, v_k}$), indicating that $z^{(n)}_{i,k} = y_k e^{-\frac{\epsilon d_{v_i, v_k}}{2}}$ and $z^{(m)}_{j,k} = y_k e^{-\frac{\epsilon r_{\mathrm{obf}}}{2}}$. Therefore, 
\begin{eqnarray}
\nonumber
&&z^{(n)}_{i,k} - z^{(m)}_{j,k} e^{\epsilon d_{v_i, v_j}} \\ \nonumber
&=& y_k e^{-\frac{\epsilon d_{v_i, v_k}}{2}} - y_k e^{-\frac{\epsilon r_{\mathrm{obf}}}{2}} e^{\epsilon d_{v_i, v_j}} \\ \nonumber
&=& y_k e^{-\frac{\epsilon d_{v_i, v_k}}{2}} - y_k e^{-\frac{\epsilon (r_{\mathrm{obf}}-d_{v_i, v_j})}{2}} e^{\frac{\epsilon d_{v_i, v_j}}{2}}\\ \nonumber
&<& y_k e^{-\frac{\epsilon d_{v_i, v_k}}{2}} - y_k e^{-\frac{\epsilon (d_{v_j, v_k}-d_{v_i, v_j})}{2}} e^{\frac{\epsilon d_{v_i, v_j}}{2}}  ~\mbox{(since $r_{\mathrm{obf}} < d_{v_j, v_k}$)}\\ \nonumber
&\leq& y_k e^{-\frac{\epsilon d_{v_i, v_k}}{2}} - y_k e^{-\frac{\epsilon d_{v_i, v_k}}{2}} e^{\frac{\epsilon d_{v_i, v_j}}{2}} ~\mbox{(triangle inequality)} \\ \nonumber
&=& y_k e^{-\frac{\epsilon d_{v_i, v_k}}{2}}(1 - e^{\frac{\epsilon d_{v_i, v_j}}{2}})\\
&\leq& 0
\end{eqnarray}
% Similarly, we can derive that $z_{j,k} - z_{i,k} e^{\epsilon d_{v_i, v_j}}< 0$. 

\noindent \textbf{Case 3:}  $v_k$ is outside of the obfuscation range of both $v_i$ and $v_j$, i.e., $v_k \notin \mathcal{O}_i \cup \mathcal{O}_j$. In this case, $z^{(n)}_{i,k} = z^{(m)}_{j,k} = y_k e^{-\frac{\epsilon r_{\mathrm{obf}}}{2}}$, and it is trivial to prove that $z^{(n)}_{i,k}  - e^{{\epsilon d_{v_i, v_j}}}  z^{(m)}_{j,k} \leq 0$, since $e^{{\epsilon d_{v_i, v_j}}} \geq 1$. 

The proof is completed. 
\end{proof}

{\revisiondone 
\subsection{Proof of Proposition \ref{prop:privacyguarantee}}
\label{subsec:prop:privacyguarantee}
{\rd \begin{proof}
(1) First, for each pair of LR locations $v_i$ and $v_j$, their $(\epsilon, \gamma)$-Geo-Ind constraints are 
\begin{eqnarray}
\label{eq:Geo-Ind-count}
z_{i,k}  - e^{{\epsilon d_{v_i, v_j}}}  z_{j,k} \leq 0, ~ \forall v_k \in \mathcal{V}, \\
z_{j,k}  - e^{{\epsilon d_{v_i, v_j}}}  z_{i,k} \leq 0, ~ \forall v_k \in \mathcal{V}, 
\vspace{-0.00in}
\end{eqnarray}
including $2|\mathcal{V}|$ linear constraints.

For any two Users $n$ and $m$, there are totally $(|\mathcal{N}_n|+|\mathcal{N}_m|)$ LR locations, including $\left(\begin{array}{c}|\mathcal{N}_n|+|\mathcal{N}_m| \\ 2 \end{array}\right)$ location pairs. Hence, the total number of Geo-Ind constraints for User $n$ and User $m$ is $$2|\mathcal{V}| \times \left(\begin{array}{c}|\mathcal{N}_n|+|\mathcal{N}_m| \\ 2 \end{array}\right) = (|\mathcal{N}_n|+|\mathcal{N}_m|)(|\mathcal{N}_n|+|\mathcal{N}_m|-1)|\mathcal{V}|.$$

Note that within $\mathcal{N}_n$ (or $\mathcal{N}_m$), the Geo-Ind constraints are satisfied between any peer of relevant locations due to the linear constraints Eq. (). Therefore, the total number of Geo-Ind constraints satisfied within $\mathcal{N}_n$ and $\mathcal{N}_m$ is 
\begin{eqnarray}
&& 2|\mathcal{V}| \times \left(\begin{array}{c}|\mathcal{N}_n| \\ 2 \end{array}\right) + 2|\mathcal{V}| \times \left(\begin{array}{c}|\mathcal{N}_m| \\ 2 \end{array}\right) \\
&=& (|\mathcal{N}_n|^2+|\mathcal{N}_m|^2-|\mathcal{N}_n|-|\mathcal{N}_m|)|\mathcal{V}|
\end{eqnarray}

Now, we consider the Geo-Ind constraints across $\mathcal{N}_n$ and $\mathcal{N}_m$. For each pair of locations $v_i \in \mathcal{N}_n$ and $v_j \in \mathcal{N}_m$, the set of obfuscated locations following the exponential distributions for both $v_i$ and $v_j$ is $\mathcal{A}_i \cap \mathcal{A}_j$. For any $v_k \in \mathcal{A}_i \cap \mathcal{A}_j$, the constraints in Equ. (\ref{eq:Geo-Ind-count}) are guaranteed (according to Theorem \ref{thm:privacyguarantee}), including totally $2|\mathcal{A}_i \cap \mathcal{A}_j|$ linear constraints. The total number of constraints following exponential distributions for all pairs $(v_i, v_j) \in \mathcal{N}_n \times \mathcal{N}_m$ is $\sum_{(v_i, v_j) \in \mathcal{N}_n \times \mathcal{N}_m} 2|\mathcal{A}_i \cap \mathcal{A}_j|.$

Therefore, we can conclude that for each pair User $n$ and User $m$, the Geo-Ind constraint violation ratio is upper bounded by 
\begin{equation}
\small 1- \frac{2\sum_{(v_i, v_j) \in \mathcal{N}_n \times \mathcal{N}_m} |\mathcal{A}_i \cap \mathcal{A}_j|+(|\mathcal{N}_n|^2+|\mathcal{N}_m|^2-|\mathcal{N}_n|-|\mathcal{N}_m|)|\mathcal{V}|}{(|\mathcal{N}_n|+|\mathcal{N}_m|)(|\mathcal{N}_n|+|\mathcal{N}_m|-1)|\mathcal{V}|}.
\end{equation}
The proof is completed. 
\vspace{0.03in}
\newline (2) The total number of Geo-Ind constraints for all the users is $$2|\mathcal{V}| \times\left(\begin{array}{c}\sum_{n=1}^M|\mathcal{N}_n|\\ 2 \end{array}\right) = \sum_{n=1}^M|\mathcal{N}_n|\left(\sum_{n=1}^M|\mathcal{N}_n|-1\right)|\mathcal{V}|.$$

We can obtain the total number of Geo-Ind constraints within each of $\mathcal{N}_1, ..., \mathcal{N}_M$ is 
$$2|\mathcal{V}| \times \sum_{n=1}^M\left(\begin{array}{c}|\mathcal{N}_n| \\ 2 \end{array}\right) = |\mathcal{V}| \sum_{n=1}^M\left(|\mathcal{N}_n|^2 - |\mathcal{N}_n|\right)$$
We can also obtain that the total number of obfuscated locations achieve Geo-Ind constraints across $\mathcal{N}_1, ..., \mathcal{N}_M$ (since they follow exponential distribution) is 
$$2\sum_{n=1}^M \sum_{m=n+1}^M \sum_{(v_i, v_j) \in \mathcal{N}_n \times \mathcal{N}_m} |\mathcal{A}_i \cap \mathcal{A}_j|$$

Finally, we can conclude that Geo-Ind constraint violation ratio for all the users is upper bounded by 
\begin{equation}
\small 1-\frac{2\sum_{n=1}^M \sum_{m=n+1}^M \sum_{(v_i, v_j) \in \mathcal{N}_n \times \mathcal{N}_m} |\mathcal{A}_i \cap \mathcal{A}_j|+ \sum_{n=1}^M\left(|\mathcal{N}_n|^2 - |\mathcal{N}_n|\right)|\mathcal{V}|}{\sum_{n=1}^M|\mathcal{N}_n|\left(\sum_{n=1}^M|\mathcal{N}_n|-1\right)|\mathcal{V}|}.
\end{equation}
The proof is completed. 
\end{proof}}}

\subsection{Proof of Theorem \ref{thm:upperbounds}}
\begin{proof} 
Before proving Theorem \ref{thm:upperbounds}, we first introduce the following lemma: 
\begin{lemma}
\label{lem:upperbounds}
The actual cost $c_{v_i, v_k}$ between location $v_i$ and $v_k$ is upper bounded by the estimated cost $\hat{c}_{v_i,v_k}$. The detailed proof of this lemma can be found in Section \ref{subsec:prooflemmaupperbounds}.
\end{lemma}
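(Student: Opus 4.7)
The goal is to establish the pointwise inequality $c_{v_i, v_k} \leq \hat{c}_{v_i, v_k}$, which, after substituting Equ. (\ref{eq:c}) and Equ. (\ref{eq:costestimate}) and dividing by the common factor $p_i$, reduces to
\begin{equation*}
\sum_{l=1}^{K} q_l \left| \mathrm{tc}_{v_i, v_l} - \mathrm{tc}_{v_k, v_l} \right| \;\leq\; \beta_{i,k} + d_{v_i, \hat{v}_i} + d_{v_k, \hat{v}_k}.
\end{equation*}
The plan is to prove a termwise upper bound on the summand on the left and then aggregate against the target prior $q_l$. Since $\beta_{i,k} = \sum_j q_j \delta_{\hat{v}_i, \hat{v}_k, \hat{v}_j}$ is formed using the nearest reference points in $\hat{\mathcal{V}}$, the natural move is to insert $\hat{v}_i$, $\hat{v}_k$ (and the representative $\hat{v}_j$ of each $v_l$) into the expression and invoke the triangle inequality.

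Specifically, I would first write, for each $l$,
\begin{equation*}
\left| \mathrm{tc}_{v_i, v_l} - \mathrm{tc}_{v_k, v_l} \right|
\leq \left| \mathrm{tc}_{\hat{v}_i, v_l} - \mathrm{tc}_{\hat{v}_k, v_l} \right|
+ \left| \mathrm{tc}_{v_i, v_l} - \mathrm{tc}_{\hat{v}_i, v_l} \right|
+ \left| \mathrm{tc}_{v_k, v_l} - \mathrm{tc}_{\hat{v}_k, v_l} \right|,
\end{equation*}
bound the last two deviations by $d_{v_i,\hat{v}_i}$ and $d_{v_k,\hat{v}_k}$ respectively (using the triangle inequality on $\mathrm{tc}$ together with the short-range relationship between $\mathrm{tc}$ and Haversine distance to a nearest reference point), and then collapse the first term by grouping targets according to their representatives. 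After substituting back and noting that $\sum_{l:\,\hat{v}_l = \hat{v}_j} q_l = q_j$, the first term sums to $\sum_j q_j \delta_{\hat{v}_i, \hat{v}_k, \hat{v}_j} = \beta_{i,k}$, while the other two terms are $l$-independent and aggregate to $d_{v_i,\hat{v}_i} + d_{v_k,\hat{v}_k}$ since $\sum_l q_l = 1$. Combining these yields the claimed bound.

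The main obstacle is the step bounding $|\mathrm{tc}_{v_i, v_l} - \mathrm{tc}_{\hat{v}_i, v_l}|$ by the Haversine distance $d_{v_i,\hat{v}_i}$, which requires a Lipschitz-type tie between travel cost and Haversine distance at short range. This is plausible because $\hat{v}_i$ is chosen as the nearest reference to $v_i$ (so both points lie in the same grid cell of $\hat{\mathcal{V}}$), but it is not formalized elsewhere in the paper and must be either invoked as a standing assumption on the cost reference table's spatial granularity or argued directly from the construction of $\tilde{\mathcal{G}}$. A secondary, more mechanical concern is verifying that the regrouping $\sum_{l:\,\hat{v}_l = \hat{v}_j} q_l = q_j$ is consistent with how $\beta_{i,k}$ was defined; this is a straightforward bookkeeping step once the nearest-neighbor assignment is made explicit.
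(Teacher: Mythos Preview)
Your approach is essentially the paper's: insert the reference points $\hat{v}_i,\hat{v}_k$ via the triangle inequality, then aggregate against $q$. The paper's execution differs only cosmetically---it case-splits on the sign of $d_{v_i,v_j}-d_{v_k,v_j}$ and applies the triangle inequality to each signed piece rather than using your direct absolute-value bound, and it works throughout with the distance symbol $d$ (not $\mathrm{tc}$), so the Lipschitz concern you flag is simply sidestepped; the paper likewise omits your target-regrouping step, identifying $\sum_j q_j\,|d_{\hat{v}_i,v_j}-d_{\hat{v}_k,v_j}|$ directly with $\beta_{i,k}$.
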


Let $\hat{\mathbf{Z}}_{\mathcal{N}_m} = \left\{\hat{z}^{(m)}_{i,k}\right\}_{(v_i,v_k)\in \mathcal{N}_m \times \mathcal{O}_m}$ denote the optimal solution of the CLR-Geo problem in Equ. (\ref{eq:OMGLRobj})--(\ref{eq:LR-OLMGexpoconstr}) using the estimated cost matrix $\hat{\mathbf{C}}_{\mathcal{N}_m, \mathcal{O}_m}$  ($m = 1, ..., M$). Then, for each user $m$, the minimum expected cost calculated by the CLR-Geo problem is given by 
\begin{eqnarray}
&& \mathcal{L}\left(\hat{\mathbf{Z}}_{\mathcal{N}_m}\right) \\ &=& \sum_{v_i\in \mathcal{N}_m} \sum_{v_k\in \mathcal{O}_m} \hat{c}_{v_i,v_k}\hat{z}^{(m)}_{i,k} \\
&\geq& \sum_{v_i\in \mathcal{N}_m} \sum_{v_k\in \mathcal{O}_m} c_{v_i,v_k}\hat{z}^{(m)}_{i,k}~\mbox{(Lemma \ref{lem:upperbounds})} \\
&\geq & \underbrace{\sum_{v_i\in \mathcal{N}_m} \sum_{v_k\in \mathcal{O}_m} c_{v_i,v_k}z^{(m)*}_{i,k}}_{\mbox{the minimum expected cost}}
\end{eqnarray}
where $\mathbf{Z}_{\mathcal{N}_m}^* = \left\{z^{(m)*}_{i,k}\right\}_{(v_i,v_k)\in \mathcal{N}_m \times \mathcal{O}_m}$ denote user $m$'s optimal obfuscation matrix that achieves the minimum cost. The proof is completed. 
\end{proof}

\subsection{Proof of Lemma \ref{lem:upperbounds}}
\label{subsec:prooflemmaupperbounds}
According to $c_{v_i, v_k}$'s  definition (Equ. (\ref{eq:c})),
\normalsize
\small 
\begin{eqnarray}
\nonumber c_{v_i, v_k} &=& p_i \sum_{j=1}^Q q_j \left|d_{v_i, v_j} - d_{v_k,v_j}\right| \\ \nonumber  
&=& p_i \sum_{v_j \in \mathcal{Q}'} q_j \left(d_{v_i, v_j} - d_{v_k,v_j}\right) + p_i \sum_{v_j \in \mathcal{Q}''} q_j \left(d_{v_k,v_j} - d_{v_i, v_j}\right) \\ \nonumber
&\leq& p_i \sum_{v_j\in \mathcal{Q}'} q_j \left(\underbrace{\left(d_{\hat{v}_i,v_j} + d_{v_i, \hat{v}_i}\right)}_{\geq d_{v_i, v_j}\mbox{\footnotesize (triangle inequal.)}} - \underbrace{\left(d_{\hat{v}_k,v_j} - d_{v_k,\hat{v}_k}\right)}_{\leq d_{v_j,v_k} \mbox{\footnotesize (triangle inequal.)}} \right) \\ \nonumber
&+& p_i \sum_{v_j\in \mathcal{Q}''} q_j \left(\underbrace{\left(d_{\hat{v}_k,v_j} + d_{v_k,\hat{v}_k}\right)}_{\geq d_{v_j,v_k} \mbox{\footnotesize (triangle inequal.)}} 
 - \underbrace{\left(d_{\hat{v}_i,v_j} - d_{v_i, \hat{v}_i}\right)}_{\leq d_{v_i, v_j}\mbox{\footnotesize (triangle inequal.)}}\right)
\\
\nonumber &=& p_i \sum_{j=1}^Q q_j \left|d_{\hat{v}_i,v_j} - d_{\hat{v}_k,v_j} \right| + p_i \sum_{j=1}^Q q_j \left(d_{v_i, \hat{v}_i} + d_{v_k,\hat{v}_k} \right) \\
\nonumber &=& p_i \beta_{i,k} - p_i \left(d_{v_i, \hat{v}_i} + d_{v_k,\hat{v}_k} \right) \\
&=& \hat{c}_{v_i,v_k}. 
\end{eqnarray}
\normalsize
\DEL{
\begin{eqnarray}
\nonumber c_{v_i, v_k} &=& p_i \sum_{j=1}^Q q_j \left|d_{v_i, v_j} - d_{v_k,v_j}\right| \geq p_i \sum_{j=1}^Q q_j \left(d_{v_i, v_j} - d_{v_k,v_j}\right) \\ \nonumber
&\geq& p_i \sum_{j=1}^Q q_j \left(\underbrace{\left(d_{\hat{v}_i,v_j} - d_{v_i, \hat{v}_i}\right)}_{\leq d_{v_i, v_j}\mbox{\footnotesize (triangle inequal.)}} - \underbrace{\left(d_{\hat{v}_k,v_j} + d_{v_k,\hat{v}_k}\right)}_{\geq d_{v_j,v_k} \mbox{\footnotesize (triangle inequal.)}} \right) \\
\nonumber &=& p_i \sum_{j=1}^Q q_j \left(d_{\hat{v}_i,v_j} - d_{\hat{v}_k,v_j} \right) - p_i \sum_{j=1}^Q q_j \left(d_{v_i, \hat{v}_i} + d_{v_k,\hat{v}_k} \right) \\
\nonumber &=& p_i \beta_{i,k} - p_i \left(d_{v_i, \hat{v}_i} + d_{v_k,\hat{v}_k} \right)  = \hat{c}_{v_i,v_k}. 
\end{eqnarray}}
\normalsize
\vspace{-0.00in}

\vspace{0.00in}
\subsection{Proof of Theorem \ref{thm:lowerbounds}}
\begin{proof}
Before proving Theorem \ref{thm:lowerbounds}, we first introduce the following lemma:
\begin{lemma}
\label{lem:lowerbounds}
The actual cost $c_{v_i, v_k}$ between location $v_i$ and $v_k$ is lower bounded by the estimated cost $\tilde{c}_{v_i,v_k}$. The detailed proof of this lemma can be found in Section \ref{subsec:prooflemmaoverbounds}.
\end{lemma}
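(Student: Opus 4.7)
The plan is to prove the pointwise coefficient inequality $\tilde c_{v_i,v_k}\le c_{v_i,v_k}$ asserted by Lemma \ref{lem:lowerbounds} by bounding, for each candidate target $v_j$, the distortion between the two difference-of-distance expressions $|d_{\hat v_i,v_j}-d_{\hat v_k,v_j}|$ (built from the grid representatives in $\hat{\mathcal V}$) and $|d_{v_i,v_j}-d_{v_k,v_j}|$ (built from the true locations). The strategy is parallel to the proof of Lemma \ref{lem:upperbounds}, but with every triangle inequality applied in the reverse direction so that $\beta_{i,k}$ is bounded above, not below, by a shifted version of the true cost.

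First I would unfold both sides by their definitions: $c_{v_i,v_k}=p_i\sum_j q_j|d_{v_i,v_j}-d_{v_k,v_j}|$ and $\tilde c_{v_i,v_k}=p_i\bigl(\beta_{i,k}-d_{v_i,\hat v_i}-d_{v_k,\hat v_k}\bigr)$ with $\beta_{i,k}=\sum_j q_j|d_{\hat v_i,v_j}-d_{\hat v_k,v_j}|$. Since $p_i\ge 0$, the claim reduces to the scalar inequality $\beta_{i,k}-d_{v_i,\hat v_i}-d_{v_k,\hat v_k}\le \sum_j q_j|d_{v_i,v_j}-d_{v_k,v_j}|$.

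The core step is a per-target estimate. For a fixed $v_j$, applying the triangle inequality to the triples $(v_i,\hat v_i,v_j)$ and $(v_k,\hat v_k,v_j)$ gives $|d_{\hat v_i,v_j}-d_{v_i,v_j}|\le d_{v_i,\hat v_i}$ and $|d_{\hat v_k,v_j}-d_{v_k,v_j}|\le d_{v_k,\hat v_k}$. Writing the algebraic identity
\[ d_{\hat v_i,v_j}-d_{\hat v_k,v_j}=(d_{v_i,v_j}-d_{v_k,v_j})+(d_{\hat v_i,v_j}-d_{v_i,v_j})-(d_{\hat v_k,v_j}-d_{v_k,v_j}) \]
and invoking subadditivity of $|\cdot|$ yields the per-target bound
\[ |d_{\hat v_i,v_j}-d_{\hat v_k,v_j}|\le |d_{v_i,v_j}-d_{v_k,v_j}|+d_{v_i,\hat v_i}+d_{v_k,\hat v_k}. \]

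To finish, I would multiply by $q_j$, sum over $j$, and use $\sum_j q_j=1$ to collapse the two offsets into additive constants, obtaining $\beta_{i,k}\le \sum_j q_j|d_{v_i,v_j}-d_{v_k,v_j}|+d_{v_i,\hat v_i}+d_{v_k,\hat v_k}$; rearranging and multiplying by $p_i$ delivers $\tilde c_{v_i,v_k}\le c_{v_i,v_k}$. I do not anticipate a substantive obstacle: the argument is the mirror of the upper bound proof in Lemma \ref{lem:upperbounds}. The only point of care is avoiding a clumsy case split on the signs of $d_{v_i,v_j}-d_{v_k,v_j}$ and $d_{\hat v_i,v_j}-d_{\hat v_k,v_j}$ (as the paper's upper bound proof does via $\mathcal{Q}'$, $\mathcal{Q}''$); the three-term subadditivity identity above handles both signs uniformly and is the only non-routine step.
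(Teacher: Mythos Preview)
Your argument is correct and is in fact cleaner than the paper's own proof. The paper proceeds in the opposite direction: it lower-bounds $c_{v_i,v_k}$ directly by first dropping the absolute value via $|d_{v_i,v_j}-d_{v_k,v_j}|\ge d_{v_i,v_j}-d_{v_k,v_j}$, then applying the triangle inequality to each of the two distances, arriving at $p_i\sum_j q_j\bigl(d_{\hat v_i,v_j}-d_{\hat v_k,v_j}\bigr)-p_i(d_{v_i,\hat v_i}+d_{v_k,\hat v_k})$, which it then identifies with $\tilde c_{v_i,v_k}$. That last identification is actually a gap in the printed proof, since $\beta_{i,k}=\sum_j q_j\,|d_{\hat v_i,v_j}-d_{\hat v_k,v_j}|$ carries an absolute value that was discarded along the way; unlike in the upper-bound proof, the paper does not reinstate it via a $\mathcal Q',\mathcal Q''$ case split here. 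Your route---upper-bounding $\beta_{i,k}$ termwise via the reverse triangle inequality and the three-term subadditivity identity---handles both signs simultaneously and closes exactly this gap, so what you have is not just an alternative but a repair.
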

Let $\tilde{\mathbf{Z}}_{\mathcal{N}_m} = \left\{\tilde{z}^{(m)}_{i,k}\right\}_{(v_i,v_k)\in \mathcal{N}_m \times \mathcal{O}_m}$ denote the optimal solution of the relaxed LR-Geo problem in Equ. (\ref{eq:OMGLRobjguarantee})--(\ref{eq:OMGLRconstr}) using the estimated cost matrix $\tilde{\mathbf{C}}_{\mathcal{N}_m, \mathcal{O}_m}$  ($m = 1, ..., M$). Then, for each user $m$, the minimum expected cost calculated by the relaxed LR-Geo problem is given by 
\begin{eqnarray}
&&\mathcal{L}\left(\tilde{\mathbf{Z}}_{\mathcal{N}_m}\right) \\
&=& \sum_{v_i\in \mathcal{N}_m} \sum_{v_k\in \mathcal{O}_m} \tilde{c}_{v_i,v_k}\tilde{z}^{(m)}_{i,k} \\ \nonumber 
&\leq& \sum_{v_i\in \mathcal{N}_m} \sum_{v_k\in \mathcal{O}_m} \tilde{c}_{v_i,v_k}z^{(m)*}_{i,k} ~\mbox{(as $\tilde{\mathbf{Z}}_{\mathcal{N}_m}$ is a relaxed solution of $\mathbf{Z}_{\mathcal{N}_m}^{*}$)} \\
&\leq & \underbrace{\sum_{v_i\in \mathcal{N}_m} \sum_{v_k\in \mathcal{O}_m} c_{v_i,v_k}z^{(m)*}_{i,k}}_{\mbox{the minimum expected cost}} ~\mbox{(Lemma \ref{lem:upperbounds})}
\end{eqnarray}
where $\mathbf{Z}_{\mathcal{N}_m}^* = \left\{z^{(m)*}_{i,k}\right\}_{(v_i,v_k)\in \mathcal{N}_m \times \mathcal{O}_m}$ denote user $m$'s optimal obfuscation matrix that achieves the minimum cost. The proof is completed. 

\vspace{0.00in}
\subsection{Proof of Lemma \ref{lem:lowerbounds}}
\label{subsec:prooflemmaoverbounds}
According to $c_{v_i, v_k}$'s  definition (Equ. (\ref{eq:c})),
\normalsize
\small 
\begin{eqnarray}
\nonumber c_{v_i, v_k} &=& p_i \sum_{j=1}^Q q_j \left|d_{v_i, v_j} - d_{v_k,v_j}\right| \geq p_i \sum_{j=1}^Q q_j \left(d_{v_i, v_j} - d_{v_k,v_j}\right) \\ \nonumber
&\geq& p_i \sum_{j=1}^Q q_j \left(\underbrace{\left(d_{\hat{v}_i,v_j} - d_{v_i, \hat{v}_i}\right)}_{\leq d_{v_i, v_j}\mbox{\footnotesize (triangle inequal.)}} - \underbrace{\left(d_{\hat{v}_k,v_j} + d_{v_k,\hat{v}_k}\right)}_{\geq d_{v_j,v_k} \mbox{\footnotesize (triangle inequal.)}} \right) \\
\nonumber &=& p_i \sum_{j=1}^Q q_j \left(d_{\hat{v}_i,v_j} - d_{\hat{v}_k,v_j} \right) - p_i \sum_{j=1}^Q q_j \left(d_{v_i, \hat{v}_i} + d_{v_k,\hat{v}_k} \right) \\
\nonumber &=& p_i \beta_{i,k} - p_i \left(d_{v_i, \hat{v}_i} + d_{v_k,\hat{v}_k} \right) \\
&=& \tilde{c}_{v_i,v_k}. 
\end{eqnarray}
\normalsize
\end{proof}
\DEL{
\begin{eqnarray}
\nonumber c_{v_i, v_k} &=& p_i \sum_{j=1}^Q q_j \left|d_{v_i, v_j} - d_{v_k,v_j}\right| \geq p_i \sum_{j=1}^Q q_j \left(d_{v_i, v_j} - d_{v_k,v_j}\right) \\ \nonumber
&\geq& p_i \sum_{j=1}^Q q_j \left(\underbrace{\left(d_{\hat{v}_i,v_j} - d_{v_i, \hat{v}_i}\right)}_{\leq d_{v_i, v_j}\mbox{\footnotesize (triangle inequal.)}} - \underbrace{\left(d_{\hat{v}_k,v_j} + d_{v_k,\hat{v}_k}\right)}_{\geq d_{v_j,v_k} \mbox{\footnotesize (triangle inequal.)}} \right) \\
\nonumber &=& p_i \sum_{j=1}^Q q_j \left(d_{\hat{v}_i,v_j} - d_{\hat{v}_k,v_j} \right) - p_i \sum_{j=1}^Q q_j \left(d_{v_i, \hat{v}_i} + d_{v_k,\hat{v}_k} \right) \\
\nonumber &=& p_i \beta_{i,k} - p_i \left(d_{v_i, \hat{v}_i} + d_{v_k,\hat{v}_k} \right)  = \hat{c}_{v_i,v_k}. 
\end{eqnarray}}
\normalsize
\vspace{-0.00in}

\section{Detailed Description of Benders Decomposition}
\label{sec:BDdetaileddescription}
Benders' decomposition is composed of two stages, 
\begin{itemize}
\item \textbf{Stage 1:} A \textbf{Master Program (MP)} to derive $\{y_1, ..., y_K\}$, % and hence  $\left\{\mathbf{z}^a_i|\forall v_i \in \cup_{m=1}^M \mathcal{N}_m\right\}$ that all follow the exponential mechanism; 
\item \textbf{Stage 2:} A set of \textbf{subproblems} $\mathrm{Sub}_m$ ($m = 1, ..., M$), where each $\mathrm{Sub}_m$ aims to derive $\mathbf{z}'_{\mathcal{N}_m}$. 
\end{itemize}
% \begin{equation}
% \mathcal{L}\left(\mathbf{z}^a_i\right) = \sum_{v_k \in I(v_i)} c_{v_i, v_k} z_{i,k} = \sum_{v_k \in I(v_i)} c_{v_i, v_k}y_k e^{-\frac{\epsilon d_{v_i, v_k}}{2}}
% \end{equation}
% \begin{equation}
% \sum_{m=1}^M\mathcal{L}\left(\mathbf{Z}_{\mathcal{N}_m}\right) = \sum_{k=1}^K \alpha_k y_k + \sum_{m=1}^M\sum_{v_i\in \mathcal{N}_m} \mathcal{L}\left(\mathbf{z}^b_i\right) 
% \end{equation}

\vspace{0.03in}
\noindent \textbf{Stage 1: Master program}. The MP derives $y_1, ..., y_K$ and replaces each cost $\mathbf{c}'_{\mathcal{N}_m}\mathbf{z}'_{\mathcal{N}_m}$ by a single decision variable $w_m$, i.e., $w_m = \mathbf{c}'_{\mathcal{N}_m}\mathbf{z}'_{\mathcal{N}_m}$. The MP is formulated as the following LP problem
\normalsize
\begin{eqnarray}
\label{eq:MPObj}
\min && \textstyle 
\sum_{k=1}^K \alpha_k y_k + \sum_{m=1}^M w_m\\
\mathrm{s.t.}
&& \mathcal{H}: \mbox{Cut set of $y_1$, ..., $y_K$, $w_1, ..., w_M$}\\ \label{eq:MPzy0}
&& y_k \geq 0, ~k = 1,..., K. 
\end{eqnarray}
\normalsize
where each \emph{cut} in $\mathcal{H}$ is a \emph{linear inequality} of the decision variables $y_1$, ..., $y_K$, $w_1, ..., w_M$. 
According to the central LR-Geo formulated in Equ. (\ref{eq:OMGLRobj})--(\ref{eq:LR-OLMGexpoconstr}), each $w_m$ is given by 
\begin{equation}
\label{eq:w_m} 
w_m = \min\left\{\mathcal{L}'\left(\mathbf{Z}_{\mathcal{N}_m}\right)\left|\mbox{Equ. (\ref{eq:LR-OMGconstrrewrite}) for $\mathcal{N}_m$ is satisfied} \right.\right\}. 
\end{equation}
% the master program derives the boundary records' perturbation vectors $\mathbf{z}_{\mathcal{Y}_l}$ $(l = 1, ..., M)$ by considering the constraints between $\mathbf{z}_{\mathcal{Y}_l}$ in Equ. (\ref{eq:MPzy}), without taking into account their connections with the perturbation vectors of the internal records, $\mathbf{z}_{\mathcal{X}_l}$ $(l = 1, ..., M)$. Therefore, 
Since the MP doesn't know the optimal values of $\mathbf{Z}_{\mathcal{N}_m}$, instead of using Equ. (\ref{eq:w_m}), it ``guesses'' the value of $w_m$ based the \emph{cut set} $\mathcal{H}$. In the subsequent \textbf{Stage 2}, each $\mathrm{Sub}_m$ verifies whether the ``guessed'' value of $w_m$ is feasible and achieves the minimum data cost as defined in Equ. (\ref{eq:w_m}); if not, $\mathrm{Sub}_m$ proposes the addition of a new cut to be included in $\mathcal{H}$, thereby guiding the MP to refine $w_m$ during the next iteration.

In the following, we use $\left\{\overline{y}_{1}, ..., \overline{y}_{K}, \overline{w}_1, ..., \overline{w}_M\right\}$ to represent the optimal solution of the MP. 

\vspace{0.03in}
\noindent \textbf{Stage 2: Subproblems}. After the MP derives its optimal solution $\left\{\overline{y}_{1}, ..., \overline{y}_{K}, \overline{w}_1, ..., \overline{w}_M\right\}$ in \textbf{Stage 1}, each $\mathrm{Sub}_m$ validates whether $\overline{w}_m$ has achieved the minimum data cost, %i.e., 
\begin{equation}
\label{eq:w_m1} 
w_m = \min \left\{\mathbf{c}'_{\mathcal{N}_m}\mathbf{z}'_{\mathcal{N}_m}\left|\mathbf{A}_{\mathcal{N}_m}\mathbf{z}'_{\mathcal{N}_m} \geq \mathbf{b}_{\mathcal{N}_m} - \mathbf{B}_{\mathcal{N}_m}\mathbf{z}''_{\mathcal{N}_m}\left(\overline{\mathbf{y}}\right)\right.\right\}. 
\end{equation}
of which the \emph{dual problem} can be formulated as the following LP problem: 
\vspace{-0.20in}
\normalsize
% \small 
\begin{eqnarray}
\label{eq:dualobj}
\max && \left(\mathbf{b}_{\mathcal{N}_m} - \mathbf{B}_{\mathcal{N}_m}\mathbf{z}''_{\mathcal{N}_m}\left(\overline{\mathbf{y}}\right)\right)^{\top} \mathbf{u}_{\mathcal{N}_m} \\ \label{eq:dualconstr1}
\mathrm{s.t.} && \mathbf{A}_{\mathcal{N}_m}^{\top} \mathbf{u}_{\mathcal{N}_m}\leq \mathbf{c}'_{\mathcal{N}_m},~ \mathbf{u}_{\mathcal{N}_m} \geq \mathbf{0}. 
\end{eqnarray}
\normalsize
There are three cases of the dual problem: 

\vspace{0.03in}
\noindent \textbf{Case 1}: The optimal objective value is \textbf{unbounded}: By \emph{weak duality} \cite{Linear&Nonlinear}, $\overline{\mathbf{y}}$ does not satisfy 
$\mathbf{A}_{\mathcal{N}_m}\mathbf{z}'_{\mathcal{N}_m} \geq \mathbf{b}_{\mathcal{N}_m} - \mathbf{B}_{\mathcal{N}_m}\mathbf{z}''_{\mathcal{N}_m}\left(\overline{\mathbf{y}}\right)$ 
for any $\mathbf{z}'_{\mathcal{N}_m} \geq \mathbf{0}$. 
Since the dual problem is unbounded, there exists an \emph{extreme ray} $\tilde{\mathbf{u}}_{\mathcal{N}_m}$ subject to 
$\mathbf{A}_{\mathcal{N}_m}^{\top} \tilde{\mathbf{u}}_{\mathcal{N}_m} \leq \mathbf{0}$ and  \newline $\left(\mathbf{b}_{\mathcal{N}_m} - \mathbf{B}_{\mathcal{N}_m}\mathbf{z}''_{\mathcal{N}_m}\left(\overline{\mathbf{y}}\right)\right)^{\top} \tilde{\mathbf{u}}_{\mathcal{N}_m} > 0$. 
To ensure that $\tilde{\mathbf{u}}_{\mathcal{N}_m}$ won't be an extreme ray in the next iteration, $\mathrm{Sub}_m$ suggests a \emph{new cut} $h$ (\emph{feasibility cut}) to the MP: 
\vspace{-0.05in}
$$h:~ \left(\mathbf{b}_{\mathcal{N}_m} - \mathbf{B}_{\mathcal{N}_m}\mathbf{z}''_{\mathcal{N}_m}\left(\mathbf{y}\right)\right)^{\top} \tilde{\mathbf{u}}_{\mathcal{N}_m} \leq \mathbf{0}.$$

\vspace{0.00in}
\noindent \textbf{Case 2}: The optimal objective value is \textbf{bounded} with the solution $\overline{\mathbf{u}}_{\mathcal{N}_m}$:  By \emph{weak duality}, the optimal value of the dual problem is equal to the optimal value of $w_l$ constrained on the choice of $\overline{\mathbf{y}}$. In this case, $\mathrm{Sub}_m$ checks whether 
$\overline{w}_m < \left(\mathbf{b}_{\mathcal{N}_m} - \mathbf{B}_{\mathcal{N}_m}\mathbf{z}''_{\mathcal{N}_m}\left(\overline{\mathbf{y}}\right)\right)^{\top} \mathbf{u}_{\mathcal{N}_m}$. 
If yes, then
$\overline{w}_m < \min \left\{\mathbf{c}'_{\mathcal{N}_m}\mathbf{z}'_{\mathcal{N}_m}\left|\mathbf{A}_{\mathcal{N}_m}\mathbf{z}'_{\mathcal{N}_m} \geq \mathbf{b}_{\mathcal{N}_m} - \mathbf{B}_{\mathcal{N}_m}\mathbf{z}''_{\mathcal{N}_m}\left(\overline{\mathbf{y}}\right)\right.\right\}$, 
% which violates Equ. (\ref{eq:w_l_primal}),
meaning that $\overline{w}_m$ derived by the MP is lower than the minimum cost. Therefore, $\mathrm{Sub}_m$ suggests a \emph{new cut} 
$$h:~ w_m \geq \left(\mathbf{b}_{\mathcal{N}_m} - \mathbf{B}_{\mathcal{N}_m}\mathbf{z}''_{\mathcal{N}_m}\left(\mathbf{y}\right)\right)^{\top} \overline{\mathbf{u}}_{\mathcal{N}_m}$$ 
to the MP to improve $w_m$ in the next iteration. 

\vspace{0.00in}
\noindent \textbf{Case 3}: There is \textbf{no feasible solution}: By \emph{weak duality}, the primal problem either has no feasible/unbounded solution. The algorithm terminates.

After adding the new cuts (from all the subproblems) to the cut set $\mathcal{H}$, the BD moves to the next iteration by recalculating the MP and obtaining updated $\left\{\overline{y}_{1}, ..., \overline{y}_{K}, \overline{w}_1, ..., \overline{w}_M\right\}$. As \textbf{Stage 1} and \textbf{Stage 2} are repeated over iterations, the MP collects more cuts from the subproblems, % shrinking its feasible region and 
converging the solution $\left\{\overline{y}_{1}, ..., \overline{y}_{K}, \overline{w}_1, ..., \overline{w}_M\right\}$ to the optimal.

\DEL{
\begin{figure}[t]
\centering
\begin{minipage}{0.23\textwidth}
\centering
  \subfigure[Computation time]{
\includegraphics[width=1.00\textwidth, height = 0.13\textheight]{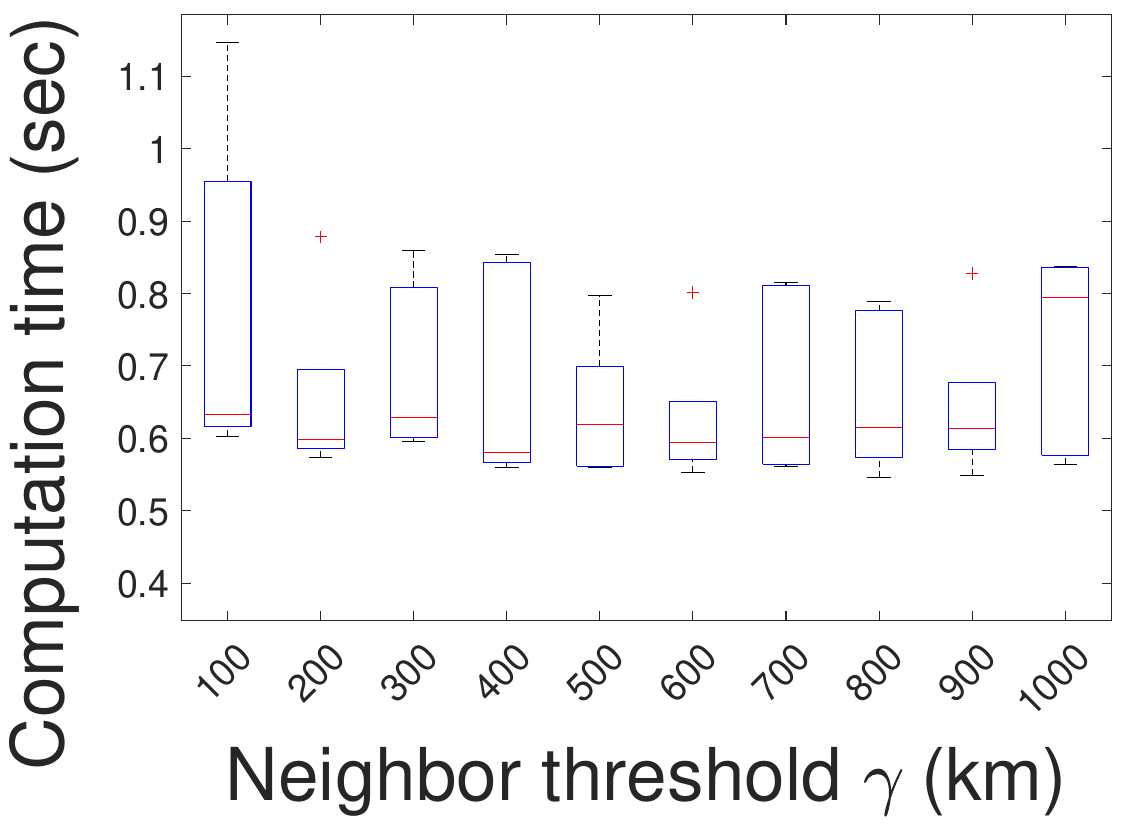}}
\label{}
\end{minipage}
\hspace{0.05in}
\begin{minipage}{0.23\textwidth}
\centering
  \subfigure[Cost]{
\includegraphics[width=1.00\textwidth, height = 0.13\textheight]{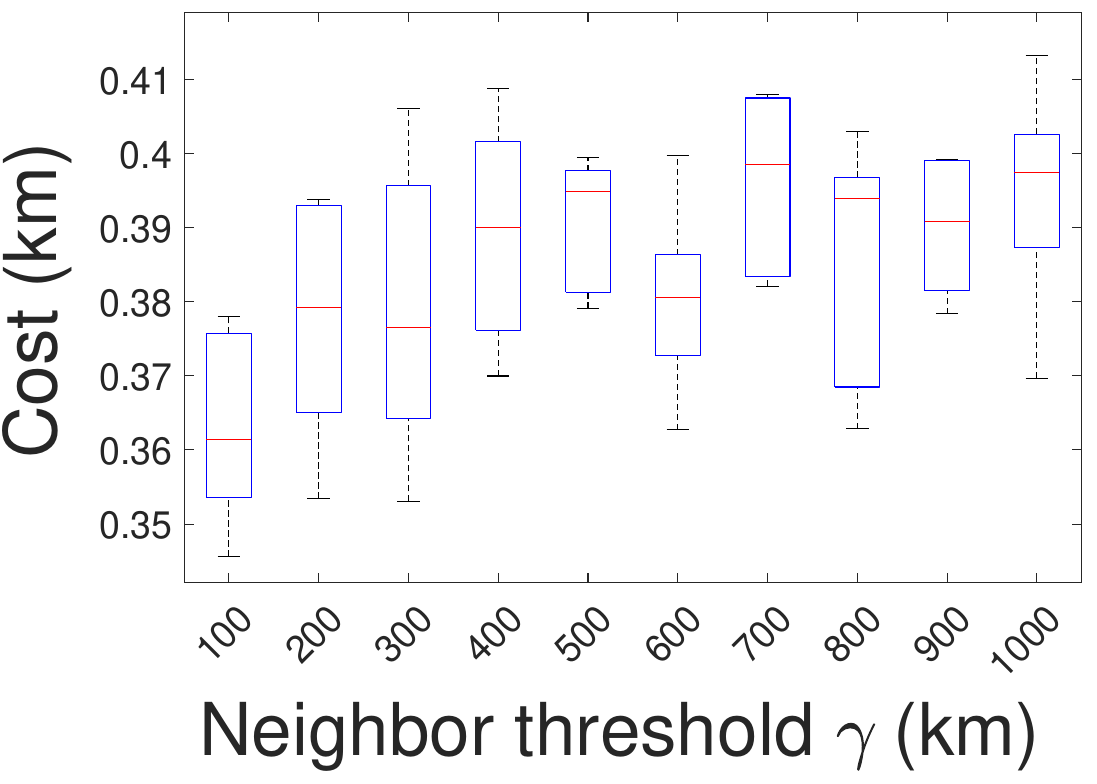}}
\label{}
\end{minipage}
\vspace{-0.10in}
\caption{Performance of LR-Geo with different neighbor threshold $\gamma$.}
\label{fig:exp:neighbor}
\vspace{-0.10in}
\end{figure}}


\begin{thebibliography}{10}

\bibitem{Yu-NDSS2017}
L.~Yu, L.~Liu, and C.~Pu.
\newblock Dynamic differential location privacy with personalized error bounds.
\newblock In {\em Proc. of IEEE NDSS}, 2017.

\bibitem{Bakken-IEEESP2004}
D.E. Bakken, R.~Rarameswaran, D.M. Blough, A.A. Franz, and T.J. Palmer.
\newblock Data obfuscation: anonymity and desensitization of usable data sets.
\newblock {\em IEEE Security \& Privacy}, 2(6):34--41, 2004.

\bibitem{Andres-CCS2013}
M.~E. Andr{\'e}s, N.~E. Bordenabe, K.~Chatzikokolakis, and C.~Palamidessi.
\newblock Geo-indistinguishability: Differential privacy for location-based systems.
\newblock In {\em Proc. of ACM CCS}, pages 901--914, 2013.

\bibitem{Mendes-PETS2020}
Ricardo Mendes, Mariana Cunha, and Joao Vilela.
\newblock Impact of frequency of location reports on the privacy level of geo-indistinguishability.
\newblock {\em PoPETS}, 2020:379--396, 04 2020.

\bibitem{Simon-EuroSP2019}
Simon Oya, Carmela Troncoso, and Fernando Pérez-González.
\newblock Rethinking location privacy for unknown mobility behaviors.
\newblock In {\em 2019 IEEE EuroS\&P}, pages 416--431, 2019.

\bibitem{Wang-WWW2017}
L.~Wang, D.~Yang, X.~Han, T.~Wang, D.~Zhang, and X.~Ma.
\newblock Location privacy-preserving task allocation for mobile crowdsensing with differential geo-obfuscation.
\newblock In {\em Proc. of ACM WWW}, pages 627--636, 2017.

\bibitem{Shokri-TOPS2017}
Reza Shokri, George Theodorakopoulos, and Carmela Troncoso.
\newblock Privacy games along location traces: A game-theoretic framework for optimizing location privacy.
\newblock {\em ACM Trans. Priv. Secur.}, 19(4), dec 2016.

\bibitem{Xiao-CCS2015}
Yonghui Xiao and Li~Xiong.
\newblock Protecting locations with differential privacy under temporal correlations.
\newblock In {\em Proc. of the 22nd {ACM} {SIGSAC} Conference on Computer and Communications Security}, oct 2015.

\bibitem{Bordenabe-CCS2014}
N.~E. Bordenabe, K.~Chatzikokolakis, and C.~Palamidessi.
\newblock Optimal geo-indistinguishable mechanisms for location privacy.
\newblock In {\em Proc. of ACM CCS}, pages 251--262, 2014.

\bibitem{Qiu-CIKM2020}
C.~Qiu, A.~C. Squicciarini, Z.~Li, C.~Pang, and L.~Yan.
\newblock Time-efficient geo-obfuscation to protect worker location privacy over road networks in spatial crowdsourcing.
\newblock In {\em Proc. of ACM CIKM}, 2020.

\bibitem{Qiu-TMC2020}
C.~{Qiu}, A.~C. {Squicciarini}, C.~{Pang}, N.~{Wang}, and B.~{Wu}.
\newblock Location privacy protection in vehicle-based spatial crowdsourcing via geo-indistinguishability.
\newblock {\em IEEE TMC}, pages 1--1, 2020.

\bibitem{Qiu-SIGSPATIAL2022}
C.~Qiu, L.~Yan, A.~Squicciarini, J.~Zhao, C.~Xu, and P.~Pappachan.
\newblock Trafficadaptor: An adaptive obfuscation strategy for vehicle location privacy against vehicle traffic flow aware attacks.
\newblock In {\em Proc. of ACM SIGSPATIAL}, 2022.

\bibitem{Al-Dhubhani-PETS2017}
Raed Al-Dhubhani and Jonathan~M. Cazalas.
\newblock An adaptive geo-indistinguishability mechanism for continuous lbs queries.
\newblock {\em Wirel. Netw.}, 24(8):3221–3239, nov 2018.

\bibitem{Wang-CIDM2016}
Leye Wang, Daqing Zhang, Dingqi Yang, Brian~Y. Lim, and Xiaojuan Ma.
\newblock Differential location privacy for sparse mobile crowdsensing.
\newblock In {\em 2016 IEEE ICDM}, pages 1257--1262, 2016.

\bibitem{Linear&Nonlinear}
Frederick~S. Hillier.
\newblock {\em Linear and Nonlinear Programming}.
\newblock Stanford University, 2008.

\bibitem{roma-taxi-20140717}
Lorenzo Bracciale, Marco Bonola, Pierpaolo Loreti, Giuseppe Bianchi, Raul Amici, and Antonello Rabuffi.
\newblock {CRAWDAD} dataset roma/taxi (v. 2014-07-17).
\newblock Downloaded from \url{https://crawdad.org/roma/taxi/20140717}, July 2014.

\bibitem{ImolaUAI2022}
Jacob Imola, Shiva Kasiviswanathan, Stephen White, Abhinav Aggarwal, and Nathanael Teissier.
\newblock Balancing utility and scalability in metric differential privacy.
\newblock In {\em Proc. of UAI 2022}, 2022.

\bibitem{Shokri-CCS2012}
R.~Shokri, G.~Theodorakopoulos, C.~Troncoso, J.~Hubaux, and J.~L. Boudec.
\newblock Protecting location privacy: Optimal strategy against localization attacks.
\newblock In {\em Proc. of ACM CCS}, pages 617--627, 2012.

\bibitem{Pappachan-EDBT2023}
P.~Pappachan, C.~Qiu, A.~Squicciarini, and V.~Manjunath.
\newblock User customizable and robust geo-indistinguishability for location privacy.
\newblock In {\em Proc. of International Conference on Extending Database Technology (EDBT)}, 2023.

\bibitem{Qiu-EDBT2024}
Chenxi Qiu, Sourabh Yadav, Yuede Ji, Anna Squicciarini, Ramanamurthy Dantu, Juanjuan Zhao, and Chengzhong Xu.
\newblock Fine-grained geo-obfuscation to protect workers' location privacy in time-sensitive spatial crowdsourcing.
\newblock In {\em Proceedings of 27th International Conference on Extending Database Technology (EDBT)}, 2024.

\bibitem{yelp}
Yelp.
\newblock \url{https://www.yelp.com/}, 2020.
\newblock Accessed: 2020-04-07.

\bibitem{waze}
{Waze}.
\newblock \url{https://www.waze.com/}, 2019.
\newblock Accessed: 2019-07-22.

\bibitem{openstreetmap}
openstreetmap.
\newblock \url{https://www.openstreetmap.org/}, 2020.
\newblock Accessed: 2020-04-07.

\bibitem{Algorithm}
Harsh Bhasin.
\newblock {\em Algorithms: Design and Analysis}.
\newblock Oxford Univ Press, 2015.

\bibitem{Rahmaniani-EJOR2017}
Ragheb Rahmaniani, Teodor~Gabriel Crainic, Michel Gendreau, and Walter Rei.
\newblock The benders decomposition algorithm: A literature review.
\newblock {\em European Journal of Operational Research}, 259(3):801--817, 2017.

\bibitem{Cohen-STOC2019}
Michael~B. Cohen, Yin~Tat Lee, and Zhao Song.
\newblock Solving linear programs in the current matrix multiplication time.
\newblock In {\em Proceedings of the 51st Annual ACM SIGACT Symposium on Theory of Computing}, STOC 2019, page 938–942, New York, NY, USA, 2019. Association for Computing Machinery.

\bibitem{Uber}
Uber.
\newblock \url{https://www.uber.com/}, 2022.
\newblock Accessed in October 2022.

\bibitem{matlab}
{MATLAB}.
\newblock \url{https://www.mathworks.com/products/matlab.html}, 2019.
\newblock Accessed: 2019-07-22.

\bibitem{Shokri-PoPETs2015}
Reza Shokri.
\newblock Privacy games: Optimal user-centric data obfuscation.
\newblock {\em Proc. on Privacy Enhancing Technologies}, 2015(2):299 -- 315, 2015.

\bibitem{Chatzikokolakis-PoPETs2015}
Konstantinos Chatzikokolakis, Catuscia Palamidessi, and Marco Stronati.
\newblock Constructing elastic distinguishability metrics for location privacy.
\newblock {\em PoPETs}, 2015:156{\textendash}170, 2015.

\bibitem{Oya-CCS2017}
Simon Oya, Carmela Troncoso, and Fernando P\'{e}rez-Gonz\'{a}lez.
\newblock Back to the drawing board: Revisiting the design of optimal location privacy-preserving mechanisms.
\newblock In {\em Proceedings of the 2017 ACM SIGSAC Conference on Computer and Communications Security}, CCS '17, page 1959–1972, New York, NY, USA, 2017. Association for Computing Machinery.

\bibitem{chatzikokolakis2017efficient}
Konstantinos Chatzikokolakis, Ehab Elsalamouny, and Catuscia Palamidessi.
\newblock Efficient utility improvement for location privacy.
\newblock {\em Proceedings on Privacy Enhancing Technologies}, 2017.

\bibitem{Gruteser-MobiSys2003}
M.~Gruteser and D.~Grunwald.
\newblock Anonymous usage of location-based services through spatial and temporal cloaking.
\newblock In {\em Proc. of ACM MobiSys}, 2003.

\bibitem{Qiu2024ijcai}
Chenxi Qiu.
\newblock Enhancing scalability of metric differential privacy via secret dataset partitioning and benders decomposition.
\newblock In {\em Proceedings of the Thirty-Third International Joint Conference on Artificial Intelligence, {IJCAI-24}}, pages 1944--1952, 8 2024.
\newblock Main Track.

\end{thebibliography}
\end{document}